\newtheorem{thm}{Theorem}[section]  %% chapter for book
\newtheorem{lem}[thm]{Lemma}
\newtheorem{prop}[thm]{Proposition}
\newtheorem{defn}[thm]{Definition}
\numberwithin{table}{section}
\numberwithin{figure}{section}
  \newcommand{\del}{\partial}
  \newcommand{\eps}{\varepsilon}
  \newcommand{\CC}{\mathit{CC}}
  \newcommand{\ESU}{\mathit{ESU}}
  \newcommand{\CSC}{\mathit{CSC}}
  \newcommand{\FLRW}{\mathit{FLRW}}
  \newcommand{\CS}{\mathit{CS}}
  \newcommand{\CES}{\mathit{CES}}
  \newcommand{\MMS}{\mathit{MMS}}
  \newcommand{\NKG}{\mathit{NKG}}
\title{IDEAL characterization of isometry classes of FLRW and inflationary spacetimes}
\author{%
	Giovanni Canepa,$^{1,a}$
	Claudio Dappiaggi$^{2,3,b}$ and
	Igor Khavkine$^{4,c}$ \\[1ex]
	{\small $^1$ Universit\"at Z\"urich, Winterthurerstrasse 190, 8057 Z\"urich, Switzerland } \\
	{\small $^2$ Dipartimento di Fisica, Universit\`a degli Studi di Pavia,}\\ 
	\small{Via Bassi, 6, I-27100 Pavia, Italy} \\
	{\small $^3$ Istituto Nazionale di Fisica Nucleare -- Sezione di Pavia,} \\
	{\small Via Bassi, 6, I-27100 Pavia, Italy} \\
	{\small $^4$ Universit\`a di Milano, Via Cesare Saldini, 50, I-20133
	Milano, Italy} \\[1ex]
	{\small
		$^a$ \texttt{giovanni.canepa@math.uzh.ch} , \quad
		$^b$ \texttt{claudio.dappiaggi@unipv.it} ,} \\
	{\small
		$^c$ \texttt{igor.khavkine@unimi.it}
	}
}
\date{\today}
\begin{document}
\maketitle

\begin{abstract}
In general relativity, an IDEAL (Intrinsic, Deductive, Explicit,
ALgorithmic) characterization of a reference spacetime metric $g_0$
consists of a set of tensorial equations $T[g]=0$, constructed
covariantly out of the metric $g$, its Riemann curvature and their
derivatives, that are satisfied if and only if $g$ is locally isometric
to the reference spacetime metric $g_0$. The same notion can be extended
to also include scalar or tensor fields, where the equations
$T[g,\phi]=0$ are allowed to also depend on the extra fields $\phi$. We
give the first IDEAL characterization of cosmological FLRW spacetimes,
with and without a dynamical scalar (inflaton) field. We restrict our
attention to what we call regular geometries, which uniformly satisfy
certain identities or inequalities. They roughly split into the
following natural special cases: constant curvature spacetime, Einstein
static universe, and flat or curved spatial slices. We also briefly
comment on how the solution of this problem has implications, in general
relativity and inflation theory, for the construction of local gauge
invariant observables for linear cosmological perturbations and for
stability analysis.
\end{abstract}

\section{Introduction}

In this work, we are interested in an intrinsic characterization of
homogeneous and isotropic cosmological spacetimes (also known as
Friedmann-Lema\^itre-Ro\-bert\-son-Walker or \emph{FLRW spacetimes}),
either with or without the presence of a scalar field (aka
\emph{inflationary spacetimes}). By a spacetime $(M,g)$, we 
mean a smooth manifold $M$ with a Lorentzian metric $g$. While
``intrinsic'' generally does preclude direct reference to the form of
the spacetime metric in a special coordinate system, it is a vague
enough term to have multiple interpretations. To be specific, we
refer to an \emph{IDEAL}%
	\footnote{The acronym, explained in~\cite{Ferrando2010} (footnote, p.2),
	stands for Intrinsic, Deductive, Explicit and ALgorithmic.} %
or \emph{Rainich-type} characterization that has been used, for
instance, in the works~\cite{rainich, takeno, coll-ferrando,
Ferrando1998, Ferrando2009, Ferrando2010, krongos-torre}. It consists of
a list of tensorial equations ($T_a[g] = 0$, $a=1,2,\ldots,N$),
constructed covariantly out of the metric ($g$) and its derivatives
(concomitants of the Riemann tensor) that are satisfied if and only if
the given spacetime locally belongs to the desired class, possibly
narrow enough to be the isometry class of a reference spacetime
geometry. This notion has a natural generalization ($T_a[g,\phi] = 0$)
to spacetimes equipped with scalar or tensor fields ($\phi$), with
equivalence still given by isometric diffeomorphisms that also transform
the additional scalars or tensors into each other. A nice historical
survey of this and other local characterization results can be found
in~\cite{Mars2017}.

An IDEAL characterization neither requires the existence of any extra
geometric structures, nor the translation of the metric and of the
curvature into a frame formalism. Thus, it is an alternative to the
Cartan-Karlhede characterization~\cite[Ch.9]{stephani-sols}, which is
based on Cartan's moving frame formalism. Intrinsic characterizations, of
various types, have been of long standing and independent interest in
geometry and General Relativity. But, in addition, they can be helpful
in deciding when a metric, given for instance by some complicated
coordinate formulas, corresponds to one that is already known. In this
regard, an IDEAL characterization is especially helpful if one would
like to find an algorithmic solution to this recognition problem. In
numerical relativity, the near-satisfaction of the tensor equations
$T_a[g]\approx 0$ may signal the local proximity of a numerical
spacetime to a desired reference geometry. In addition, the approach to
zero of $T_a[g] \to 0$ could be used to study either linear or nonlinear
stability of reference geometries, in an unambiguous and gauge
independent way.

The following particular application should be noted. By the
Stewart-Walker lemma~\cite[Lem.2.2]{sw-pert}, the vanishing of a tensor
concomitant $T_a[g] = 0$ for a metric $g$ implies that its linearization
$\dot{T}_a[h]$ ($T_a[g+\eps h] = T_a[g] + \eps \dot{T}_a[g] +
O(\eps^2)$) is invariant under linearized diffeomorphisms. Thus, any
quantity of the form $\dot{T}_a[h]$ defines a gauge invariant
observables in linearized gravity, when Einstein or Einstein-matter
equations are linearly perturbed about a background solution $g$. A
straight forward argument shows that an IDEAL characterization provides
a list $\dot{T}_a[h]$, $a=1,\ldots,N$, of gauge invariant observables
that is also complete (it suffices to check that $T_a[g+h]$ do not
approach zero at $O(h^2)$ or higher order). That is, the joint kernel of
$\dot{T}_a[h]=0$ locally consists only of pure gauge modes
($h=\mathcal{L}_v g$ for some vector field $v$). The use of such local
observables (given by differential operators) can be advantageous both
in theoretical and practical investigations of classical and quantum
field theoretical models because they cleanly separate the local (or
ultraviolet) and global (or infrared) aspects of the theory. This is of
particular and current relevance to some controversies in inflationary
models of early universe cosmology~\cite{urakawa-tanaka,miao-woodard}.
Despite their importance, complete lists of (linearized) local gauge
invariant observables have been explicitly produced only in very few
cases, by ad-hoc methods. For instance, in the case of Einstein
equations coupled to a single inflaton field, a complete list has been
produced only recently~\cite{frob-hack-higuchi}. On the other hand,
linearising the equations of an IDEAL characterization provides a
systematic method of construction. The results of this method can be
compared to those of~\cite{frob-hack-higuchi} and are
equivalent~\cite{FHK}. Since these two sets of results naturally appear
in rather different forms, a detailed comparison is beyond the scope of
this work and will be presented elsewhere.

A similar geometric approach to the
construction of gauge invariant linearized observables was taken
in~\cite{EllisBruni}, using what we would call a partial IDEAL
characterization of cosmological spacetimes. No proof of their
completeness was ever given. In a sense, we complete the earlier
literature in this regard.

In this work, we add the cases of FLRW and inflationary spacetimes to
the (unfortunately still small) literature concerning IDEAL characterizations of
isometry classes of individual reference geometries. Other IDEAL
characterizations for geometries of interest in General Relativity
include Schwarzschild~\cite{Ferrando1998},
Reissner-Nordstr\"om~\cite{Ferrando2002}, Kerr~\cite{Ferrando2009},
Lema\^itre-Tolman-Bondi~\cite{Ferrando2010}, Stephani
universes~\cite{Ferrando2017} (see references for complete lists and
details) and of course the classic cases of constant curvature spaces,
which are known to be fully characterized by the structure of the
Riemann tensor (by theorems of Riemann and Killing-Hopf).

The synopsis of the paper is the following: In Section \ref{sub_main_results} 
we fix our notation and we outline our main 
results on the IDEAL characterization of FLRW spacetimes
(Theorem~\ref{thm_FLRW_class}) and inflationary spacetimes
(Theorem~\ref{thm_infl_class}). Our main goal there is to discuss our findings without dwelling on the technical proofs, which are left to the next sections. Hence a reader who wishes to focus more on the physical aspects of this paper should refer mainly to this part of the paper. In addition, still in Section \ref{sub_main_results}, we provide flowcharts for classifying spacetimes into FLRW and inflationary isometry classes, visually summarizing the contents of
Theorems~\ref{thm_FLRW_class} and~\ref{thm_infl_class}. In
Section~\ref{sec_FLRW_geom} we collect relevant information on the
geometry of FLRW and inflationary spacetimes. In
Section~\ref{sec_ideal}, we distinguish the possible local isometry
classes of FLRW or inflationary geometries and prove our main theorems.

\subsection{Main Results}\label{sub_main_results}

In this subsection, our goal is to introduce our conventions and to outline our main results. Therefore we will not dwell on the mathematical proofs, but we focus on the basic technical tools, necessary to formulate and to understand the physical significance of our findings. 

In this work, a \emph{spacetime} or \emph{Lorentzian manifold} $(M,g)$ will be a
smooth finite dimensional manifold $M$ (also Hausdorff, second
countable, connected and orientable) of $\dim M = m+1 \ge 2$, with a
Lorentzian metric $g$ (with signature ${-}{+}\cdots{+}$). A
\emph{spacetime with scalar} will consist of a triple $(M,g,\phi)$,
where $(M,g)$ is a Lorentzian manifold and $\phi\colon M \to \mathbb{R}$
is a smooth scalar field. Obviously, we could always consider the
spacetime $(M,g)$ as the special spacetime with zero scalar, $(M,g,0)$. 
In addition, with inflationary spacetimes, we will be assuming that the metric and the scalar field satisfy the coupled Einstein-Klein-Gordon equations, possibly with a nonlinear potential.

\noindent These observations should be kept in mind while reading the following
\begin{defn}[locally isometric] \label{def_loc_isom}
A spacetime with scalar $(M_1,g_1,\phi_1)$ is \emph{locally isometric at
$x_1\in M_1$ to} a spacetime with scalar $(M_2,g_2,\phi_2)$ \emph{at
$x_2 \in M$} if there exist open neighbourhoods $U_1 \ni x_1$, $U_2 \ni
x_2$ and a diffeomorphism $\chi\colon U_1 \to U_2$ such that $\chi(x_1)
= x_2$, $\chi^* g_2 = g_1$ and $\chi^*\phi_2 = \phi_1$. If we can choose
$U_1 = M_1$ and $U_2 = M_2$ then they are \emph{(globally) isometric}.
If for every $x_1\in M$ there is $x_2 \in M_2$ such that
$(M_1,g_1,\phi_1)$ at $x_1$ is locally isometric to $(M_2,g_2,\phi_2)$
at $x_2$, we simply say that $(M_1,g_1,\phi_1)$ is \emph{locally
isometric to} $(M_2,g_2,\phi_2)$ (note the asymmetry in the definition).
If $(M_1,g_1,\phi_1)$ is locally isometric to $(M_2,g_2,\phi_2)$, as
well as vice versa, we say that they are \emph{locally isometric to each
other} (which constitutes an equivalence relation). All spacetimes with
scalar that are locally isometric to a reference $(M,g,\phi)$ constitute
its \emph{local isometry class}.
\end{defn}

Our main results give an IDEAL characterization of local isometry
classes of \emph{regular} FLRW and inflationary spacetimes. In the following we give their precise definition, which is motivated in more detail in Sections~\ref{sec_perfect_fluid} and~\ref{sec_reg_FLRW}. Starting from the first case:

\begin{defn}[regular FLRW spacetime] \label{def_reg_FLRW}
	Let us fix a constant $\kappa \ne 0$. Denote by the triple
	$(m,\alpha,f)$, of a dimension $m\ge 1$, a constant $\alpha \in
	\mathbb{R}$ and a smooth positive function $f\colon I\to \mathbb{R}$
	defined on an interval $I \subseteq \mathbb{R}$, the corresponding FLRW
	spacetime $(M,g) = (I\times F, -dt^2 + f^2 g^F)$
	(Definition~\ref{def_FLRW}), with $\alpha$ the sectional curvature of
	$(F,g^F)$ and $F \cong S^m$ (when $\alpha>0$) or $F \cong \mathbb{R}^m$
	(when $\alpha \le 0$).
	
	We call $(M,g)$ a \emph{regular FLRW spacetime} if it belongs to one of
	the pa\-ra\-met\-rized families identified below.
	
	\begin{enumerate}
		\item
		Constant curvature spacetime, with spacetime sectional curvature $K$:
		\begin{equation}
		\CC^m_{K} =
		\begin{cases}
		\{ (m,K,\cosh(\sqrt{K}t)) & \mid K>0, ~~ I=\mathbb{R} \} , \\
		\{ (m,0,1) & \mid K=0, ~~ I=\mathbb{R} \} , \\
		\{ (m,K,\cos(\sqrt{-K}t)) & \mid K<0, ~~ I=\mathbb{R} \} .
		\end{cases}
		\end{equation}
		
		\item
		Einstein static universe, with spatial sectional curvature $K\ne 0$:
		\begin{equation}
		\ESU^m_{K} = \{ (m,K,1) \mid m>1, ~~ I=\mathbb{R} \} .
		\end{equation}
		
		\item
		Spatially flat constant scalar curvature spacetime, with spacetime
		scalar curvature $m(m+1)K$ and such that $\frac{f'^2}{f^2}(I) = J$:
		\begin{multline}
		\CSC^{m,0}_{K,J} = \left\{ (m,0,f) \mid m>1, ~ f'\ne 0,
		\phantom{\frac{f'}{f}}\right.
		\\ \left.
		\left(\frac{f''}{f} - \frac{f'^2}{f^2}\right)
		+ \frac{(m+1)}{2} \left(\frac{f'^2}{f^2} - K\right) = 0
		\right\} .
		\end{multline}
		
		\item
		Generic constant scalar curvature spacetime, with spacetime scalar
		curvature $m(m+1)K$, normalized radiation density constant $\Omega$ and
		such that $\frac{\alpha}{f^2}(I) = J$:
		\begin{multline}
		\CSC^m_{K,\Omega,J} = \left\{ (m,\alpha,f) \mid
		m>1, ~ \alpha \ne 0, ~ f' \ne 0,
		\phantom{\frac{f'}{f}} \right.
		\\ \left.
		\frac{f'^2}{f^2} + \frac{\alpha}{f^2} =
		K + \Omega \frac{|\alpha|^{(m+1)/2}}{f^{m+1}}
		\right\} .
		\end{multline}
		
		\item
		Spatially flat FLRW spacetime with normalized pressure function $P$
		defined on an open interval $J$, with $0 < \frac{f'^2}{f^2}(I) = J$ and
		\begin{equation}
		P(u) \left[\del_u P(u) - \frac{1}{2\kappa}\right] \ne 0
		\end{equation}
		everywhere on $J$:
		\begin{equation}
		\FLRW^{m,0}_{P,J} = \left\{ (m,0,f)
		\mid
		\left(\frac{f''}{f}-\frac{f'^2}{f^2}\right) + \frac{m}{2}\frac{f'^2}{f^2}
		= -\kappa P\left((f'/f)^2\right) \right\} .
		\end{equation}
		
		\item
		Generic FLRW spacetime with normalized energy function $E$ defined on an open
		interval $J$, with $0 \not\in \frac{\alpha}{f^2}(I) = J$ and
		\begin{equation}
		\del_u \left[ u \del_u E(u) - \frac{(m+1)}{2} \right] \ne 0
		\end{equation}
		everywhere on $J$:
		\begin{equation}
		\FLRW^m_{E,J} = \left\{ (m,\alpha,f) \mid
		%\\
		m>1, ~ \alpha \ne 0, ~
		\frac{f'^2}{f^2} + \frac{\alpha}{f^2} = \kappa E(\alpha/f^2)
		\right\} .
		\end{equation}
	\end{enumerate}
\end{defn}

\noindent Next, we focus our attention to the inflationary spacetimes, following the more detailed motivation from Sections~\ref{sec_scalar} and~\ref{sec_reg_infl}:

\begin{defn}[regular inflationary spacetime] \label{def_reg_infl}
	Let us fix a constant $\kappa \ne 0$. Denote by the quadruple
	$(m,\alpha,f,\phi)$, of dimension $m>1$, constant $\alpha\in
	\mathbb{R}$, and smooth functions $f,\phi\colon I \to \mathbb{R}$
	defined on an interval $I \subseteq \mathbb{R}$, with $f$ positive, the
	corresponding inflationary spacetime $(M,g,\phi) = (I\times F, -dt^2 +
	f^2 g^F, \bar{\phi})$ (Definition~\ref{def_infl}), with $\bar{\phi}$
	being the composition of standard projection $I\times F \to I$ with
	$\phi$, with $\alpha$ the sectional curvature of $(F,g^F)$ and $F\cong
	S^m$ (when $\alpha > 0$) or $F \cong \mathbb{R}^m$ (when $\alpha \le
	0$).
	
	We call $(M,g,\bar{\phi})$ a \emph{regular inflationary spacetime} if it
	belongs to one of the parametrized families identified below.
	\begin{enumerate}
		\item
		Constant scalar, with scalar value $\Phi$, on a constant curvature
		spacetime with scalar curvature $K$:
		\begin{equation}
		\CC^m_K \CS_\Phi = \{ (m,\alpha,f,\Phi)
		\mid (m,\alpha,f) \in \CC^m_K \} .
		\end{equation}
		\item
		Constant energy scalar, with energy density $\rho > 0$ and $J =
		\phi(I)$, on an Einstein static universe with spatial sectional
		curvature $K = \frac{2}{m(m-1)} \kappa \rho$, or equivalently with
		cosmological constant $\Lambda = \frac{(m-1)}{m} \kappa \rho$:
		\begin{equation}
		\ESU^m_{K} \CES_{\rho,J}
		= \{ (m,K,1, \sqrt{2\rho/m} t) \mid I = J/\sqrt{2\rho/m} \} .
		\end{equation}
		\item
		Spatially flat massless minimally-coupled scalar spacetime, with
		cosmological constant $\Lambda$, $J = \phi(I)$ and $J' = \frac{f'}{f}(I)
		\not\ni 0$ and $\frac{2\Lambda/\kappa}{m(m-1)} < \frac{1}{\kappa}
		(J')^2$:
		\begin{multline} \textstyle
		\MMS^{m,0}_{\Lambda,J,J'} = \left\{ (m,0,f,\phi) \mid
		\phi' < 0 , ~~
		\frac{f'}{f} \ne 0 , %~~
		%\frac{1}{\kappa} \left(\frac{f'^2}{f^2} -
		%	\frac{2\Lambda}{m(m-1)}\right) > 0 ,
		\right. \\ \textstyle
		\left.
		\frac{f'^2}{f^2} = \frac{\kappa \phi'^2 + 2\Lambda}{m(m-1)} , ~~
		\left(\frac{f''}{f} - \frac{f'^2}{f^2}\right) + m\frac{f'^2}{f^2} =
		\frac{2\Lambda}{(m-1)} \right\} .
		\end{multline}
		\item
		Generic massless minimally-coupled scalar spacetime, with
		cosmological constant $\Lambda$, normalized scalar energy constant
		$\Omega > 0$, $J = \phi(I)$ and $J' = \frac{f'}{f}(I) \not\ni 0$:
		\begin{multline} \textstyle
		\MMS^m_{\Lambda,\Omega,J,J'}
		= \left\{ (m,\alpha,f,\phi) \mid \alpha \ne 0, ~~
		\frac{f'}{f} \ne 0 , %~~
		%\frac{\alpha}{f^2} < 
		%	\frac{2\Lambda+\kappa\Omega\frac{|\alpha|^m}{f^2m}}{m(m-1)} ,
		\right. \\ \textstyle
		\left.
		\phi' = -\sqrt{\Omega} \frac{|\alpha|^{\frac{m}{2}}}{f^m}, ~~
		\frac{f'^2}{f^2} + \frac{\alpha}{f^2}
		= \frac{2\Lambda + \kappa \Omega |\alpha|^m/f^{2m}}{m(m-1)}
		\right\} .
		\end{multline}
		\item
		Spatially flat nonlinear Klein-Gordon spacetime, with non-constant
		scalar self-coupling potential $V\colon J\to \mathbb{R}$, with $J =
		\phi(I)$, and expansion profile $\Xi\colon J \to \mathbb{R}$,
		satisfying $\Xi(u) \ne 0$, $\frac{1}{\kappa}\del_u \Xi(u) > 0$ and
		$\mathfrak{H}_V(\Xi) = 0$ in the notation of~\eqref{eq_xi_flat}:
		\begin{equation} \textstyle
		\NKG^{m,0}_{V,\Xi,J} = \left\{ (m,0,f,\phi) \mid
		\frac{f'}{f} = \Xi(\phi) ,
		\phi' = -\frac{(m-1)}{\kappa} \del_\phi \Xi(\phi)
		\right\} .
		\end{equation}
		\item
		Generic nonlinear Klein-Gordon spacetime, with non-constant scalar
		potential $V\colon J \to \mathbb{R}$, with $J = \phi(I)$, and
		expansion profile $(\Pi,\Xi) \colon J\to \mathbb{R}^2$, satisfying
		$\Pi < 0$, $\Xi \ne 0$, $\kappa\frac{\Pi^2+V}{m(m-1)} \ne \Xi^2$ and
		$\mathfrak{G}_V(\Pi,\Xi) = 0$ in the notation of~\eqref{eq_pi_xi}:
		\begin{multline} \textstyle
		\NKG^m_{V,\Pi,\Xi,J} = \left\{ (m,\alpha,f,\phi) \mid
		\alpha \ne 0 , ~~ \frac{f'}{f} \ne 0 , \right. \\ \textstyle
		\left.
		\phi' = \Pi(\phi) , ~~
		\frac{f'}{f} = \Xi(\phi) , ~~
		\frac{f'^2}{f^2} + \frac{\alpha}{f^2}
		= \kappa \frac{\phi'^2 + V(\phi)}{m(m-1)}
		\right\} .
		\end{multline}
	\end{enumerate}
\end{defn}

Below we directly give the list of tensor equations, covariantly constructed from the metric, the Riemann curvature, and its derivatives, that characterize
the corresponding local isometry classes. Observe that an IDEAL
characterization is not unique. Given one, many others can be produced
by covariant and invertible transformations. Our choices are based on
various conventions used in relativity and cosmology.

To be specific, our conventions for the relations between the metric
$g_{ij}$, covariant derivative $\nabla_i$, Riemann curvature, as well as
Ricci tensor and scalar are the following:
\begin{gather*}
	(\nabla_i\nabla_j - \nabla_j\nabla_i) v_k = R_{ijk}{}^l v_l , \quad
	R_{ijkh} = R_{ijk}{}^l g_{lh} , \\
	R_{ik} = R_{ikj}{}^k , \quad
	\mathcal{R} = R_{ij} g^{ij} , \quad
	\mathcal{B} = R_{ij} R_{kl} g^{ik} g^{jl} .
\end{gather*}
It is convenient to define the following product (sometimes also known
as the \emph{Kulkarni-Nomizu product}) that builds an object with the
symmetries of the Riemann tensor out of symmetric $2$-tensors $A_{ij}$,
$B_{ij}$:
\begin{equation}\label{kn-product}
	(A \odot B)_{ijkh}
	= A_{ik} B_{jh} - A_{jk} B_{ih} - A_{ih} B_{jk} + A_{jh} B_{ik} .
\end{equation}
Note that $A\odot B = B\odot A$ and $UU \odot UU = 0$, with $(UU)_{ij} =
U_i U_j$ and $U_i$ any vector field. For $\dim M = m+1>2$, our formula
for the Weyl tensor is
\begin{equation}
	W_{ijkh} = R_{ijkh} - \frac{1}{(m-1)} (g\odot R)_{ijkh}
		+ \frac{1}{2m(m-1)} \mathcal{R} (g\odot g)_{ijkh} .
% XXX: quick check
% tr(g) = m+1
% tr(A \odot B) = A tr(B) - B.A - A.B + tr(A) B
% tr(g \odot A) = (m+1) A - 2 A + tr(A) g = (m-1) A + tr(A) g
% tr(g \odot g) = (m-1) g + (m+1) g = 2m g
% tr(W) = R - 1/(m-1) [(m-1) R + tr(R) g] + 1/2/m/(m-1) tr(R) 2m g
%       = R - R + 1/(m-1) [-tr(R) g + tr(R) g]
%       = 0  OK
\end{equation}
Note that $W_{ijkh}$ vanishes precisely when $R_{ijkh} = (g\odot
A)_{ijkh}$ for some symmetric $A_{ij}$. As usual, we denote idempotent
symmetrization and anti-symmetrization by $A_{(ij)} = \frac{1}{2}
(A_{ij} + A_{ji})$, $A_{[ij]} = \frac{1}{2} (A_{ij} - A_{ji})$.

The first theorem classifies just the Lorentzian spacetime, without
reference to a scalar field.
The definitions for the various scalar and tensor fields introduced below
may seem ad-hoc, but they have straightforward geometrical meanings.
The vector field $U^i$ plays the role of a future-pointing unit timelike
vector field, orthogonal to the cosmological spatial slices. It is
defined as a normalized gradient of a curvature scalar, with the choice
of curvature scalar depending on the precise case being considered. The
tensors $\mathfrak{P}_{ij}$ and $\mathfrak{D}_{ij}$ encode in them the
shear, twist and geodesic character of $U^i$ and are non-zero when the
spacetime deviates from a generalized Robertson-Walker (GRW) spacetime
(a possibly non-homogeneous geometry undergoing cosmological expansion
or contraction). The expansion $\xi$ of the vector field $U^i$ also
plays the role of the Hubble rate, while $\mathbf{\eta}$ that of the
Hubble acceleration. The tensor $\mathfrak{E}_{ijkh}$ measures the
deviation from the spatial slices from homogeneity and isotropy, while
the scalar $\zeta$, together with $\mathfrak{E}_{ijkh}$, measures the
deviation of spatial slices from flatness.
\begin{thm} \label{thm_FLRW_class}
Consider a Lorentzian manifold $(M,g)$ of $\dim M = m+1 \ge 2$,
$\kappa\ne 0$ a fixed constant. With $U$ a unit timelike vector field,
consider the following notations, which are defined when possible:
\begin{equation}
\begin{gathered}
	\xi := \frac{\nabla^i U_i}{m} , \quad
	\bm{\eta} := U^i \nabla_i \xi , \quad % = \xi'
	\zeta := \frac{\mathcal{R} - 2m\bm{\eta}
		- \frac{1}{2}m(m+1) \xi^2}{m(m-1)} , \\
	\mathfrak{P}_{ij} := U_{[i} \nabla_{j]} \xi , \quad
	\mathfrak{D}_{ij} := \nabla_i U_j
		- \frac{\nabla^k U_k}{m} (g_{ij} + U_i U_j) , \\
%	\mathfrak{Z}_{ijkh} := R_{ijkh} + \bm{\eta} (g\odot UU)_{ijkh}
%		- \frac{\xi^2}{2} (g\odot g)_{ijkh} , \\
	\mathfrak{Z}_{ijkh} := R_{ijkh}
	- \left( g \odot \left[ \frac{\xi^2}{2} g
		- \bm{\eta} UU \right] \right)_{ijkh} , \\
%	\mathfrak{C}_{ijkh} := R_{ijkh} + (\bm{\eta} - \zeta) (g\odot UU)_{ijkh}
%		- \frac{(\xi^2+\zeta)}{2} (g\odot g)_{ijkh} , \\
	\mathfrak{C}_{ijkh} := R_{ijkh}
	- \left( g \odot \left[ \frac{(\xi^2+\zeta)}{2} g
		- (\bm{\eta}-\zeta) UU \right] \right)_{ijkh} , \\
	U_\mathcal{R}
		:= \frac{-\nabla \mathcal{R}}{\sqrt{-(\nabla\mathcal{R})^2}} , \quad
	U_\mathcal{B}
		:= \frac{-\nabla \mathcal{B}}{\sqrt{-(\nabla\mathcal{B})^2}} .
\end{gathered}
\end{equation}

Given $x\in M$, Table~\ref{tab_FLRW_class} gives the list of
inequalities and equations (right column, written using the above
notation, with a specific choice of $U$) that are satisfied on a
neighborhood of $x$ if and only if the Lorentzian manifold belongs to
the corresponding local isometry class at $x$ (left column) of a regular
FLRW spacetime (Definition~\ref{def_reg_FLRW}). Each local isometry
class belongs to a family, parametrized by real constants, intervals or
functions (middle column). By continuity, each inequality need only be
checked at $x$.

\end{thm}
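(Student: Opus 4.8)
The plan is to establish the ``if and only if'' of Theorem~\ref{thm_FLRW_class} row by row in Table~\ref{tab_FLRW_class}, leaning on the decomposition of the Riemann tensor on a generalized Robertson--Walker (GRW) warped product $-dt^2 + f(t)^2 g^F$ recalled in Section~\ref{sec_FLRW_geom}, on the warped-product characterization of such spacetimes in terms of a geodesic, twist-free, shear-free unit timelike field with spatially constant expansion, and on the classification of constant-curvature Riemannian manifolds (theorems of Riemann and Killing--Hopf). For the \emph{necessity} direction I would start from a reference regular FLRW spacetime $(M,g) = (I\times F,-dt^2+f^2 g^F)$ belonging to one of the families of Definition~\ref{def_reg_FLRW}, compute the curvature scalars $\mathcal{R}$ and $\mathcal{B}$ as explicit functions of $t$, and check that the regularity inequalities attached to that row force the gradient of the scalar used to define $U$ in that row ($\mathcal{R}$ for most rows, $\mathcal{B}$ where $\mathcal{R}$ degenerates, none in the constant-curvature row) to be nonvanishing and timelike, so that $U_\mathcal{R}$ (resp.\ $U_\mathcal{B}$) is well defined and equals $\partial_t$. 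With $U=\partial_t$ the warped-product formulas give $\xi=f'/f$, $\mathfrak{D}_{ij}=0$, $\mathfrak{P}_{ij}=0$ identically, $\zeta$ proportional to $\alpha/f^2$, and $\mathfrak{Z}_{ijkh}=0$ (flat rows) or $\mathfrak{C}_{ijkh}=0$ (curved rows) becomes exactly the statement that $(F,g^F)$ has constant curvature; the remaining scalar equation of the row is then the defining ODE of the family rewritten in terms of $\xi,\bm{\eta},\zeta,\mathcal{R},\mathcal{B}$. Verifying each row this way is finite bookkeeping.

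For the \emph{sufficiency} direction, suppose the inequalities and equations of a given row hold on a neighbourhood $U\ni x$. The constant-curvature row follows at once from Riemann and Killing--Hopf. In every other row, the stated inequality guarantees that the normalized gradient of $\mathcal{R}$ or $\mathcal{B}$ is a smooth unit timelike vector field $U$ near $x$; then $\mathfrak{D}_{ij}=0$ together with $\mathfrak{P}_{ij}=0$ says that $U$ is geodesic, twist-free, shear-free and has spatially constant expansion $\xi$, so by the warped-product characterization of Section~\ref{sec_FLRW_geom} the geometry is locally isometric to a GRW spacetime $-dt^2+f(t)^2 g^F$ with $U=\partial_t$ and $f'/f = \xi$. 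Next, $\mathfrak{Z}_{ijkh}=0$ (flat rows), or $\mathfrak{C}_{ijkh}=0$ combined with the defining formula for $\zeta$ (curved rows), forces the Riemann tensor of $(F,g^F)$ to be that of a constant-curvature metric, with curvature $\alpha$ read off from $\zeta f^2$; Killing--Hopf pins down $F\cong S^m$ or $F\cong\mathbb{R}^m$ according to the sign of $\alpha$. Finally, the remaining scalar equation of the row, pulled back to the warped product, is precisely the ODE cutting out the matching family in Definition~\ref{def_reg_FLRW}, and its parametrizing data (the constants $K,\Omega$, the intervals $J$, the functions $P,E$) are recovered from the values and ranges of $\xi,\zeta,\mathcal{R},\mathcal{B}$; uniqueness for that ODE, modulo the freedom to translate $t$, identifies $f$ with the family representative. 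This gives local isometry at $x$, and the disjointness of the rows (built into the inequalities and into Definition~\ref{def_reg_FLRW}) shows no spacetime matches two rows.

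The hard part will be the interface between the tensorial and the ODE descriptions: for each of the six non-trivial rows I must check that the designated curvature scalar has nonvanishing timelike gradient \emph{exactly} on the regular locus, that defining $U$ through that gradient is consistent with $U=\partial_t$, and that after using $\mathfrak{D}=\mathfrak{P}=0$ and $\mathfrak{Z}=0$ or $\mathfrak{C}=0$ the one residual tensor equation really does collapse to the scalar ODE defining the family, with the correct matching of $(m,\alpha,f)$ and of the extra parameters. The borderline cases — constant curvature and the Einstein static universe, where $\nabla\mathcal{R}$ or $\nabla\mathcal{B}$ vanishes and a different selector for $U$ or a curvature-only argument is needed — and the verification that the regularity inequalities of Definition~\ref{def_reg_FLRW} are stringent enough to keep the families pairwise disjoint, will require attention, but no ideas beyond those above.
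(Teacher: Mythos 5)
Your architecture for rows (a), (c) and (e) — constant curvature by the classical theorem, Sánchez-type shear-free/twist-free conditions $\mathfrak{P}_{ij}=\mathfrak{D}_{ij}=0$ to get GRW form, $\mathfrak{Z}_{ijkh}=0$ for flat slices, then ODE uniqueness to pin down the family and its parameters — is essentially the paper's proof (Propositions~\ref{thm_Sanchez}, \ref{prop_const_curv}, Lemmas~\ref{lem_isom_if}, \ref{lem_reg_FLRW}). But there is a genuine gap in how you treat the generic spatially curved rows (d) and (f). The table does \emph{not} list $\mathfrak{P}_{ij}=0$, $\mathfrak{D}_{ij}=0$ there; it lists the single equation $\nabla_i U_j - \frac{\nabla_i\zeta}{2\zeta}U_j - \xi g_{ij}=0$, and the extra content of that equation is precisely what your plan lacks: via Lemma~\ref{lem_concircular} it makes $\zeta^{-1/2}U$ a concircular field (Chen's characterization, Proposition~\ref{thm_GRW_concircular}) \emph{and} forces $U_{[i}\nabla_{j]}\zeta=0$, i.e.\ spatial constancy of $\zeta$. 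Your step ``$\mathfrak{C}_{ijkh}=0$ combined with the defining formula for $\zeta$ forces constant-curvature slices'' does not supply this: in GRW form $\mathfrak{C}_{ijkh}=0$ only says $R^F_{ijkh}$ is \emph{pointwise} proportional to $(g^F\odot g^F)_{ijkh}$, which for $m=2$ (allowed in these rows, $m>1$) is an identity for every Riemannian slice, and even for $m\ge 3$ you need Schur's lemma plus an argument that the proportionality factor is the same across slices before you can read off a single constant $\alpha$ and invoke Killing--Hopf. Without the spatial-constancy input encoded in the concircular equation (Proposition~\ref{prop_const_curv} explicitly requires both $\mathfrak{C}_{ijkh}=0$ and $U_{[i}\nabla_{j]}\zeta=0$), the sufficiency direction for rows (d) and (f) does not go through as you describe.

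A second, smaller gap is row (b). You acknowledge that a different selector for $U$ is needed but claim no new ideas are required; in fact the Einstein static universe row cannot be handled by normalizing the gradient of any curvature scalar (all such gradients vanish there). The paper's argument (Proposition~\ref{prop_special_FLRW}(b)) extracts $U$ as the unit vector in the kernel of the Ricci endomorphism, uses the algebraic conditions to show that kernel is one-dimensional and non-null (ruling out the null case by the Segre classification), deduces $\nabla_i U_j=0$ from $\nabla_i R_{jk}=0$, and then invokes a local holonomy/de Rham splitting to factor the metric before using $W_{ijkh}=0$ and the Ricci conditions to get constant-curvature slices. That is a genuinely different mechanism from the one your outline provides, and it needs to be spelled out rather than absorbed into ``bookkeeping''. (Minor further points: the GRW characterizations live in Section~\ref{sec_ideal}, not Section~\ref{sec_FLRW_geom}, and the residual ODE freedom is the two-parameter family of Proposition~\ref{prop_isometries} — time translation/reflection \emph{and} simultaneous rescaling of $f$ and $\alpha$ — not just translation of $t$.)
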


In addition, since 
both Theorem~\ref{thm_FLRW_class} and Table~\ref{tab_FLRW_class} are  densely packed with information, we include a graphical
flowchart summaries of the same information in
Figures~\ref{flowchart_FLRW_class}. The notation is the same as in the original theorems.

Finally, we state the theorem classifying inflationary spacetimes, those
endowed with scalar and satisfying the coupled Einstein-Klein-Gordon
equations, where the equation for the scalar $\phi$ may be nonlinear due
to a potential $V(\phi)$. The reader is referred to the paragraph
preceding Theorem~\ref{thm_FLRW_class} for an explanation of the
notation. The new scalar $\mathfrak{H}_V$ roughly corresponds to the
Hamilton-Jacobi equation of spatially flat single field inflation, while
$\mathfrak{G}_V$ is its generalization to the non-flat case. See the end
of Section~\ref{sec_reg_infl} for a more detailed motivation.
\begin{thm} \label{thm_infl_class}
Consider an inflationary spacetime $(M,g,\phi)$ of $\dim M = m+1 > 2$,
$\kappa \ne 0$ a fixed constant. With $U$ a unit timelike vector field,
recall the notation of Theorem~\ref{thm_FLRW_class}, supplemented with
\begin{gather}
	(-)' := U^i \nabla_i(-), \quad
	U_\phi := \frac{\nabla \phi}{\sqrt{-(\nabla \phi)^2}} , \\
\label{eq_xi_flat}
	\mathfrak{H}_V(\Xi) := \left(\del_u\Xi\right)^2
		- \kappa\frac{m \Xi^2}{(m-1)}
		+ \kappa^2\frac{V}{(m-1)^2} , \\
\label{eq_pi_xi}
	\mathfrak{G}_V(\Pi,\Xi) :=
	\begin{pmatrix}
		\Pi \left( \del_u \Xi + \kappa\frac{\Pi}{(m-1)}\right)
			- \left(\kappa\frac{\Pi^2 + V}{m(m-1)} - \Xi^2\right) \\
		\del_u \left(\kappa\frac{\Pi^2 + V}{m(m-1)} - \Xi^2\right)
			+ 2\frac{\Xi}{\Pi}
				\left(\kappa \frac{\Pi^2 + V}{m(m-1)} - \Xi^2\right)
	\end{pmatrix} ,
\end{gather}
where $\Xi = \Xi(u)$ and $\Pi = \Pi(u)$. Let $g$ and $\phi$ satisfy the
coupled Einstein-Klein-Gordon equations with scalar potential $V(\phi)$,
\begin{gather}
	\nabla^i \nabla_i \phi - \frac{1}{2} \del_\phi V(\phi) = 0 , \\
	R_{ij} - \frac{1}{2} \mathcal{R} g_{ij}
		= \kappa \left((\nabla_i\phi)(\nabla_j\phi)
			- \frac{1}{2} g_{ij} [(\nabla\phi)^2 + V(\phi)]\right) .
\end{gather}

Given $x\in M$, Table~\ref{tab_infl_class} gives the list of
inequalities and equations (right column) that are satisfied on some
neighborhood of $x$ if and only if the inflationary spacetime belongs to
the corresponding local isometry class at $x$ (left column) of a regular
inflationary spacetime. Each local isometry class belongs to a family,
parametrized by real constants, intervals or functions (middle column).
By continuity, each inequality needs only to be checked at $x$.
\end{thm}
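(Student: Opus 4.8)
The plan is to establish both implications in parallel with the FLRW case (Theorem~\ref{thm_FLRW_class}), now carrying the scalar field $\phi$ along and exploiting that, once the coupled Einstein--Klein--Gordon equations are imposed, the Ricci tensor is algebraically fixed, $R_{ij} = \kappa(\nabla_i\phi)(\nabla_j\phi) + \frac{\kappa}{m-1}V(\phi)\,g_{ij}$, and the scalar satisfies $\nabla^i\nabla_i\phi = \frac{1}{2}\del_\phi V(\phi)$. For the sufficiency of reference-family membership (the ``only if'' half), I would substitute the explicit warped product $(I\times F,-dt^2+f^2 g^F,\bar\phi)$ of each case of Definition~\ref{def_reg_infl} into the geometric quantities collected in Section~\ref{sec_FLRW_geom}. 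One checks that the unit vector field $U$ prescribed in Table~\ref{tab_infl_class} (the normalized gradient $U_\phi$ in the generic cases, and a purely curvature-built field such as $U_\mathcal{R}$, or none at all, in the degenerate constant-$\phi$ stratum) coincides with $\pm\del_t$, so that $\xi = f'/f$ is the Hubble rate, $\bm{\eta} = \xi'$ the Hubble acceleration and $\zeta = \alpha/f^2$ the renormalized spatial curvature. Then each line of the right column of Table~\ref{tab_infl_class} reduces to a routine identity: $\mathfrak{D}_{ij}=0$ and $\mathfrak{P}_{ij}=0$ are the shear-, twist- and geodesic-freeness of the cosmological observer, $\mathfrak{Z}_{ijkh}=0$ (resp.\ $\mathfrak{C}_{ijkh}=0$) says that the spatial slices are flat (resp.\ of constant curvature $\zeta$), the profile equations express $\phi'$ and the Hubble rate $f'/f$ as the prescribed functions of $\phi$ --- $\Pi(\phi)$ and $\Xi(\phi)$ in the nonlinear case, with explicit analogues in the massless and Einstein-static cases --- recording the Hamilton--Jacobi structure of single-field cosmology, and $\mathfrak{H}_V(\Xi)=0$, $\mathfrak{G}_V(\Pi,\Xi)=0$ are precisely the first-integral and consistency relations extracted from the time-time and trace components of Einstein--Klein--Gordon. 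The listed inequalities are the open conditions isolating each family and, by continuity, persist on a neighbourhood.

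For the converse I would reconstruct the isometry class from the validity of the table entries near $x$. The prescribed $U$ is, by hypothesis, a well-defined unit timelike field; the vanishing of $\mathfrak{D}_{ij}$ reads $\nabla_i U_j = \xi(g_{ij}+U_iU_j)$, which makes $U$ geodesic, shear-free and twist-free, and together with $\mathfrak{P}_{ij}=0$ (so that $\xi$ is constant on $U^\perp$) this is exactly the local characterization of a generalized Robertson--Walker geometry $g = -dt^2 + f(t)^2 h$ proven in Section~\ref{sec_FLRW_geom}, with $U=\del_t$ and $\xi = f'/f$. Next, $\mathfrak{Z}_{ijkh}=0$ (resp.\ $\mathfrak{C}_{ijkh}=0$) forces the warped Riemann tensor to be of Kulkarni--Nomizu form with spatial-curvature coefficient $0$ (resp.\ $\zeta$), so that $(F,h)$ has constant sectional curvature and $(M,g)$ is locally FLRW with $\alpha/f^2 = \zeta$. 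Once this metric is fed back into the Einstein--Klein--Gordon system, $\phi$ must depend only on $t$, and the remaining table equations --- $\phi'=\Pi(\phi)$, or $\phi'<0$ together with the massless constraint, and $f'/f=\Xi(\phi)$ --- express $\phi'$ and the Hubble rate as functions of $\phi$, while $\mathfrak{H}_V(\Xi)=0$ or $\mathfrak{G}_V(\Pi,\Xi)=0$ is the ordinary differential equation these expansion profiles must solve; integrating it pins down $(m,\alpha,f,\phi)$ up to the freedom --- rescaling of $g^F$, affine reparametrization of $t$, shift of $\phi$ --- that does not alter the local isometry class. Matching warped-product representatives with the same data then produces the explicit local isometry $\chi$ of Definition~\ref{def_loc_isom} to the asserted member of Definition~\ref{def_reg_infl}.

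It then remains to check that the families, cut out by their defining inequalities, are pairwise disjoint and, within the regular range, exhaustive, so that Table~\ref{tab_infl_class} indeed defines a classification; this is the bookkeeping of which of $\nabla\phi$ (constant or not), $f'$ (zero or not), $\alpha$ (zero or not) and $\del_\phi V$ (zero or not) vanish, mirroring the organization of Definition~\ref{def_reg_infl}, together with the identification of the Einstein static universe as the degenerate $f'\equiv 0$ case. The main obstacle I anticipate lies in the ordinary-differential-equation step of the converse: showing that the Hamilton--Jacobi-type functional equations $\mathfrak{H}_V=0$ and $\mathfrak{G}_V=0$ capture exactly the solutions of the reduced Einstein--Klein--Gordon system, with no spurious or missing branches, and handling the degenerate strata (constant $\phi$, Einstein static universe, vanishing $f'$) where $U_\phi$ or $U_\mathcal{R}$ is unavailable and a different curvature-built vector field, or purely algebraic curvature conditions, must take its place, all while keeping the continuity and openness arguments uniform.
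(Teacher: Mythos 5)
Your overall strategy does track the paper's: verify the table identities on the reference warped products, and for the converse extract a GRW splitting, use the ZCD/CCD tensors for the spatial geometry, and finish with ODE uniqueness. But there are two genuine gaps. First, your converse treats every row as if the GRW structure came from $\mathfrak{P}_{ij}=0$, $\mathfrak{D}_{ij}=0$ (Proposition~\ref{thm_Sanchez}); however, rows (b), (c), (d) of Table~\ref{tab_infl_class} impose instead a concircular-type equation, $\nabla_i U_j=0$ or $\nabla_i U_j - \frac{\nabla_i \phi'}{m\phi'} U_j - \xi g_{ij} = 0$, which the paper handles with Chen's characterization (Proposition~\ref{thm_GRW_concircular}) via Lemma~\ref{lem_concircular}. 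This is not cosmetic: the extra content of the concircular equation is precisely the spatial constancy of $\phi'$ (hence, through the other table equations such as $\phi' = -\sqrt{\Omega}|\zeta|^{m/2}$, of $\zeta$), and Proposition~\ref{prop_const_curv} needs $U_{[i}\nabla_{j]}\zeta = 0$ \emph{in addition} to $\mathfrak{C}_{ijkh}=0$ to conclude that the slices have constant curvature. Your assertion that $\mathfrak{C}_{ijkh}=0$ by itself "forces constant sectional curvature" fails for $m=2$ (Schur's lemma does not apply to $2$-dimensional slices), and even for $m\ge 3$ you never say where the spatial constancy of $\zeta$ comes from in each row; in row (f) it comes from $\phi'=\Pi(\phi)$, $\xi=\Xi(\phi)$ together with $U=U_\phi$, in rows (c), (d) from the concircular equation. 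Relatedly, the claim that "feeding the metric back into Einstein--Klein--Gordon" forces $\phi=\phi(t)$ is the wrong mechanism: with $U=U_\phi$ the slices are orthogonal to $\nabla\phi$, so $\phi$ is constant on them by construction, before any field equation is used.

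Second, you defer the decisive step --- that the profile equations together with $\mathfrak{H}_V(\Xi)=0$ or $\mathfrak{G}_V(\Pi,\Xi)=0$ determine $(\alpha,f,\phi)$ up to exactly the two-parameter freedom of Proposition~\ref{prop_isometries}, with no spurious branches, and that distinct families or parameters are non-isometric --- to an "anticipated obstacle". That step is the actual core of the paper's argument: the reduction of the Einstein--Klein--Gordon system to the $(\Pi,\Xi)$ variables when $\phi'\ne 0$ (Lemmas~\ref{lem_gen_NLKG} and~\ref{lem_flat_NLKG}), standard ODE uniqueness in the form of Lemma~\ref{lem_isom_if_scalar}, and the separation of families by curvature and scalar-field invariants in Lemma~\ref{lem_reg_infl} (including the sign conventions $\Pi<0$, $\frac{1}{\kappa}\del_u\Xi>0$ that break the $t\mapsto -t$ degeneracy). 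Without carrying this out the proposal does not close. Incidentally, the exhaustiveness check you propose at the end is not required: the theorem is a per-row equivalence and only regular spacetimes are covered, so no claim is made that every Einstein--Klein--Gordon spacetime falls into some row.
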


In addition, since 
both Theorem~\ref{thm_infl_class} and Table~\ref{tab_infl_class} are  densely packed with information, we include a graphical
flowchart summaries of the same information in
Figures~\ref{flowchart_infl_class}. The notation is the same as in the original theorems.

Note that when we make the choice $U = U_\phi$, it automatically follows
that $\phi' = -\sqrt{-(\nabla\phi)^2} < 0$. This convention is common in
the study of inflation, where $\phi(t)$ starts off at a high value and
then ``rolls down hill'' as $t$ increases. This is reflected in the
inequalities in Table~\ref{tab_infl_class}.

Our characterization theorems cover what we have called \emph{regular}
FLRW or inflationary spacetimes (Definitions~\ref{def_reg_FLRW}
and~\ref{def_reg_infl}), which are required to satisfy the
inequalities listed in Tables~\ref{tab_FLRW_class}
and~\ref{tab_infl_class} everywhere.

\begin{table}
\begin{center}
	\begin{tabular}{lcc}
		class & parameters/$U$ & inequalities/equations
		\\ \hline\hline%\noalign{\smallskip}
		% Constant Curvature
		\multicolumn{2}{l}{\footnotesize\emph{(a) Constant Curvature}} \\
		$\CC^m_K$ & & $R_{ijkh} - \frac{K}{2} (g\odot g)_{ijkh} = 0$
		\\ \noalign{\smallskip}\hline%\noalign{\smallskip}
		% Einstein Static Universe
		\multicolumn{2}{l}{\footnotesize\emph{(b) Einstein Static Universe}} \\
		$\ESU^m_K$ &
		\multirow{2}{*}{$\begin{gathered} m>1, \\ K\ne 0 \end{gathered}$} &
		$\exists V^i\colon \left(g_{ij} - \frac{R_{ij}}{(m-1)K}\right) V^i V^j < 0$
		\\ \noalign{\smallskip}\cline{3-3}\noalign{\smallskip}
		& & 
		$\begin{gathered}
		W_{ijkh} = 0 , \quad
		\nabla_i R_{jk} = 0 , \\
		R_i^j \left(R_j^k - (m-1) K \delta_j^k\right) = 0 , \\
		\mathcal{R} - m(m-1) K = 0
		\end{gathered}$
		\\ \noalign{\smallskip}\hline%\noalign{\smallskip}
		% Spatially Flat Constant Scalar Curvature
		\multicolumn{2}{l}{\footnotesize\emph{(c) Spatially Flat Constant Scalar Curvature}} \\
		$\CSC^{m,0}_{K,J}$ &
		\multirow{2}{*}{$\begin{gathered}
			m>1, \\
			0 < J \subset \mathbb{R} \\[1ex]
			(U = U_\mathcal{B})
			\end{gathered}$} &
		$(\nabla\mathcal{B})^2 < 0, \quad \xi^2(x) \in J$
		\\ \noalign{\smallskip}\cline{3-3}\noalign{\smallskip}
		& &
		$\begin{gathered}
		\mathfrak{P}_{ij} = 0 , \quad
		\mathfrak{D}_{ij} = 0 , \\
		\mathfrak{Z}_{ijkh} = 0 , \\ \textstyle
		\bm{\eta} + \frac{(m+1)}{2} (\xi^2 - K) = 0
		\end{gathered}$
		\\ \noalign{\smallskip}\hline%\noalign{\smallskip}
		% Generic Constant Scalar Curvature
		\multicolumn{2}{l}{\footnotesize\emph{(d) Generic Constant Scalar Curvature}} \\
		$\CSC^m_{K,\Omega,J}$ &
		\multirow{2}{*}{$\begin{gathered}
			m>1, \quad \Omega \ne 0, \\
			0 \not\in J \subset \mathbb{R} \\[1ex]
			(U = U_\mathcal{B})
			\end{gathered}$} &
		$(\nabla\mathcal{B})^2 < 0, \quad \zeta(x) \in J$
		\\ \noalign{\smallskip}\cline{3-3}\noalign{\smallskip}
		& &
		$\begin{gathered}\textstyle
		\nabla_i U_j -  \frac{\nabla_i \zeta}{2\zeta} U_j - \xi g_{ij} = 0 , \\
		\mathfrak{C}_{ijkh} = 0 , \\
		\xi^2 + \zeta - K - \Omega |\zeta|^{\frac{m+1}{2}} = 0
		\end{gathered}$
		\\ \noalign{\smallskip}\hline%\noalign{\smallskip}
		% Spatially Flat FLRW
		\multicolumn{2}{l}{\footnotesize\emph{(e) Spatially Flat FLRW}} \\
		$\FLRW^{m,0}_{P,J}$ &
		\multirow{2}{*}{$\begin{gathered}
			0 < J \subset \mathbb{R}, \\
			P\colon J\to \mathbb{R}, \\ \textstyle
			P[\del_u P - \frac{1}{2\kappa}] \ne 0 \\[1ex]
			(U = U_\mathcal{R})
			\end{gathered}$} &
		$(\nabla\mathcal{R})^2 < 0, \quad
		\bm{\eta} \ne 0, \quad \xi^2(x) \in J$
		\\ \noalign{\smallskip}\cline{3-3}\noalign{\smallskip}
		& &
		$\begin{gathered}
		\mathfrak{P}_{ij} = 0 , \quad
		\mathfrak{D}_{ij} = 0 , \\
		\mathfrak{Z}_{ijkh} = 0 , \\
		\bm{\eta} + \frac{m}{2}\xi^2 + \kappa P(\xi^2) = 0
		\end{gathered}$
		\\ \noalign{\smallskip}\hline%\noalign{\smallskip}
		% Generic FLRW
		\multicolumn{2}{l}{\footnotesize\emph{(f) Generic FLRW}} \\
		$\FLRW^m_{E,J}$ &
		\multirow{2}{*}{$\begin{gathered}
			m>1, \quad 0 \not\in J \subset \mathbb{R} , \\
			E\colon J \to \mathbb{R}, ~ \kappa E(u) > u , \\ \textstyle
			\del_u[u\del_u E - \frac{(m+1)}{2}E] \ne 0 \\[1ex]
			(U = U_\mathcal{R})
			\end{gathered}$} &
		$(\nabla\mathcal{R})^2 < 0, \quad
		\xi \ne 0, \quad \zeta(x) \in J$
		\\ \noalign{\smallskip}\cline{3-3}\noalign{\smallskip}
		& &
		$\begin{gathered}\textstyle
		\nabla_i U_j -  \frac{\nabla_i \zeta}{2\zeta} U_j - \xi g_{ij} = 0 , \\
		\mathfrak{C}_{ijkh} = 0 , \\
		\xi^2+\zeta - \kappa E(\zeta) = 0
		\end{gathered}$
		\\ \noalign{\smallskip}\hline\hline
	\end{tabular}
\end{center}
\caption{IDEAL characterization of local isometry classes of regular
	FLRW spacetimes (Theorem~\ref{thm_FLRW_class}).\label{tab_FLRW_class}}
\end{table}

\begin{table}
\begin{center}
\begin{tabular}{lcc}
	class & parameters & inequalities/equalities
	\\ \hline\hline
	% Constant scalar
	\multicolumn{3}{l}{\footnotesize\emph{(a) Constant Scalar}} \\
	$\CC^m_K \CS_\Phi$ &
		$\begin{gathered} \textstyle
			V(u) = \frac{2}{\kappa} \Lambda , \\ \textstyle
			K = \frac{2}{m(m-1)} \Lambda
		\end{gathered}$ &
		$\begin{gathered} \textstyle
			R_{ijkh} - \frac{\Lambda}{m(m-1)} (g\odot g)_{ijkh} = 0 , \\
			\phi = \Phi
		\end{gathered}$
	\\ \noalign{\smallskip}\hline
	% Constant energy scalar
	\multicolumn{3}{l}{\footnotesize\emph{(b) Constant Energy Scalar}} \\
	$\ESU^m_K \CES_{\rho,J}$ &
		\multirow{2}{*}{$\begin{gathered} \textstyle
			V(u)=\frac{2(m-1)}{m}\rho , \\ \textstyle
			\rho>0, ~ K = \frac{2\kappa}{m(m-1)} \rho \\[1ex]
			(U = U_\phi)
		\end{gathered}$} &
		$(\nabla\phi)^2 < 0, \quad
			\frac{\zeta}{\kappa} > 0, \quad
			\phi(x) \in J$
	\\ \noalign{\smallskip}\cline{3-3}\noalign{\smallskip}
	& &
		$\begin{gathered}\textstyle
			\nabla_i U_j = 0 , ~~ \mathfrak{C}_{ijkh} = 0 , \\ \textstyle
			(\nabla\phi)^2 = -\frac{2}{m} \rho , ~~
			\zeta = \frac{2\kappa\rho}{m(m-1)}
		\end{gathered}$
	\\ \noalign{\medskip}\hline
	% Spatially flat massless minimally-coupled scalar
	\multicolumn{3}{l}{\footnotesize\emph{(c) Spatially Flat Massless Minimally-coupled Scalar}} \\
	$\MMS^{m,0}_{\Lambda,J,J'}$ &
		\multirow{2}{*}{$\begin{gathered}
			0 \not\in J' , \\  \textstyle
			\frac{2\Lambda/\kappa}{m(m-1)} < \frac{1}{\kappa} (J')^2 \\ \textstyle
			V(u) = \frac{2}{\kappa} \Lambda \\[1ex]
			(U = U_\phi)
		\end{gathered}$} &
		$\begin{gathered} \textstyle
			(\nabla\phi)^2 < 0 ,
			\frac{1}{\kappa} (\xi^2-\frac{2\Lambda}{m(m-1)}) > 0 , \\ \textstyle
			\phi(x) \in J , ~~
			\xi(x) \in J'
		\end{gathered}$
	\\ \noalign{\smallskip}\cline{3-3}\noalign{\smallskip}
	& &
		$\begin{gathered}\textstyle
			\nabla_i U_j - \frac{\nabla_i \phi'}{m\phi'} U_j - \xi g_{ij} = 0 , \\
			\textstyle
			\mathfrak{Z}_{ijkh} = 0 , ~~
			\bm{\eta} + m \xi^2 = \frac{2\Lambda}{(m-1)} , \\ \textstyle
			\xi^2 = \frac{\kappa \phi'^2 + 2\Lambda}{m(m-1)}
		\end{gathered}$
	\\ \noalign{\smallskip}\hline
	% Generic massless minimally-coupled scalar
	\multicolumn{3}{l}{\footnotesize\emph{(d) Generic Massless Minimally-coupled Scalar}} \\
	$\MMS^m_{\Lambda,\Omega,J,J'}$ &
		\multirow{2}{*}{$\begin{gathered} \textstyle
			V(u) = \frac{2}{\kappa} \Lambda , \\
			0\not\in J', ~~ \Omega > 0 \\[1ex]
			(U = U_\phi)
		\end{gathered}$} &
		$\begin{gathered}
			(\nabla\phi)^2 < 0 , \\
			\phi(x) \in J, ~~
			\xi(x) \in J'
		\end{gathered}$
	\\ \noalign{\smallskip}\cline{3-3}\noalign{\smallskip}
	& &
		$\begin{gathered}\textstyle
			\nabla_i U_j - \frac{\nabla_i \phi'}{m\phi'} U_j - \xi g_{ij} = 0 , \\
			\mathfrak{C}_{ijkh} = 0 , ~~
			\phi' = -\sqrt{\Omega} |\zeta|^{\frac{m}{2}} , \\ \textstyle
			\xi^2 + \zeta = \frac{2\Lambda + \kappa \Omega |\zeta|^m}{m(m-1)}
		\end{gathered}$
	\\ \noalign{\smallskip}\hline
	% Spatially flat nonlinear Klein-Gordon
	\multicolumn{2}{l}{\footnotesize\emph{(e) Spatially Flat Nonlinear Klein-Gordon}} \\
	$\NKG^{m,0}_{V,\Xi,J}$ &
		\multirow{3}{*}{$\begin{gathered}
			V,\Xi\colon J \to \mathbb{R} , \\
			\Xi(u) \ne 0,
			\frac{1}{\kappa} \Xi'(u) > 0 , \\
			V'(u) \ne 0 ,
			\mathfrak{H}_V(\Xi) = 0 \\[1ex]
			(U = U_\phi)
		\end{gathered}$} &
		$(\nabla\phi)^2 < 0, ~~ \xi \ne 0, ~~ \frac{1}{\kappa} \bm{\eta} < 0$
	\\ \noalign{\smallskip}\cline{3-3}\noalign{\smallskip}
	& &
		$\begin{gathered}
			\mathfrak{P}_{ij} = 0 , ~~
			\mathfrak{D}_{ij} = 0 , ~~
			\mathfrak{Z}_{ijkh} = 0 , \\ \textstyle
			\phi' = -\frac{(m-1)}{\kappa} \del_\phi \Xi(\phi) , \\
			\xi = \Xi(\phi)
		\end{gathered}$
	\\ \noalign{\bigskip}\hline
	% Generic nonlinear Klein-Gordon
	\multicolumn{3}{l}{\footnotesize\emph{(f) Generic Nonlinear Klein-Gordon}} \\
	$\NKG^m_{V,\Pi,\Xi,J}$ &
		\multirow{2}{*}{$\begin{gathered}
			V,\Xi,\Pi\colon J \to \mathbb{R} , \\
			\Pi < 0 , ~~ \Xi \ne 0, \\ \textstyle
			\kappa \frac{\Pi^2+V}{m(m-1)} \ne \Xi^2 , \\
			V'(u) \ne 0 ,
			\mathfrak{G}_V(\Pi,\Xi) = 0 \\[1ex]
			(U = U_\phi)
		\end{gathered}$} &
		$\begin{gathered}
			(\nabla\phi)^2 < 0, ~~
			\xi \ne 0, \\ \textstyle
			\zeta \ne 0, ~~
			\frac{\bm{\eta} - \zeta}{\kappa} < 0
		\end{gathered}$
	\\ \noalign{\smallskip}\cline{3-3}\noalign{\smallskip}
	& &
		$\begin{gathered}
			\mathfrak{P}_{ij} = 0 , ~~
			\mathfrak{D}_{ij} = 0 , ~~
			\mathfrak{C}_{ijkh} = 0 , \\ %\textstyle
			\zeta = \kappa \frac{\phi'^2 + V(\phi)}{m(m-1)} - \xi^2 , \\
			\phi' = \Pi(\phi) , ~~
			\xi = \Xi(\phi)
		\end{gathered}$
	\\ \noalign{\smallskip}\hline\hline
\end{tabular}
\end{center}
\caption{IDEAL characterization of local isometry classes of regular
inflationary spacetimes (Theorem~\ref{thm_infl_class}).\label{tab_infl_class}}
\end{table}

\begin{figure}[h!]
	\begin{footnotesize}
		\begin{center}
			\begin{tikzpicture}[auto]
			\node[rectangle, draw,
			text centered, minimum height=24pt](init){$(M,g), \; \dim M = m+1$};
			\node[signal, signal to=east and west, draw, text centered, node distance=1.2cm, below of=init](Ricci){$(\nabla\mathcal{R})^2 < 0$};
			\node[signal, signal to=east and west, draw, text centered, node distance= 1.2cm, below of=Ricci](Ricci2){$(\nabla\mathcal{B})^2 < 0$};
			\node[signal, signal to=east and west, draw, text centered, node distance= 3.3cm,  right of=Ricci2](CCCdesitter){$R - \frac{K}{2} (g\odot g)= 0$};
			\node[node distance=1.6cm, right of=CCCdesitter](pointright){};
			\node[node distance=1.6cm, left of=CCCdesitter](pointleft){};
			\node[node distance=1.7cm, left of=Ricci2](pointleftright){};
			\node[node distance=4.9cm, left of=Ricci2](pointleftleft){};
			\node[rectangle, draw, text width=6em, text centered, minimum height=24pt, below of=pointright, node distance= 1.2cm](Desitter){ $CC_K^m$};
			\node[rectangle, draw, text width=7em, text centered, minimum height=24pt, below of=Ricci2, node distance= 1.5cm](UB){$U:= U_{\mathcal{B}}$};
			\node[rectangle, draw, text width=7em, text centered, minimum height=24pt, left of=UB, node distance= 3.5cm](UR){ $U:= U_{\mathcal{R}}$};
			\node[signal, signal to=east and west, draw, text centered, node distance=3cm, text width=13em, below of=CCCdesitter](ifESU){$W_{ijkh} = 0 $\\
				$R_i^j \left(R_{jk} - (m-1) K g_{jk}\right) = 0 $ \\
				$\mathcal{R} - m (m-1) K = 0$ \\
				$\nabla_i R_{jk} = 0$ };
			\node[signal, signal to=east and west, draw, text centered, node distance=4.3cm, below of=UB](main){$\zeta=0$};
			\node[rectangle, draw, text width=4em, text centered, minimum height=24pt, below of=main, node distance= 3cm](nothing){not FLRW};
			\node[rectangle, draw, text width=6em, text centered, minimum height=24pt, below of=pointright, node distance= 4.8cm](ESU){$ESU_K^m$};
			\node[rectangle, draw, text width=4em, text centered, minimum height=24pt, below of=pointleft, node distance= 4.8cm](nothing2){not FLRW};
			\node[signal, signal to=east and west, draw, text width=10em, text centered, node distance=6cm, below of=UR](constant){$\mathfrak{P}_{ij} = 0, \quad \mathfrak{D}_{ij} = 0$\\ $\mathfrak{Z}_{ijkh} =0$ };	
			\node[signal, signal to=east and west, draw, text width=10em, text centered, node distance=4.5cm, below of=ifESU](PD){$\nabla_i U_j - \frac{\nabla_i \zeta}{2\zeta}U_j = \xi g_{ij}$\\$\mathfrak{C}_{ijkh}=0$};
			\node[signal, signal to=east and west, draw, text width=10em, text centered, node distance=2.7cm,  below of=constant](FLRW){$\bm{\eta}+\frac{m}{2}\xi^2=-\kappa P(\xi^2)$\\ $\xi\in J$};
			\node[signal, signal to=east and west, draw, text width=11em, text centered, node distance=1.3cm,  below of=FLRW](FLRW-P){$\kappa P(u) = \frac{1}{2} [u-(m+1)K]$};
			\node[signal, signal to=east and west, draw, text width=10em, text centered, node distance=2.7cm, below of=PD](PD2){$\xi^2+ \zeta= \kappa E(\zeta)$\\ $\zeta\in J$};
			\node[signal, signal to=east and west, draw, text width=10em, text centered, node distance=1.3cm, below of=PD2](PD2-E){$\kappa E(u) = K + \Omega |u|^{\frac{m+1}{2}}$};
			\node[rectangle, draw, text width=6em, text centered, minimum height=24pt, below of=pointleftleft, node distance= 12.7cm](flatFLRW){$\FLRW^{m,0}_{P,J}$ };
			\node[rectangle, draw, text width=6em, text centered, minimum height=24pt, below of=pointleftright, node distance= 12.7cm](flatFLRW-J){$\CSC^{m,0}_{K,J}$ };
			\node[rectangle, draw, text width=6em, text centered, minimum height=24pt, below of=pointleft, node distance= 12.7cm](genFLRW){$\FLRW^m_{E,J}$ };
			\node[rectangle, draw, text width=6em, text centered, minimum height=24pt, below of=pointright, node distance= 12.7cm](genFLRW-J){$\CSC^m_{K,\Omega,J}$ };
			%%%%%% paths %%%%%%
			\path[draw, -latex'] (init)--(Ricci);
			\path[draw, -latex'] (Ricci)--node {no}(Ricci2);
			\path[draw, -latex'] (Ricci2)--node {no}(CCCdesitter);
			\path[draw, -latex'] (CCCdesitter)--node [pos=.8]{yes}(Desitter);
			\path[draw, -latex'] (CCCdesitter)--node {no}(ifESU);
			\path[draw, -latex'] (Ricci)-|node[pos=.2] {yes}(UR);
			\path[draw, -latex'] (Ricci2)--node {yes}(UB);
			\path[draw, -latex'] (UR)|-(main);
			\path[draw, -latex'] (UB)--(main);
			\path[draw, -latex'] (PD)--node {no}(nothing);
			\path[draw, -latex'] (constant)--node [swap]{no}(nothing);
			\path[draw, -latex'] (ifESU)--node[left] {no} (nothing2);
			\path[draw, -latex'] (ifESU)--node[pos=1] {yes}(ESU);
			\path[draw, -latex'] (main)--node [swap]{yes}(constant);
			\path[draw, -latex'] (main)--node {no}(PD);
			\path[draw, -latex'] (PD)--node {yes}(PD2);
			\path[draw, -latex'] (constant)--node {yes}(FLRW);
			\path[draw, -latex'] (PD2)--node [swap]{no}(nothing);
			\path[draw, -latex'] (PD2)--node {yes}(PD2-E);
			\path[draw, -latex'] (PD2-E)--node[pos=.8] {yes}(genFLRW-J);
			\path[draw, -latex'] (PD2-E)--node [left]{no}(genFLRW);
			\path[draw, -latex'] (FLRW)--node {no}(nothing);
			\path[draw, -latex'] (FLRW)--node {yes}(FLRW-P);
			\path[draw, -latex'] (FLRW-P)--node[pos=.8] {yes}(flatFLRW-J);
			\path[draw, -latex'] (FLRW-P)--node [left]{no}(flatFLRW);
			\end{tikzpicture}		
			\caption{IDEAL characterization of local isometry classes of regular FLRW spacetimes (Theorem~\ref{thm_FLRW_class}, Table~\ref{tab_FLRW_class}).\label{flowchart_FLRW_class}}  
		\end{center}   
	\end{footnotesize}   
\end{figure} 

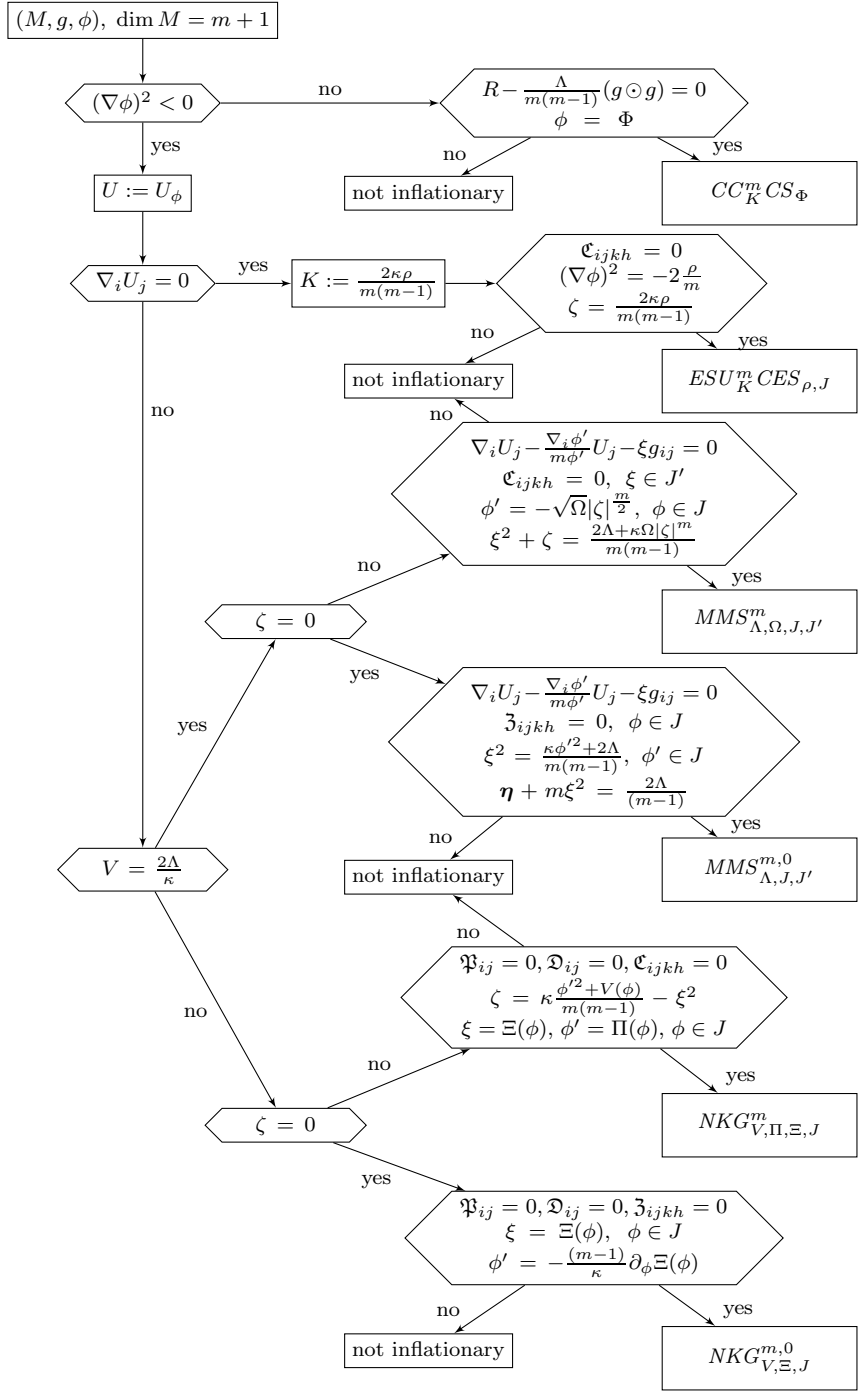
\begin{figure}[p]
	\begin{footnotesize}
		\begin{center}
			\begin{tikzpicture}[auto]
			\node[rectangle, draw, text centered, minimum height=12pt](first){$(M,g, \phi), \; \dim M = m+1$};
			\node[signal, signal to=east and west, draw, text centered, node distance=1.1cm, below of=first](nablaphi){$(\nabla \phi)^2<0$};
			\node[signal, signal to=east and west, draw, text width=10em, text centered, node distance=6.0cm, right of=nablaphi](CCy){$R - \frac{\Lambda}{m(m-1)} (g\odot g)=0$\\$\phi= \Phi$};
			\node[node distance=2.2cm, right of=CCy](point1){};
			\node[node distance=2.2cm, left of=CCy](point5){};
			\node[rectangle, draw, text width=8em, text centered, minimum height=24pt, below of=point1, node distance= 1.2cm](CS){$\CC^m_K \CS_\Phi$};
			\node[rectangle, draw,
			text centered,
			below of=point5, node distance= 1.2cm](nothing){not inflationary};
			\node[rectangle, draw, text centered, node distance=1.2cm, below of=nablaphi](Uphi){$U:= U_{\phi}$};
			\node[signal, signal to=east and west, draw, text centered, node distance=1.2cm, below of=Uphi](nablaU){$\nabla_i U_j =0$};
			\node[rectangle, draw, text centered, node distance=3.0cm, right of=nablaU](Krho){$K:= \frac{2\kappa\rho}{m(m-1)}$};
			\node[node distance=1.9cm, right of=nablaU](point4){};
			\node[signal, signal to=east and west, draw, text width=7em, text centered, node distance=3.5cm, right of=Krho](CCD){$\mathfrak{C}_{ijkh}=0$\\$(\nabla \phi)^2= -2\frac{\rho}{m}$\\$\zeta= \frac{2\kappa \rho}{m(m-1)}$};
			\node[rectangle, draw, text width=8em, text centered, minimum height=24pt, below of=point1, node distance= 3.7cm](CES){$\ESU^m_K \CES_{\rho,J}$};
			\node[rectangle, draw,
			text centered,
			below of=point5, node distance= 3.7cm](nothing2){not inflationary};
			\node[node distance=8.2cm, right of=nablaU](pointright){};
			\node[node distance=6.0cm, right of=nablaU](pointleft){};
			\node[node distance=2.4cm, below of=CCD](point3){};
			\node[node distance=3cm, right of=point3](point2){};
			\node[signal, signal to=east and west, draw, text width=5em, text centered, node distance=7.8cm, below of=nablaU](V2){$V= \frac{2\Lambda}{\kappa}$};
			\node[signal, signal to=east and west, draw, text width=5em, text centered, node distance=4.5cm,  below of=point4](zetaU1){$\zeta=0$};
			\node[signal, signal to=east and west, draw, text width=5em, text centered, node distance=11.2cm,  below of=point4](zetaU2){$\zeta=0$};
			\node[signal, signal to=east and west, draw, text width=11em, text centered, node distance=2.8cm,  below of=pointleft](gMMS){	$\nabla_i U_j - \frac{\nabla_i \phi'}{m\phi'} U_j - \xi g_{ij} = 0$ \\
				$\mathfrak{C}_{ijkh} = 0, \; {\xi \in J'}$\\
				$\phi' = -\sqrt{\Omega} |\zeta|^{\frac{m}{2}}, \; {\phi \in J}$ \\ 
				$\xi^2 + \zeta = \frac{2\Lambda + \kappa \Omega |\zeta|^m}{m(m-1)}$\\
			};
			\node[rectangle, draw, text width=8em, text centered, minimum height=24pt, below of=pointright, node distance= 4.5cm](gMMSy){$\MMS^m_{\Lambda,\Omega,J,J'}$};
			\node[signal, signal to=east and west, draw, text width=11em, text centered, node distance=6.1cm,  below of=pointleft](fMMS){$\nabla_i U_j - \frac{\nabla_i \phi'}{m\phi'} U_j - \xi g_{ij} = 0 $ \\
				$\mathfrak{Z}_{ijkh} = 0 , \; {\phi \in J}$\\
				$\xi^2 = \frac{\kappa \phi'^2 + 2\Lambda}{m(m-1)}, \; {\phi' \in J}$\\
				$\bm{\eta} + m \xi^2 = \frac{2\Lambda}{(m-1)}$
			};
			\node[rectangle, draw, text width=8em, text centered, minimum height=24pt, below of=pointright, node distance= 7.8cm](fMMSy){$\MMS^{m,0}_{\Lambda,J,J'}$};
			\node[rectangle, draw,
			text centered,
			below of=point5, node distance= 10.3cm](nothing3){not inflationary};
			\node[signal, signal to=east and west, draw, text width=12em, text centered, node distance=9.5cm,  below of=pointleft](gNKG){
				$\mathfrak{P}_{ij} = 0 , \mathfrak{D}_{ij} = 0 , \mathfrak{C}_{ijkh} = 0 $ \\
				$\zeta = \kappa \frac{\phi'^2 + V(\phi)}{m(m-1)} - \xi^2 $ \\
				${\xi = \Xi(\phi),} \; {\phi' = \Pi(\phi),} \; {\phi \in J}$
			};
			\node[rectangle, draw, text width=8em, text centered, minimum height=24pt, below of=pointright, node distance= 11.2cm](gNKGy){$\NKG^m_{V,\Pi,\Xi,J}$};
			\node[signal, signal to=east and west, draw, text width=12em, text centered, node distance=12.7cm,  below of=pointleft](fNKG){
				$\mathfrak{P}_{ij} = 0 , \mathfrak{D}_{ij} = 0 , \mathfrak{Z}_{ijkh} = 0$\\
				$\xi = \Xi(\phi), \; {\phi \in J}$	\\		
				$\phi' = -\frac{(m-1)}{\kappa} \del_\phi \Xi(\phi) $ \\
			};
			\node[rectangle, draw, text width=8em, text centered, minimum height=24pt, below of=pointright, node distance= 14.3cm](fNKGy){$\NKG^{m,0}_{V,\Xi,J}$};
			\node[rectangle, draw,
			text centered,
			below of=point5, node distance= 16.6cm](nothing4){not inflationary};
			%%%%%% paths %%%%%%
			\path[draw, -latex'] (first)--(nablaphi);
			\path[draw, -latex'] (nablaphi)--node {no}(CCy);
			\path[draw, -latex'] (CCy)--node [swap, pos=.8]{no}(nothing);
			\path[draw, -latex'] (CCy)--node [pos=1.1]{yes}(CS);
			\path[draw, -latex'] (nablaphi)--node {yes}(Uphi);
			\path[draw, -latex'] (Uphi)--node {}(nablaU);
			\path[draw, -latex'] (nablaU)--node {yes}(Krho);
			\path[draw, -latex'] (Krho)--node {}(CCD);
			\path[draw, -latex'] (CCD)--node [swap]{no}(nothing2);
			\path[draw, -latex'] (CCD)--node [pos=1.6]{yes}(CES);
			\path[draw, -latex'](nablaU) --node [pos=.2]{no}(V2);
			\path[draw, -latex'] (V2)--node [swap]{no}(zetaU2);
			\path[draw, -latex'] (V2)--node {yes}(zetaU1);
			\path[draw, -latex'] (zetaU1)--node {no}(gMMS);
			\path[draw, -latex'] (zetaU1)--node [swap]{yes}(fMMS);
			\path[draw, -latex'] (zetaU2)--node {no}(gNKG);
			\path[draw, -latex'] (zetaU2)--node [swap]{yes}(fNKG);
			\path[draw, -latex'] (gMMS)--node [pos=1.2]{yes}(gMMSy);
			\path[draw, -latex'] (fMMS)--node [pos=1.2]{yes}(fMMSy);
			\path[draw, -latex'] (gNKG)--node [pos=1]{yes}(gNKGy);
			\path[draw, -latex'] (fNKG)--node [pos=1]{yes}(fNKGy);
			\path[draw, -latex'] (gMMS)--node [pos=.8]{no}(nothing2);
			\path[draw, -latex'] (fMMS)--node [swap, pos=.8]{no}(nothing3);
			\path[draw, -latex'] (gNKG)--node {no}(nothing3);
			\path[draw, -latex'] (fNKG)--node [swap, pos=.8]{no}(nothing4);
			\end{tikzpicture}		
			\caption{IDEAL characterization of local isometry classes of regular inflationary spacetimes (Theorem~\ref{thm_infl_class}, Table~\ref{tab_infl_class}).\label{flowchart_infl_class}}	
		\end{center}
	\end{footnotesize} 
\end{figure}

\newpage

\section{Geometry of FLRW and inflationary spacetimes} \label{sec_FLRW_geom}

\begin{defn}\label{def_GRW}
Let $(F, g^F)$ be a $m$-dimensional Riemannian manifold, $m\ge 1$, $I
\subseteq \mathbb{R}$ an open interval with standard coordinate $t$ and
endowed with the usual reversed metric $-dt^2$ and $f \in
C^{\infty}(I)$, with $f>0$. A \emph{Generalized Robertson Walker (GRW)
spacetime} is a product manifold $M= I \times F$ endowed with the metric
$g$ defined as
\begin{equation} \label{GRW_metric}
	g = -\pi_I^* dt^2 + (f \circ \pi_I)^2 \pi_F^* g^F
\end{equation}
where $\pi_I$ and $\pi_F$ are respectively the projections on $I$ and
$F$. Furthermore $I$ is called the \emph{base}, $F$ the \emph{fiber} and
$f$ the \emph{warping function} (also \emph{scale factor}, in the
literature on cosmology). 
\end{defn} 
To simplify notation in the sequel, let us introduce the notation
$\tilde{T} = \pi^*_F T$ for any completely covariant tensor $T$ defined
on $F$.

The definition implies that around every point of $M = I\times F$, there
exists a coordinate system $(x^0,x^i)$ adapted to the product structure,
such that, denoting $t = x^0$,
\begin{equation}
	g_{ij} = -(dt)_i (dt)_j + f^2(t) g^F_{ij} ,
\end{equation}
where $g^F_{ij}$ depends only on the $x^i$ coordinates with $i>0$ and
$g^F_{ij} (\del_t)^i = 0$. The only obstacle to making the last
statement global on $M$ is that the $F$ factor may not admit a global
coordinate system.

\begin{defn} \label{def_FLRW}
A \emph{Friedmann-Lema\^itre-Robertson-Walker (FLRW) spacetime} is a
Lorentzian manifold $(M,g)$ that is a GRW spacetime
(Definition~\ref{def_GRW}) where the fiber $(F,g^F)$ is simply
connected, complete and has \emph{constant curvature} with sectional
curvature $\alpha$ (some constant), that is, the Riemann curvature
tensor $R^F_{ijkh}$ of $(F,g^F)$ has the form
\begin{equation} \label{const_curv}
	R^F_{ijkh} = \alpha (g^F_{ik}g^F_{hj}-g^F_{jk}g^F_{hi}) .
\end{equation}
\end{defn}
When $\dim M = 2$, only $\alpha = 0$ is possible, since any
$1$-dimensional $(F,g^F)$ is flat.

It is well known that any simply connected, complete Riemannian manifold
of constant curvature, meaning that its Riemann curvature tensor is of
the form~\eqref{const_curv}, is isometric to either a round sphere
($\alpha > 0$), flat Euclidean space ($\alpha = 0$), or a hyperbolic
space ($\alpha < 0$)~\cite[section 2.4]{Wolf2011}. If the
\emph{complete} and \emph{simply connected} hypotheses are dropped, then
a constant curvature Riemannian manifold is still locally isometric to
one of these model spaces.

Similarly, in the sequel, we will be interested in Lorentzian spacetimes
that are locally isometric (Definition~\ref{def_loc_isom}) to GRW or
FLRW models.

\subsection{Riemann curvature in GRW spacetimes} \label{sec_Riemann_GRW}

Below, we describe the Riemann curvature $R_{ijkh}$ in a GRW spacetime,
in terms of the curvature of $(F,g^F)$, the warping function $f$ and the
vector field $U_i = -(dt)_i$. For reference, let us denote the Riemann
tensor on the $(F,g^F)$ factor by $R^F_{ijkh}$, with $R^F_{ij} =
(g^F)^{kh} R^F_{ikjh}$ and $\mathcal{R}^F = (g^F)^{ij} R^F_{ij}$
denoting respectively the corresponding Ricci tensor and scalar. Recall
also the notation $\tilde{R}^F_{ijkh} = \pi_F^* R^{F}_{ijkh}$,
$\tilde{R}^F_{ij} = \pi_F^* R^F_{ij}$ and $\tilde{\mathcal{R}}^F =
\pi_F^* \mathcal{R}^F$.

Adapting the more general results on the covariant derivative on warped
products~\cite[Proposition 7.35]{ONeill1983}, the action of the
spacetime covariant derivative is determined by
\begin{align}
	\nabla_i (f U_j) = f' g_{ij} , \quad
	\nabla_i \tilde{X}_j = \widetilde{ \nabla_i X_j }
		- 2 \frac{f'}{f} U_{(i} \tilde{X}_{j)} ,
\end{align}
for any $X_j$ defined on $F$. Recalling the notation already used in the
introduction, for any $U_i$ we can define the temporal derivative $(-)'
:= U^i \nabla_i (-)$ and also
\begin{equation} \label{xi_eta}
	\xi := \frac{\nabla^i U_i}{m} , \quad
	\bm{\eta} := \xi' = U^j \nabla_j \xi .
\end{equation}
With our choice of $U$ on a GRW spacetime, we will be making repeated
use of the identifies
\begin{equation} \label{xi_eta_FLRW}
	\xi = \frac{f'}{f} , \quad
	\bm{\eta} = \frac{f''}{f} - \frac{f'^2}{f^2} .
\end{equation}
Geometrically $\xi$ is called the \emph{expansion} of the vector field
$U$, while its GRW value $f'/f$ is known as the \emph{Hubble rate} in
the literature on cosmology.

Next, adapting the more general result~\cite[Proposition
7.42]{ONeill1983} of how to write the Riemann tensor of a warped product
manifold in terms of the curvatures of its factors and the warping
function, it is possible to give the following general expression for the
Riemann tensor of GRW spacetimes:
\begin{align} \label{RiemannGRW_final}
	R_{ijkh}
\notag
	&= f^2 \tilde{R}^F_{ijkh}
		+ \left( g \odot \left[
				\frac{1}{2} \frac{f'^2}{f^2} g
				- \left(\frac{f''}{f}-\frac{f'^2}{f^2}\right) dt^2
			\right] \right)_{ijkh} \\
	&= f^2 \tilde{R}^F_{ijkh}
		+ \left( g \odot \left[
				\frac{\xi^2}{2} g
				- \bm{\eta} UU 
			\right] \right)_{ijkh} ,
\end{align}
where $(dt^2)_{ij} = (dt)_i (dt)_j$ and where we have used the product
notation~\eqref{kn-product}. When $m=1$, the tensors $g\odot UU$ and
$g\odot g$ are no longer linearly independent, in fact $g\odot UU =
-\frac{1}{2} g\odot g$. Moreover, the Riemann curvature for a
$1$-dimensional $(F,g^F)$ is always zero. Hence, in the special $m=1$
case we have the simplification
\begin{equation}
	R_{ijkh}
	= \frac{(\bm{\eta} + \xi^2)}{2} (g\odot g)_{ijkh}
	= \frac{f''}{f} \frac{1}{2} (g\odot g)_{ijkh} .
\end{equation}

As a consequence, using the identities
\begin{align}
	g^{kh} \tilde{R}^F_{ikjh}
		&= \frac{1}{f^2} \pi_F^* ((g^F)^{kh} R^F_{ikjh})
		= \frac{1}{f^2} \tilde{R}^F_{ij} , \\
	g^{ij} \tilde{R}^F_{ij}
		&= \frac{1}{f^2} \pi_F^* ((g^F)^{ij} R^F_{ij})
		= \frac{1}{f^2} \tilde{\mathcal{R}}^F ,
\end{align}
we get the following formulas for the Ricci tensor $R_{ij} = g^{kh}
R_{ikjh}$ and scalar $\mathcal{R} = g^{ij} R_{ij}$:
\begin{align}
	R_{ij}
\notag
		&= \tilde{R}^F_{ij}
			- (m-1)\left(\frac{f''}{f}-\frac{f'^2}{f^2}\right) U_i U_j 
			+ \left(\frac{f''}{f} + (m-1)\frac{f'^2}{f^2}\right) g_{ij} \\
		&= \tilde{R}^F_{ij}
			- (m-1) \bm{\eta} U_i U_j + (\bm{\eta}+ m\xi^2) g_{ij} , \\
	\mathcal{R}
\notag
		&= \frac{1}{f^2} \tilde{\mathcal{R}}^F
			+ 2m\frac{f''}{f} + m(m-1) \frac{f'^2}{f^2} \\
		&= \frac{1}{f^2} \tilde{\mathcal{R}}^F
			+ 2m \bm{\eta} + m(m+1) \xi^2 .
\end{align}
For completeness, we also compute the value of the scalar square of the
Ricci tensor:
\begin{align}
	\mathcal{B}
\notag
	&= \frac{\tilde{\mathcal{B}}^F}{f^4}
		+ 2\left(\frac{f''}{f}+(m-1)\frac{f'^2}{f^2}\right)
			\frac{\tilde{\mathcal{R}}^F}{f^2}
		+ m\left(\frac{f''}{f}+(m-1)\frac{f'^2}{f^2}\right)^2
		+ m^2 \frac{f''^2}{f^2} \\
	&= \frac{\tilde{\mathcal{B}}^F}{f^4}
		+ 2 (\bm{\eta} + m \xi^2) \frac{\tilde{\mathcal{R}}^F}{f^2}
		+ m (\bm{\eta} + m \xi^2)^2
		+ m^2 (\bm{\eta} + \xi^2)^2 ,
\end{align}
where $\mathcal{B}^F = (g^F)^{ik} (g^F)^{jh} R^F_{ij} R^F_{kh}$.

The above formulas motivate the following definitions, which can be used
to isolate the spatial curvature $R^F_{ijkh}$ from the knowledge of the
spacetime curvature $R_{ijkh}$ and of $U_i$.
\begin{defn} \label{curvature_cond}
Consider a Lorentzian manifold $(M,g)$ with a unit timelike vector field
$U$. Recall also the scalars $\xi$ and $\bm{\eta}$ scalars
from~\eqref{xi_eta}.

\begin{enumerate}
\item
We define the \emph{zero (spatial) curvature deviation (ZCD) tensor} as
\begin{equation} \label{zcdt}
	\mathfrak{Z}_{ijkh} := R_{ijkh}
	- \left( g \odot \left[ \frac{\xi^2}{2} g
		- \bm{\eta} UU \right] \right)_{ijkh} .
\end{equation}

\item
Provided $m>1$, we define the \emph{spatial curvature scalar} as
\begin{equation} \label{zeta}
	\zeta := \frac{\mathfrak{Z}_{ij}{}^{ij}}{m(m-1)}
	= \frac{\mathcal{R} - 2m\bm{\eta} - m(m+1)\xi^2}{m(m-1)}
\end{equation}
and if $m=1$, we set $\zeta = 0$.

\item
We define the \emph{constant (spatial) curvature deviation (CCD) tensor}
as
\begin{equation} \label{ccdt}
	\mathfrak{C}_{ijkh} := R_{ijkh}
	- \left( g \odot \left[ \frac{(\xi^2+\zeta)}{2} g
		- (\bm{\eta}-\zeta) UU \right] \right)_{ijkh} .
\end{equation}
\end{enumerate}
\end{defn}
On GRW spacetimes, the usefulness of these definitions lies in the
identities
\begin{gather}
\label{GRW_ZCDT}
	\mathfrak{Z}_{ijkh} = f^2 \tilde{R}^F_{ijkh} , \quad
	\zeta = \frac{1}{m(m-1)} \frac{\tilde{\mathcal{R}}^F}{f^2} , \\
\label{GRW_CCDT}
	\mathfrak{C}_{ijkh} =
		f^2 \left( \tilde{R}^F_{ijkh}
			- \frac{1}{m(m-1)} \frac{\tilde{\mathcal{R}}^F}{2}
				(\tilde{g}^F\odot \tilde{g}^F)_{ijkh} \right) .
\end{gather}

\subsection{Riemann curvature in FLRW spacetimes} \label{sec_Riemann}

Next, we specialize the main formulas obtained in the preceding section
from GRW to FLRW spacetimes (Definition~\ref{def_FLRW}), by making use
of their spatial curvature structure
\begin{equation}
	R^F_{ijkh}
	= \frac{\alpha}{2} (g^F\odot g^F)_{ijkh} , \quad
	R^F_{ij} = (m-1) \alpha g^F_{ij} , \quad
	\mathcal{R}^F = m(m-1) \alpha ,
\end{equation}
and of the identity
\begin{multline}
	f^2 \frac{1}{2} (\tilde{g}^F \odot \tilde{g}^F)_{ijkh}
	= \frac{1}{f^2} \frac{1}{2} \left((g+UU)\odot(g+UU)\right)_{ijkh} \\
	= \frac{1}{f^2} \left((g\odot UU)_{ijkh}
		+ \frac{1}{2} (g\odot g)_{ijkh}\right)
	= \frac{1}{f^2} \left( g \odot \left[ \frac{1}{2} g + UU \right] \right)_{ijkh} ,
\end{multline}
where we have recalled that $f^2 \tilde{g}^F = g + UU$. Recall also the
definitions of $U_i = (dt)_i$, the scalars $\xi$ and $\bm{\eta}$
from~\eqref{xi_eta}, and note the identity
\begin{equation}
	\zeta = \frac{\alpha}{f^2}
\end{equation}
for the spatial curvature scalar (Definition~\ref{curvature_cond}) when
$m>1$. When $m=1$, we always have $R^F_{ijkh} = 0$, so it is consistent
to take $\zeta = 0$, as we do.

Thus, for FLRW spacetimes of spatial sectional curvature $\alpha$, we have
\begin{align}
	R_{ijkh}
%	&=
%		- \left(\frac{f''}{f} - \frac{f'^2}{f^2} - \frac{\alpha}{f^2}\right)
%			(g\odot dt^2)_{ijkh}
%		+ \frac{1}{2}\left(\frac{f'^2}{f^2} + \frac{\alpha}{f^2}\right)
%			(g \odot g)_{ijkh} , \\
\label{Riem_FLRW}
	&= \left( g \odot \left[
			\frac{(\xi^2+\zeta)}{2} g
			- (\bm{\eta}-\zeta) UU 
		\right] \right)_{ijkh} , \\
	R_{ij}
%	&=
%		- (m-1)\left(\frac{f''}{f} - \frac{f'^2}{f^2} - \frac{\alpha}{f^2}\right)
%			U_i U_j
%		+ \left[\frac{f''}{f}
%			+ (m-1)\left(\frac{f'^2}{f^2}+\frac{\alpha}{f^2}\right)\right] g_{ij} , \\
\label{Ric_FLRW}
	&= -(m-1) (\bm{\eta} - \zeta) U_i U_j
		+ [(\bm{\eta} - \zeta) + m(\xi^2+\zeta)] g_{ij} , \\
%\notag
	\mathcal{R}
%	&= 2m \frac{f''}{f} + m(m-1) \frac{f'^2}{f^2}
%		+ m(m-1)\frac{\alpha}{f^2} \\
%\notag
%		&= m\left[ 2\left(\frac{f''}{f}-\frac{f'^2}{f^2}-\frac{\alpha}{f^2}\right)
%				+ (m+1)\left(\frac{f'^2}{f^2}+\frac{\alpha}{f^2}\right) \right] \\
%		&= m\left[ \frac{\left(\frac{f'^2}{f^2}+\frac{\alpha}{f^2}\right)'}{f'/f}
%				+ (m+1)\left(\frac{f'^2}{f^2}+\frac{\alpha}{f^2}\right) \right] , \\
%\intertext{where the last formula holds provided $f'\ne 0$, and}
\label{R_FLRW}
		&= m \left[ 2(\bm{\eta}-\zeta) + (m+1) (\xi^2+\zeta) \right] , \\
	\mathcal{B}
%	&=
%		m \left[ \left(\frac{f''}{f} - \frac{f'^2}{f^2} - \frac{\alpha}{f^2}\right)
%			+ m\left(\frac{f'^2}{f^2} + \frac{\alpha}{f^2}\right) \right]^2
%		+ m^2 \frac{f''^2}{f^2} , \\
\label{B_FLRW}
	&= m [(\bm{\eta}-\zeta) + m(\xi^2+\zeta)]^2 + m^2 (\bm{\eta} + \xi^2)^2 ,
\end{align}
where we have also used $\mathcal{B}^F = m (m-1)^2 \alpha^2$. In the
special $m=1$ case, the above formulas simplify to
\begin{align}
	R_{ijkh} &= \frac{(\bm{\eta}+\xi^2)}{2} (g\odot g)_{ijkh} , \\
	R_{ij} &= (\bm{\eta} + \xi^2) g_{ij} , \\
	\mathcal{R} &= 2 (\bm{\eta}+\xi^2) , \\
	\mathcal{B} &= 2 (\bm{\eta}+\xi^2)^2 .
\end{align}

Because of the frequent appearance of the combinations $\bm{\eta} -
\zeta = \frac{f''}{f} - \frac{f'^2}{f^2} - \frac{\alpha}{f^2}$ and
$\xi^2 + \zeta = \frac{f'^2}{f^2} + \frac{\alpha}{f^2}$, in the sequel
we will need the identity
\begin{equation} \label{eq_fried_diff}
	(\xi^2 + \zeta)' = 2\xi (\bm{\eta} - \zeta)
	\quad \text{or} \quad
	\left(\frac{f'^2}{f^2}+\frac{\alpha}{f^2}\right)'
		= 2 \frac{f'}{f} \left(\frac{f''}{f}-\frac{f'^2}{f^2}-\frac{\alpha}{f^2}\right) .
\end{equation}

\subsection{Perfect fluid interpretation} \label{sec_perfect_fluid}

An arbitrary FLRW spacetime will in general not satisfy the vacuum
Einstein equations. But it could be interpreted, when $m>1$, as a
solution of Einstein equations with a perfect fluid stress energy tensor
\begin{equation}
	R_{ij} - \frac{1}{2} \mathcal{R} g_{ij} + \Lambda g_{ij}
		= \kappa T_{ij} = \kappa (\rho+p) U_i U_j + \kappa p g_{ij} ,
\end{equation}
where $\Lambda$ is the cosmological constant, $\rho$ is the \emph{energy
density} and $p$ is the pressure. When $m=3$, the coupling constant
usually has the value $\kappa = 8\pi G/c^4$, where $G$ is Newton's
constant and $c$ the speed of light. In other dimensions, there are at
least two conventions: either keeping the value of $\kappa$ the same, or
setting it to $\kappa = 2 \sigma_m G/c^4$, where $\sigma_m = 2
\pi^{\frac{m-1}{2}} / \Gamma(\frac{m-1}{2})$ is the area of the unit
$(m-1)$-sphere. We will simply keep it as an unspecified but fixed
constant $\kappa \ne 0$. The cosmological constant could of course be
shifted to $\Lambda \mapsto 0$ by the redefinitions $p \mapsto p -
\Lambda/\kappa$, $\rho \mapsto \rho + \Lambda/\kappa$. When $m=1$, the
fluid interpretation is no longer possible, simply because the Einstein
tensor $R_{ij} - \frac{1}{2} \mathcal{R} g_{ij}$ is identically zero in
two spacetime dimensions.

Defining $\mathcal{T} = g^{ij} T_{ij}$, an equivalent form of Einstein's
equations is
\begin{gather}
	R_{ij} = \kappa T_{ij} - \frac{\kappa}{m-1} \mathcal{T} g_{ij}
		= \kappa (\rho + p) U_i U_j + \kappa \frac{\rho-p}{m-1} g_{ij} .
\end{gather}
Hence, for FLRW spacetimes, these equations translate to
\begin{align}
\label{friedmann}
	\frac{f'^2}{f^2} + \frac{\alpha}{f^2}
		&= \frac{2}{m(m-1)} \kappa \rho ,
		&
	\kappa \rho &= \frac{m(m-1)}{2}
		\left( \frac{f'^2}{f^2} + \frac{\alpha}{f^2} \right) ,
		\\
\label{accel}
	\frac{f''}{f} - \frac{f'^2}{f^2} - \frac{\alpha}{f^2}
		&= -\frac{1}{m-1} \kappa (\rho + p) ,
		&
	\kappa p &= -(m-1) \left( \frac{f''}{f}
			- \frac{f'^2}{f^2} - \frac{\alpha}{f^2} \right) \\
\notag & & & \quad {}
		- \frac{m(m-1)}{2} \left( \frac{f'^2}{f^2} + \frac{\alpha}{f^2} \right) ,
\end{align}
On the top-left we have the \emph{Friedmann equation}, while on the
bottom-left we have the \emph{acceleration equation}. These equations
agree with the formulas previously obtained in~\cite{higher-flrw}, which
was one of the first to consider perfect fluid cosmologies in higher
spacetime dimensions. The Bianchi identity $\nabla^i (R_{ij} -
\frac{1}{2} \mathcal{R} g_{ij}) = 0$ implies the stress-energy
conservation $\nabla^iT_{ij} = 0$ condition, which translates to the
\emph{energy conservation} or \emph{continuity equation}
\begin{equation} \label{cont_eq}
	\rho' + m \frac{f'}{f} (\rho + p) = 0 .
\end{equation}

\subsection{Special FLRW classes} \label{sec_reg_FLRW}

Below, we list the forms of FLRW spacetimes (Definition~\ref{def_FLRW})
satisfying some special geometric conditions. Throughout this section,
consider an FLRW spacetime $(M,g)$, $\dim M = m+1 \ge 2$, with warping
function $f\colon I\to \mathbb{R}$ and spatial sectional curvature
$\alpha$. Whenever parameters are present, they must be chosen to
respect $f(t) > 0$ for all $t\in I$, even if not explicitly indicated,
as well as $\alpha=0$ when $m=1$.

\begin{lem} \label{lem_flat}
The complete list of possible triples $(m,\alpha,f(t))$ satisfying the
\emph{flat} (or \emph{Minkowski space}) condition, $R_{ijkh} = 0$,
consists of
\begin{equation*}
	\begin{cases}
		\frac{f'}{f}=0\colon &
			(m, 0, A) \quad (A>0) ;
		\\
		\frac{f'}{f}\ne 0\colon &
			\begin{cases}
				m=1\colon \frac{f''}{f}=0 &
					(1, 0, A(t-t_0)) \quad (A\ne 0) ; \\
				m>1\colon \frac{f'^2}{f^2} + \frac{\alpha}{f^2} = 0 &
					(m, \alpha, \pm\sqrt{-\alpha}(t-t_0)) \quad (\alpha < 0) .
			\end{cases}
	\end{cases}
\end{equation*}
\end{lem}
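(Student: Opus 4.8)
The plan is to start from the FLRW curvature formula~\eqref{Riem_FLRW}, which expresses $R_{ijkh}$ entirely in terms of $g$, $U$, the scalars $\xi = f'/f$, $\bm\eta = f''/f - f'^2/f^2$, and $\zeta = \alpha/f^2$. Since the tensors $g\odot g$, $g\odot UU$ are linearly independent when $m>1$ (and in the $m=1$ case there is the degeneracy $g\odot UU = -\tfrac12 g\odot g$ noted in the text), imposing $R_{ijkh}=0$ is equivalent to the vanishing of the coefficients of these independent tensors. Concretely, for $m>1$ this forces $\xi^2+\zeta = 0$ and $\bm\eta - \zeta = 0$ simultaneously; for $m=1$ it forces only the single combination $\tfrac12(\bm\eta+\xi^2) = 0$, i.e. $f''/f = 0$. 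So the first step is simply to read off these scalar ODE conditions from~\eqref{Riem_FLRW}.

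Next I would translate these conditions into ODEs for $f$ and solve them, splitting into the cases dictated by whether $f'/f$ vanishes. If $\xi = f'/f \equiv 0$ then $f$ is a positive constant $A$, and $\zeta = \alpha/f^2 = 0$ forces $\alpha = 0$ (automatically true when $m=1$); this gives the family $(m,0,A)$ with $A>0$. If $f'/f \not\equiv 0$, then for $m=1$ the condition $f''=0$ integrates to $f = A(t-t_0)$ with $A \ne 0$ (and $\alpha=0$ is forced since $m=1$); for $m>1$ we use $\xi^2 = -\zeta$, i.e. $f'^2/f^2 = -\alpha/f^2$, which gives $f'^2 = -\alpha$, hence requires $\alpha < 0$ and yields $f = \pm\sqrt{-\alpha}\,(t-t_0)$. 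One should check the consistency condition $\bm\eta = \zeta$: here $f'' = 0$ so $\bm\eta = -f'^2/f^2 = \alpha/f^2 = \zeta$, which holds, so no further constraint arises. It is also worth remarking that in the $m>1$, $f'/f\not\equiv 0$ branch one could have tried to keep $\alpha \geq 0$, but then $f'^2 = -\alpha \leq 0$ forces $f' = 0$, contradicting $f'/f \not\equiv 0$ — so indeed only $\alpha<0$ survives, which is why the lemma lists exactly that case.

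The final bookkeeping step is to verify that the solutions found are genuinely parametrized as claimed (e.g. that the domain $I$ on which $f>0$ is automatically an interval, and that translations $t\mapsto t-t_0$ exhaust the freedom) and that the list is exhaustive: every FLRW triple with $R_{ijkh}=0$ falls into one of the displayed cases, because the case split on $f'/f \equiv 0$ versus $\not\equiv 0$, and then on $m=1$ versus $m>1$, is complete. I do not expect any serious obstacle here — the content is essentially the observation that the Kulkarni–Nomizu decomposition of $R_{ijkh}$ for an FLRW metric has only two independent scalar coefficients (one when $m=1$), together with solving the resulting elementary ODEs. The only mild subtlety, and the thing I would be most careful about, is handling the $m=1$ degeneracy correctly and making sure the constant-$f$ subcase is not accidentally absorbed into or separated incorrectly from the linear-$f$ subcase.
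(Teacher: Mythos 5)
Your proposal is correct and follows essentially the same route as the paper: read off from Equation~\eqref{Riem_FLRW} that $R_{ijkh}=0$ is equivalent to $\frac{f'^2}{f^2}+\frac{\alpha}{f^2}=0$ and $\frac{f''}{f}=0$ when $m>1$ (only $\frac{f''}{f}=0$ when $m=1$), then solve these elementary ODEs under the constraint $f>0$. The paper's proof is just a terser version of the same argument; your extra checks (linear independence of $g\odot g$ and $g\odot UU$, the automatic consistency $\bm{\eta}=\zeta$, and the exclusion of $\alpha\ge 0$ in the $m>1$ branch) simply spell out what the paper leaves as ``easy to see.''
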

\begin{proof}
From Equation~\eqref{Riem_FLRW}, the necessary and sufficient conditions
are $\frac{f''}{f} = 0$ and $(\frac{f'^2}{f^2} + \frac{\alpha}{f^2}) =
0$, when $m>1$, or only $f''/f=0$, when $m=1$. It is easy to see that
the desired conclusion exhausts the solutions of these equations under
the constraint that $f(t) \ne 0$ everywhere.
\end{proof}

\begin{lem} \label{lem_const_curv}
The complete list of possible triples $(m,\alpha,f(t))$ satisfying the
\emph{constant curvature} (or \emph{(anti-)de~Sitter space}) condition,
$R_{ijkh} = \frac{K}{2} (g\odot g)_{ijkh}$, with sectional curvature $K
\ne 0$, consists of ($A$ constant)
\begin{equation*}
	\begin{cases}
		m=1\colon  \frac{f''}{f} = K &
			\begin{cases}
				K>0\colon &
					\left\{\begin{aligned}
					& \left(1,0,A\cosh(\sqrt{K}(t-t_0))\right) , \\
					& \left(1,0,A e^{\pm \sqrt{K}(t-t_0)}\right) \\
					& \qquad (A>0) ;
					\end{aligned}\right.
				\\
				K<0\colon &
					\left\{\begin{aligned}
					& \left(1,0,A\cos(\sqrt{-K}(t-t_0))\right) \\
					& \qquad (A> 0) ;
					\end{aligned}\right.
			\end{cases}
		\\
		m>1\colon 
			\begin{gathered} \textstyle
				\frac{f'^2}{f^2} + \frac{\alpha}{f^2} = K , \\ \textstyle
				\frac{f'}{f} \ne 0
			\end{gathered} &
			\begin{cases}
				K>0\colon &
					\left\{\begin{aligned}
					& \left(m, \alpha, \sqrt{\alpha/K} \cosh(\sqrt{K}(t-t_0))\right) , \\
					& \left(m, 0, A e^{\pm\sqrt{K}(t-t_0)}\right) \\
					& \qquad (\alpha>0, A>0) ;
					\end{aligned}\right.
				\\
				K<0\colon &
					\left\{\begin{aligned}
					& \left(m, \alpha, \sqrt{\alpha/K} \cos(\sqrt{-K}(t-t_0))\right) \\
					& \qquad (\alpha < 0) .
					\end{aligned}\right.
			\end{cases}
	\end{cases}
\end{equation*}
\end{lem}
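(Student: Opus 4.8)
The plan is to follow the pattern of Lemma~\ref{lem_flat}. First I would start from the FLRW Riemann tensor~\eqref{Riem_FLRW} and observe that the condition $R_{ijkh}=\frac{K}{2}(g\odot g)_{ijkh}$ is equivalent to
\begin{equation*}
	\left(g\odot\left[\tfrac{1}{2}\bigl(\xi^2+\zeta-K\bigr)\,g-\bigl(\bm{\eta}-\zeta\bigr)\,UU\right]\right)_{ijkh}=0 .
\end{equation*}
For $m>1$ (so $\dim M>2$) the map $A\mapsto g\odot A$ on symmetric $2$-tensors is injective, and $g$ and $UU$ are pointwise linearly independent; hence this identity reduces to the two scalar equations $\bm{\eta}=\zeta$ and $\xi^2+\zeta=K$. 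Using~\eqref{xi_eta_FLRW} together with $\zeta=\alpha/f^2$, these read $\frac{f''}{f}-\frac{f'^2}{f^2}=\frac{\alpha}{f^2}$ and $\frac{f'^2}{f^2}+\frac{\alpha}{f^2}=K$; adding them yields the linear ODE $f''=Kf$, and the second equation survives as the first-integral constraint $f'^2+\alpha=Kf^2$ (indeed $f'^2-Kf^2$ is automatically constant along solutions of $f''=Kf$, so this constraint merely pins that constant to $-\alpha$). When $m=1$ I would instead use the collapsed formula $R_{ijkh}=\frac{1}{2}(\bm{\eta}+\xi^2)(g\odot g)_{ijkh}$, so the condition becomes the single equation $\bm{\eta}+\xi^2=K$, i.e.\ again $f''=Kf$, now with $\alpha=0$ automatic and no residual constraint. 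In both cases a constant $f$ is incompatible with $K\neq 0$, so $f$ is non-constant.

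Next I would integrate $f''=Kf$ explicitly: for $K>0$ the general solution is a linear combination of $\cosh(\sqrt K(t-t_0))$ and $\sinh(\sqrt K(t-t_0))$, equivalently of $e^{\pm\sqrt K(t-t_0)}$, and for $K<0$ a combination of $\cos(\sqrt{-K}(t-t_0))$ and $\sin(\sqrt{-K}(t-t_0))$. The only diffeomorphisms preserving the normal form $-dt^2+f^2g^F$ are the time translations $t\mapsto t+c$ and the reflection $t\mapsto-t$, and using them to absorb an additive shift and a phase reduces every solution to one of the shapes $A\cosh(\sqrt K(t-t_0))$, $Ae^{\pm\sqrt K(t-t_0)}$ (for $K>0$) or $A\cos(\sqrt{-K}(t-t_0))$ (for $K<0$), with $A$ a real amplitude still to be determined.

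It then remains to impose the first-integral constraint $f'^2+\alpha=Kf^2$ (when $m>1$) and the positivity $f>0$ on $I$ (which may be shrunk, as agreed at the start of this section). Evaluating $f'^2-Kf^2$ on each canonical shape identifies $\alpha$ with $K$ times the amplitude squared on the $\cosh$ and $\cos$ branches — forcing $A=\sqrt{\alpha/K}$ and fixing $\sgn\alpha$ from $\sgn K$ — and with $0$ on the exponential branch; positivity then fixes the sign of $A$ and, for $K<0$, restricts $I$ to a sub-interval on which $\cos$ stays positive. For $m=1$ there is no relation tying $\alpha$ (which is $0$) to $A$, so positivity alone selects the everywhere-positive shapes $A\cosh$ and $Ae^{\pm}$ when $K>0$ and the locally positive shape $A\cos$ when $K<0$, each with $A>0$. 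Collecting the surviving cases dimension by dimension and sign by sign reproduces the tabulated list — all of whose members are portions of de~Sitter or anti-de~Sitter space, hence locally isometric to the corresponding $\CC^m_K$.

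I expect the only real work to be this last bookkeeping step: sorting out, for each sign of $K$ and $\alpha$ and for $m=1$ versus $m>1$, which combinations of integration constants are compatible with both the first integral and $f>0$, and which canonical representative each reduces to under $t\mapsto\pm(t-t_0)$. The one conceptual ingredient is the reduction of the tensor equation to scalar ODEs, which hinges on the injectivity of $A\mapsto g\odot A$ for $\dim M>2$ and has to be handled separately, through the collapsed $m=1$ Riemann formula, in the two-dimensional case.
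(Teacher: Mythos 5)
Your overall route is the same as the paper's: use~\eqref{Riem_FLRW} (and the collapsed $m=1$ formula) to reduce the tensor condition to the scalar equations $\bm{\eta}=\zeta$ and $\xi^2+\zeta=K$ (respectively $f''/f=K$), discard constant $f$, and then solve the resulting ODEs subject to $f>0$. The paper compresses the last step into ``it is easy to see that the desired conclusion exhausts the solutions'', so your explicit bookkeeping ($f''=Kf$ with the first integral $f'^2-Kf^2=-\alpha$, then normal forms under $t\mapsto\pm t+t_0$) is a faithful expansion of the same argument, and your reduction of the tensor equation (injectivity of $A\mapsto g\odot A$ for $m>1$, linear independence of $g$ and $UU$) is sound.

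The step that does not hold up is the canonical-form claim for $K>0$: it is not true that every real solution of $f''=Kf$ can be brought by translations and reflection to the shapes $A\cosh(\sqrt K(t-t_0))$ or $Ae^{\pm\sqrt K(t-t_0)}$. The sign of the first integral $f'^2-Kf^2$ is invariant under $t\mapsto\pm t+t_0$ (and under positive rescaling of $f$); it is negative on the $\cosh$ branch, zero on the exponential branch, and positive on the $\sinh$ branch $f=A\sinh(\sqrt K(t-t_0))$, which your enumeration silently drops. That branch is not empty: on $I\subseteq(t_0,\infty)$ one has $f>0$, and the constraint $f'^2+\alpha=Kf^2$ simply assigns $\alpha=-KA^2<0$ for $m>1$ (for $m=1$ it is admissible with $\alpha=0$ automatically), so $\bigl(m,\alpha,\sqrt{-\alpha/K}\,\sinh(\sqrt K(t-t_0))\bigr)$ satisfies every condition in the statement --- it is the open (hyperbolic) slicing of de~Sitter space. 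Carrying out your own plan carefully therefore yields an extra family rather than exactly the tabulated list, so the completeness assertion is not actually established as written. To be fair, this is the same blind spot as in the paper's one-line ``exhausts the solutions'' (and its printed list), and it is harmless downstream because all these solutions are locally isometric to the same constant curvature model, which is all that Proposition~\ref{prop_special_FLRW}(a) and Theorem~\ref{thm_FLRW_class} use; but a complete proof must enumerate the orbits of $f''=Kf$ by the sign of $f'^2-Kf^2$ and either include the $\sinh$ family or exhibit a hypothesis that excludes it (none of the stated ones does, since only $f>0$ on some interval is required).
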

\begin{proof}
Again, referring to Equation~\eqref{Riem_FLRW}, the necessary and
sufficient conditions are $\frac{f''}{f} - \frac{f'^2}{f^2} -
\frac{\alpha}{f^2} = 0$ and $\frac{f'^2}{f^2} + \frac{\alpha}{f^2} = K$,
when $m>1$, or only $f''/f = K$, when $m=1$. If $m>1$ and $f'/f = 0$, we
must have $K = \alpha/f^2 = 0$, which contradicts the $K\ne 0$
hypothesis. Otherwise, it is easy to see that the desired conclusion
exhausts the solutions of these equations under the constraint that
$f(t) \ne 0$ everywhere.
\end{proof}

\begin{lem} \label{lem_esu}
The complete list of possible triples $(m,\alpha,f(t))$ satisfying both
conditions $\mathcal{R}' = 0$ and $\mathcal{B}' = 0$, but not of
constant curvature, consists of ($A$, $K$ constant)
\begin{equation*}
	(m, K A^2, A) \quad (m>1, K \ne 0, A>0) .
\end{equation*}
\end{lem}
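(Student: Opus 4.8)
The plan is to reduce everything to the FLRW scalar identities~\eqref{R_FLRW} and~\eqref{B_FLRW} together with the first‑order evolution rule~\eqref{eq_fried_diff}, reading the two conditions as ODEs for the combinations $\xi^2+\zeta$ and $\bm\eta-\zeta$. First I would dispose of $m=1$: there $\alpha=0$ is automatic and, by the $m=1$ specializations of Section~\ref{sec_Riemann}, $\mathcal{R}=2(\bm\eta+\xi^2)=2f''/f$, so $\mathcal{R}'=0$ already forces $f''/f$ to be a constant, which by Lemmas~\ref{lem_flat} and~\ref{lem_const_curv} means $(M,g)$ is of constant curvature. Hence no $m=1$ triple can occur under the stated hypotheses, and every admissible triple has $m>1$.

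Now assume $m>1$ and abbreviate $a:=\xi^2+\zeta$ and $b:=\bm\eta-\zeta$. Then~\eqref{R_FLRW},~\eqref{B_FLRW} read $\mathcal{R}=m[2b+(m+1)a]$ and $\mathcal{B}=m(b+ma)^2+m^2(b+a)^2$, while~\eqref{eq_fried_diff} gives $a'=2\xi b$. Differentiating, $\mathcal{R}'=2m[b'+(m+1)\xi b]$, so $\mathcal{R}'=0$ on $I$ is equivalent to $b'=-(m+1)\xi b$. Substituting this and $a'=2\xi b$ into $\mathcal{B}'=2m(b+ma)(b'+ma')+2m^2(b+a)(b'+a')$ produces $b'+ma'=(m-1)\xi b$ and $b'+a'=-(m-1)\xi b$, whereupon the bracket collapses via $(b+ma)-m(b+a)=-(m-1)b$ to the clean identity $\mathcal{B}'=-2m(m-1)^2\,\xi b^2$ (valid once $\mathcal{R}'=0$ holds). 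Therefore, under $\mathcal{R}'=0$, the condition $\mathcal{B}'=0$ is equivalent to $\xi b\equiv 0$ on $I$, which fed back into $b'=-(m+1)\xi b$ yields $b'\equiv 0$: the function $b$ is constant on the interval $I$.

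It remains to case‑split on this constant. If $b\equiv 0$, then $a'=2\xi b=0$ too, so $a$ is a constant $K$, and~\eqref{Riem_FLRW} reads $R_{ijkh}=\frac{K}{2}(g\odot g)_{ijkh}$ — constant curvature, excluded by hypothesis. If instead $b\equiv b_0\neq 0$, then $\xi b\equiv 0$ forces $\xi=f'/f\equiv 0$, i.e. $f\equiv A$ for some positive constant $A$; then $\bm\eta=0$ and $\zeta=\alpha/A^2$, so $b_0=-\alpha/A^2\neq 0$, hence $\alpha\neq 0$. Writing $K:=\alpha/A^2\neq 0$ exhibits the triple as $(m,KA^2,A)$ with $m>1$, $A>0$. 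Conversely, for such a triple one computes $\xi\equiv 0$, $\bm\eta\equiv 0$, $\zeta\equiv K$, so $\mathcal{R}$ and $\mathcal{B}$ are constant and $\mathcal{R}'=\mathcal{B}'=0$, while~\eqref{Riem_FLRW} gives $R_{ijkh}=\frac{K}{2}(g\odot g)_{ijkh}+K\,(g\odot UU)_{ijkh}$, which for $m>1$ and $K\neq 0$ is not a multiple of $(g\odot g)_{ijkh}$ (those tensors being linearly independent there), so it is genuinely not of constant curvature; this completes the list.

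I do not foresee a genuine obstacle. The only step requiring care is verifying that, after imposing $\mathcal{R}'=0$, the apparently unwieldy polynomial in $a,b$ for $\mathcal{B}'$ really factors through $\xi b^2$; everything else is bookkeeping, together with the two consistency checks — that the $m=1$ case contributes nothing and that the surviving family $(m,KA^2,A)$ indeed fails the constant curvature condition (it is, of course, the Einstein static universe).
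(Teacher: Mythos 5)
Your proposal is correct and follows essentially the paper's route: both arguments use \eqref{R_FLRW}, \eqref{B_FLRW} and the identity \eqref{eq_fried_diff} to eliminate the highest derivative and reduce the two conditions to the factorized constraint $\xi\,(\bm{\eta}-\zeta)=0$, then exclude the $\bm{\eta}-\zeta=0$ branch as constant curvature and keep only $\xi=f'/f=0$, with the same separate disposal of $m=1$. Your reformulation via $\mathcal{B}'=-2m(m-1)^2\xi(\bm{\eta}-\zeta)^2$ and the observation that $\bm{\eta}-\zeta$ is constant on $I$ is a slightly tidier way to run the same case split, but it is not a different method.
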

This is is the \emph{Einstein Static
Universe}~\cite[\textsection16.2]{Weinberg1972} with spatial sectional
curvature $K$, which solves the Einstein equation, $R_{ij} - \frac{1}{2}
\mathcal{R} g_{ij} + \Lambda g_{ij} = 0$, with the cosmological constant
$\Lambda=\frac{(m-1)(m-2)}{2}K$.
\begin{proof}
From Equations~\eqref{R_FLRW} and~\eqref{B_FLRW}, both $\mathcal{R}'=0$
and $\mathcal{B}'=0$ are third order equations in $f$. Eliminating
$f'''$ from both of them, we obtain the integrability condition
\begin{equation}
	m^2 (m-1)^2 \frac{f'}{f}
		\left(\frac{f''}{f}-\frac{f'^2}{f^2}-\frac{\alpha}{f^2}\right) = 0 .
\end{equation}
Obviously, it is trivial when $m=1$. This is not surprising, because
then $\mathcal{R}$ is the only independent curvature component and
$\mathcal{R}'=0$ already implies that the spacetime is of constant
curvature, which is excluded by the hypotheses.

Further, this integrability condition splits into the cases $f'/f = 0$
and $f'/f \ne 0$. In the latter, it implies $\frac{f''}{f} -
\frac{f'^2}{f^2} - \frac{\alpha}{f^2} = 0$ and $(\frac{f'^2}{f^2} +
\frac{\alpha}{f^2})' = 0$ (cf.~Equation~\eqref{eq_fried_diff}). But
these are precisely the necessary and sufficient conditions for the
spacetime to be of constant curvature (Lemma~\ref{lem_const_curv}),
which is excluded by our hypotheses. Thus, we are left with the only
possibility $f'/f=0$ and the desired conclusion clearly exhausts the
solutions of this equation.
\end{proof}

\begin{lem} \label{lem_csc}
The complete list of possible triples $(m,\alpha,f(t))$ satisfying the
constant scalar curvature condition $\mathcal{R}' = 0$, but with
$\mathcal{B}' \ne 0$, consists of
\begin{equation*}
	\begin{cases}
		\alpha = 0\colon &
			(m, 0, f) ~
			\left( m>1,
				~ \frac{f''}{f}-\frac{f'^2}{f^2}
					+ \frac{(m+1)}{2} (\frac{f'^2}{f^2} - K) = 0 ,
				~ \frac{f'^2}{f^2}-K \ne 0 \right) ;
		\\
		\alpha \ne 0\colon &
			(m, \alpha, f) ~
			\left( m>1,
				~\frac{f'^2}{f^2} + \frac{\alpha}{f^2}
					= K + \kappa \Omega \frac{|\alpha|^{\frac{m+1}{2}}}{f^{m+1}} ,
				~ \Omega \ne 0 \right) .
	\end{cases}
\end{equation*}
\end{lem}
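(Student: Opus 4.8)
The plan is to convert the condition $\mathcal{R}'=0$ into an ordinary differential equation for the warping function, using the FLRW curvature formulas of Section~\ref{sec_Riemann}, and then to invoke Lemmas~\ref{lem_const_curv} and~\ref{lem_esu} to pin down exactly when the supplementary condition $\mathcal{B}'\ne 0$ (that is, $\mathcal{B}$ not locally constant) holds.

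First I would note that, since $I$ is connected, $\mathcal{R}'=0$ means $\mathcal{R}$ is a constant, which I write as $m(m+1)K$. By~\eqref{R_FLRW} we have $\mathcal{R}=m[2(\bm{\eta}-\zeta)+(m+1)(\xi^2+\zeta)]$, so, abbreviating $u:=\xi^2+\zeta=\frac{f'^2+\alpha}{f^2}$ and $v:=\bm{\eta}-\zeta$, the condition becomes the pointwise relation $v=\frac{m+1}{2}(K-u)$. Substituting this into the identity~\eqref{eq_fried_diff}, $u'=2\xi v$ with $\xi=(\ln f)'$, produces the first-order linear ODE $u'=(m+1)\xi(K-u)$, whose solutions are $u=K+C/f^{m+1}$ for a constant $C$ (use the integrating factor $f^{m+1}$, or integrate $(\ln|u-K|)'=-(m+1)(\ln f)'$ away from the set $u=K$); since no division by $\xi$ occurs, zeros of $f'$ cause no difficulty.

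In the spatially flat case $\alpha=0$ one has $\zeta=0$ and no integration is needed: the relation above is literally the stated equation $\bm{\eta}+\frac{m+1}{2}(\xi^2-K)=0$. Here $m=1$ is impossible, since then $\mathcal{R}=2f''/f$, so $\mathcal{R}'=0$ already makes $f''/f$ constant, which is the $m=1$ constant-curvature condition of Lemma~\ref{lem_const_curv} and hence forces $\mathcal{B}$ constant. For $m>1$, putting $y:=\xi^2-K$ the ODE gives $y'=-(m+1)\xi y$, so $y$ vanishes at one point iff it vanishes identically; and by~\eqref{Riem_FLRW} the vanishing of $y$ (together with the ODE, which then gives $\bm{\eta}=0$) is exactly the constant-curvature condition. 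Since an Einstein static universe has $\alpha\ne 0$ (Lemma~\ref{lem_esu}), it follows that, given $\mathcal{R}'=0$ and $\alpha=0$, the condition $\mathcal{B}'\ne 0$ is equivalent to $f'^2/f^2\ne K$, which is the first entry of the list.

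For $\alpha\ne 0$ (hence $m>1$ automatically) the integrated relation reads $\frac{f'^2}{f^2}+\frac{\alpha}{f^2}=K+\frac{C}{f^{m+1}}$, and writing the constant $C$ as $\kappa\Omega\,|\alpha|^{(m+1)/2}$ is legitimate since $|\alpha|^{(m+1)/2}>0$; reversing the computation shows conversely that every such triple has $\mathcal{R}'=0$. If $C=0$ then $u\equiv K$, hence $v\equiv 0$, and~\eqref{Riem_FLRW} makes the spacetime of constant curvature, so $\mathcal{B}$ is constant, contradicting $\mathcal{B}'\ne 0$; thus $\Omega\ne 0$. Conversely, if a triple in the family had $\mathcal{B}'=0$, then by Lemma~\ref{lem_esu} it would be of constant curvature --- so $u$ is constant, forcing $f$ constant since $C\ne 0$ --- or an Einstein static universe, again with $f$ constant; so for non-constant $f$ one has $\mathcal{B}'\ne 0$. (Constant warping functions do solve the displayed ODE for isolated parameter values but coincide with Einstein static universes, hence are excluded by $\mathcal{B}'\ne 0$; equivalently one may append $f'\ne 0$ here, as in Definition~\ref{def_reg_FLRW}.) That every solution of each ODE, subject to $f>0$ and the stated strict inequality, actually occurs is routine. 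I expect the only delicate point to be this bookkeeping around $\mathcal{B}'=0$ --- separating, via Lemmas~\ref{lem_const_curv} and~\ref{lem_esu}, the constant-curvature and Einstein-static sub-cases (and the degenerate constant-$f$ solutions) from the generic ones; the ODE reduction itself is immediate from the identities of Section~\ref{sec_Riemann}.
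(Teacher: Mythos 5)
Your proposal is correct and takes essentially the same route as the paper: impose $\mathcal{R}=m(m+1)K$, combine it with the identity~\eqref{eq_fried_diff} and the integrating factor $f^{m+1}$ to obtain $\frac{f'^2}{f^2}+\frac{\alpha}{f^2}=K+C/f^{m+1}$, use Lemmas~\ref{lem_const_curv} and~\ref{lem_esu} with the $\mathcal{B}'\ne 0$ hypothesis to exclude $m=1$, the case $C=0$ (equivalently $f'^2/f^2\equiv K$ when $\alpha=0$) and the static solutions, and finally normalize $C$ by $|\alpha|^{(m+1)/2}$ when $\alpha\ne 0$. The only difference is organizational: the paper discards $f'/f=0$ and $m=1$ up front so that the integrating-factor step is an equivalence, whereas you integrate without ever dividing by $\xi$ and dispose of the constant-$f$ and constant-curvature degenerations at the end, which is the same argument in a slightly different order.
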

These are FLRW spacetimes with cosmological constant $\Lambda =
\frac{m(m-1)}{2} K$ and \emph{radiation perfect fluid} of energy density
$\Omega_r = \frac{1}{\kappa} \left(\frac{f'^2}{f^2} + \frac{\alpha}{f^2}
- K\right)$, where $\Omega_r = C/f^{m+1}$ for some constant $C$. We
	refer to $\Omega_r$ as the radiation energy density because the term
	with the power law $1/f^{m+1}$ in the Friedmann equation
\begin{equation}
	\frac{f'^2}{f^2} + \frac{\alpha}{f^2} = K + \kappa \frac{C}{f^{m+1}} ,
\end{equation}
when considered by itself gives rise to the constitutive relation
$p_r(\rho) = \rho/m$, which is characteristic of radiation in thermal
equilibrium~\cite{nemiroff-patla}. If $\Omega_\alpha = \alpha/f^2$ is
the energy density due to spatial curvature, when it is nonzero, the
ratio $\Omega = \Omega_r/\Omega_\alpha$ defines our \emph{normalized
radiation density} constant $\Omega$.
\begin{proof}
If $f'/f = 0$, then $\mathcal{R} = m(m-1) \alpha /f^2$ and $\mathcal{B}
= m(m-1)^2 \alpha^2/f^4$. Hence $\mathcal{R}' = 0$ implies $\mathcal{B}
' = 0$. The same implication holds if $m=1$ (cf.~proof of
Lemma~\ref{lem_const_curv}). Therefore, by the $\mathcal{B}' \ne 0$
hypothesis, we can assume that $m>1$ and $f'/f \ne 0$.

From Equation~\eqref{R_FLRW}, the constant scalar curvature condition
$\mathcal{R}=m(m+1)K$ (with some constant $K$)
\begin{equation}
	2 \left(\frac{f''}{f} - \frac{f'^2}{f^2} - \frac{\alpha}{f^2}\right)
		+ (m+1)\left(\frac{f'^2}{f^2} + \frac{\alpha}{f^2}\right)
		= (m+1) K ,
\end{equation}
after multiplying both sides by the integrating factor $(f'/f) f^{m+1}$
and using identity~\eqref{eq_fried_diff}, it is equivalent to
\begin{equation}
	f^{m+1} \left( \frac{f'^2}{f^2} + \frac{\alpha}{f^2} - K \right)
	= \kappa C ,
\end{equation}
for some constant $C$. If $C=0$, we are back to the case of constant
curvature (Lemma~\ref{lem_const_curv}), which is excluded by the
$\mathcal{B}' \ne 0$ hypothesis. When $\alpha \ne 0$, we can normalize
this constant as $C = \Omega |\alpha|^{\frac{m+1}{2}}$, with some
$\Omega \ne 0$. Thus, the desired conclusion clearly consists of the
necessary and sufficient conditions for $\mathcal{R}' = 0$ and
$\mathcal{B}' \ne 0$ to hold.
\end{proof}

\begin{lem} \label{lem_flat_FLRW}
For any triple $(m,\alpha,f(t))$ for which $\alpha = 0$, $(f'^2/f^2)'
\ne 0$ and $(\nabla\mathcal{R})^2 < 0$, there is a unique smooth
function $P\colon J \to \mathbb{R}$, where $J = \frac{f'^2}{f^2}(I)$, $I\subseteq\mathbb{R}$ and
\begin{equation}
	\frac{f''}{f} - \frac{f'^2}{f^2} + \frac{m}{2} \frac{f'^2}{f^2}
		= -\kappa P\left((f'/f)^2\right) .
\end{equation}
The function $P(u)$ will also satisfy the following condition for each
$u\in J$:
\begin{equation}
	P(u) \left[\kappa \del_u P(u) - \frac{1}{2}\right] \ne 0 . 
\end{equation}
\end{lem}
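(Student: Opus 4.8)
The plan is to reduce the claim to an elementary one-variable statement. Since $\alpha=0$ the spatial curvature scalar vanishes, $\zeta=\alpha/f^2=0$, so by \eqref{xi_eta_FLRW} $\xi=f'/f$; set $u:=\xi^2=(f'/f)^2$. The Friedmann identity \eqref{eq_fried_diff}, with $\zeta=0$, reads $u'=(\xi^2)'=2\xi\bm{\eta}$ where $\bm{\eta}=\tfrac{f''}{f}-\tfrac{f'^2}{f^2}$. Hence the hypothesis $(f'^2/f^2)'\ne 0$ is exactly $u'\ne 0$ on all of $I$; in particular $\xi\ne 0$ and $\bm{\eta}\ne 0$ there. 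So $u\colon I\to\mathbb{R}$ is smooth and strictly monotone, hence a diffeomorphism onto the open interval $J:=u(I)$, and since $u(t)=0$ would force $\xi(t)=0$ and thus $u'(t)=0$, we get $u>0$ on $I$ and $0\notin J$.

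Existence and uniqueness of $P$ now follow. The defining relation of the lemma is $\bm{\eta}+\tfrac{m}{2}\xi^2=-\kappa P(u)$, which forces
\[
	P(v):=-\frac{1}{\kappa}\left(\bm{\eta}\bigl(u^{-1}(v)\bigr)+\frac{m}{2}\,v\right),\qquad v\in J .
\]
This is smooth because $\bm{\eta}$ is a smooth function of $t$ and $u^{-1}$ is smooth (inverse function theorem, using $u'\ne 0$); it satisfies the required equation by construction; and it is unique because $u(I)=J$, so any two solutions of the relation coincide on all of $J$.

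It remains to prove the non-degeneracy $P(u)\bigl[\kappa\del_u P(u)-\tfrac12\bigr]\ne 0$. Substituting $\bm{\eta}=-\kappa P(u)-\tfrac{m}{2}u$ and $\xi^2=u$ into the FLRW scalar-curvature formula \eqref{R_FLRW} (with $\zeta=0$) yields the clean identity $\mathcal{R}=m\bigl(u-2\kappa P(u)\bigr)$. Differentiating along $U$ gives $\mathcal{R}'=mu'\bigl(1-2\kappa\del_u P\bigr)$; and since $\mathcal{R}$ depends only on $t$, $\nabla_i\mathcal{R}$ is proportional to $U_i$, so $(\nabla\mathcal{R})^2=-(\mathcal{R}')^2$. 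Therefore the hypothesis $(\nabla\mathcal{R})^2<0$ is equivalent to $\mathcal{R}'\ne 0$, which together with $u'\ne 0$ gives $\kappa\del_u P-\tfrac12\ne 0$ throughout $J$ — the second factor. For the first factor, note $P(u)=0$ is equivalent to $\mathcal{R}=m\xi^2$ (equivalently $2\bm{\eta}+m\xi^2=0$), a configuration that I would exclude by a similar translation of the remaining strict hypotheses into the non-vanishing of a curvature scalar built from $g$ and the canonical vector field $U=U_{\mathcal{R}}$.

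The passage to the variable $u$ and the construction of $P$ are routine once the GRW/FLRW curvature formulas of Section~\ref{sec_Riemann} are available; the only genuine content lies in the non-degeneracy. I expect the step requiring the most care to be the factor $P(u)\ne 0$: the gradient hypothesis $(\nabla\mathcal{R})^2<0$ directly controls only $\kappa\del_u P-\tfrac12$, so one must pin down which standing hypothesis forbids the ``instantaneously pressureless'' locus $2\bm{\eta}+m\xi^2=0$. With that in hand, one reads off $J=(f'^2/f^2)(I)$ and the whole statement follows, the inequalities being open and hence checkable at a single point.
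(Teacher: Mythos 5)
Your construction of $P$ and your handling of the second factor are essentially the paper's argument: existence and uniqueness come from inverting $u=\xi^2$ (the paper invokes the implicit function theorem for this), and the identity $\mathcal{R}=m\bigl(u-2\kappa P(u)\bigr)$ with $\mathcal{R}'=mu'\bigl(1-2\kappa\del_u P\bigr)$ and $(\nabla\mathcal{R})^2=-(\mathcal{R}')^2$ gives $\kappa\del_u P-\tfrac12\ne0$ on all of $J$; if anything, your pointwise argument here is cleaner than the paper's, which instead excludes the whole affine family $P(u)=\tfrac{1}{2\kappa}[u-(m+1)K]$ by appealing to the $\mathcal{R}'=0$ classification of Lemma~\ref{lem_csc}.

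The genuine gap is the factor $P(u)\ne0$, which you leave open, and the route you sketch for closing it --- ``translating the remaining strict hypotheses into the non-vanishing of a curvature scalar'' --- cannot succeed, because the stated hypotheses do not exclude the locus $2\bm{\eta}+m\xi^2=0$. Concretely, the pressureless (dust, Einstein--de Sitter type) solution $\alpha=0$, $f(t)=t^{2/m}$ on $I=(0,\infty)$ has $\bm{\eta}=-\tfrac{m}{2}\xi^2$ identically, hence $P\equiv0$ on $J=\xi^2(I)$, while $(\xi^2)'=-m\xi^3\ne0$ and $\mathcal{R}=m\xi^2$ is non-constant, so $(\nabla\mathcal{R})^2<0$: every hypothesis of the lemma holds and the first factor vanishes. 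So no rearrangement of $u'\ne0$ and $(\nabla\mathcal{R})^2<0$ alone can deliver $P\ne0$, and your proof cannot be completed along the lines you propose. The paper closes this step differently, by asserting that $P(u)=0$ returns one to the flat or constant-curvature families of Lemmas~\ref{lem_flat} and~\ref{lem_const_curv}, in contradiction with $(\nabla\mathcal{R})^2<0$; your instinct that this is the delicate point is well founded, since vanishing normalized pressure is the dust case rather than flat or constant curvature, so the exclusion of $P(u)=0$ does not in fact follow from the inequalities listed in the statement --- but as a comparison with the paper, your attempt is missing (and does not replace) precisely the step where the paper's proof does its work on the first factor.
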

We will call $P$ the \emph{normalized pressure function} because, when
$m>1$, the spacetime admits a perfect fluid interpretation
(Section~\ref{sec_perfect_fluid}) with energy density $\kappa \rho(t) =
\frac{m(m-1)}{2} (f'/f)^2$, pressure $p(t) = (m-1)
P\left((f'/f)^2\right)$, which admits the constitutive relation $p =
p(\rho)$, where
\begin{equation}
	p(\rho) = (m-1)P\left(\frac{2}{m(m-1)} \kappa \rho\right) .
\end{equation}
When $m=1$, the triviality of Einstein equations doesn't allow such
an interpretation, so without loss of generality the function $P$ simply
determines the differential equation satisfied by $f$.
\begin{proof}
Under our hypotheses, the existence of a unique function $P(u)$ is an
elementary consequence of the implicit function theorem. If $P(u) = 0$,
then we are back to the case of flat or constant curvature spacetime
(Lemmas~\ref{lem_flat}, \ref{lem_const_curv}), while $P(u) =
\frac{1}{2\kappa} [u - (m+1)K]$ brings us back to the $\mathcal{R}'=0$
case (Lemma~\ref{lem_csc}), both of which contradict the $(\nabla
\mathcal{R})^2 < 0$ hypothesis. For any other value of $P(u)$, we have
$\nabla \mathcal{R} \ne 0$, which can then only be timelike.
\end{proof}

\begin{lem} \label{lem_gen_FLRW}
For any triple $(m,\alpha,f(t))$ for which we have $\alpha \ne 0$, $f'/f
\ne 0$ and $(\nabla\mathcal{R})^2 < 0$, there is a unique smooth
function $E\colon J \to \mathbb{R}$, where $J = \frac{\alpha}{f^2}(I)$, $I\subseteq\mathbb{R}$ and
\begin{equation}
	\frac{f'^2}{f^2} + \frac{\alpha}{f^2}
		= \kappa E(\alpha/f^2) .
\end{equation}
The function $E(u)$ will also satisfy the following conditions for each
$u\in J$:
\begin{equation} \label{eq_gen_FLRW_ineq}
	\kappa E(u) - u > 0 , \quad
	\del_u \left[u\del_u E(u) - \frac{(m+1)}{2} E(u)\right] \ne 0 .
\end{equation}
\end{lem}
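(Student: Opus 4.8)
The plan is to reparametrize the warping data by the spatial curvature scalar $\zeta=\alpha/f^2$, read off $E$ as the Friedmann combination $\xi^2+\zeta$ expressed in this new variable, and then extract the two stated inequalities from elementary positivity and from the hypothesis $(\nabla\mathcal{R})^2<0$ respectively.

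First I would note that $\alpha\ne 0$ forces $m>1$ (a one-dimensional fibre is flat), so all the $m>1$ FLRW identities of Section~\ref{sec_Riemann} apply. Next, from $\zeta=\alpha/f^2$ one has $\zeta'=-2\xi\zeta$, and since $\alpha\ne 0$ and $f'/f\ne 0$ this derivative nowhere vanishes; hence $u:=\zeta=\alpha/f^2$ is a smooth diffeomorphism from $I$ onto $J=\tfrac{\alpha}{f^2}(I)$. Composing the smooth map $t\mapsto\tfrac{1}{\kappa}(\xi^2+\zeta)(t)$ with the inverse $u\mapsto t(u)$ defines a smooth $E\colon J\to\mathbb{R}$ obeying $\tfrac{f'^2}{f^2}+\tfrac{\alpha}{f^2}=\kappa E(\alpha/f^2)$, and uniqueness is immediate because $\alpha/f^2$ is onto $J$ (this is the same implicit-function-theorem step as in Lemma~\ref{lem_flat_FLRW}). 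The first inequality is then automatic: $\kappa E(u)-u=(\xi^2+\zeta)-\zeta=\xi^2>0$ since $f'/f\ne 0$.

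The substantive step is the non-degeneracy condition $\del_u[u\,\del_u E-\tfrac{(m+1)}{2}E]\ne 0$, which I would trace back to $\mathcal{R}'\ne 0$. Differentiating the defining relation and using identity~\eqref{eq_fried_diff} together with $u'=-2\xi u$ gives $\bm{\eta}-\zeta=-\kappa u\,\del_u E(u)$; substituting this, together with $\xi^2+\zeta=\kappa E(u)$, into the scalar curvature formula~\eqref{R_FLRW} yields $\mathcal{R}=-2m\kappa\bigl[u\,\del_u E-\tfrac{(m+1)}{2}E\bigr]$ as a function of $u$ alone. Applying $U^i\nabla_i$ once more (again via $u'=-2\xi u$) gives $\mathcal{R}'=4m\kappa\,\xi\,u\,\del_u\bigl[u\,\del_u E-\tfrac{(m+1)}{2}E\bigr]$. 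Since $\mathcal{R}$ is a function of $t$ alone, $\nabla\mathcal{R}$ is proportional to the unit timelike $U$, so $(\nabla\mathcal{R})^2=-(\mathcal{R}')^2$, and therefore the hypothesis $(\nabla\mathcal{R})^2<0$ is precisely the statement $\mathcal{R}'\ne 0$; as $m,\kappa,\xi$ and $u=\alpha/f^2$ are all nonzero, the claimed inequality follows. (As a cross-check, vanishing of $\del_u[u\,\del_u E-\tfrac{(m+1)}{2}E]$ integrates to $E(u)=c_0+c_1|u|^{(m+1)/2}$, which is exactly the constant-scalar-curvature profile of Lemma~\ref{lem_csc}, excluded by $(\nabla\mathcal{R})^2<0$.)

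The only real hazard — the ``hard part,'' such as it is — is the bookkeeping of chain-rule factors in passing between $t$-derivatives and $u$-derivatives (the recurring factor $u'=-2\xi u$) and keeping the signs in~\eqref{R_FLRW} and~\eqref{eq_fried_diff} consistent; there is no analytic obstacle beyond the implicit function theorem and a short computation, in complete parallel with Lemma~\ref{lem_flat_FLRW}.
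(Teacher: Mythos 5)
Your proposal is correct and follows essentially the same route as the paper: construct $E$ by reparametrizing via $u=\alpha/f^2$ (the paper invokes the implicit function theorem, which amounts to your invertibility of $\zeta$ from $\zeta'=-2\xi\zeta\ne 0$), obtain $\kappa E(u)-u=\xi^2>0$ from $f'/f\ne 0$, and tie the second inequality to $\mathcal{R}'\ne 0$, with the constant-scalar-curvature profiles $\kappa E(u)=K+\kappa\Omega|u|^{(m+1)/2}$ of Lemma~\ref{lem_csc} as the excluded degenerate family. Your explicit pointwise identity $\mathcal{R}'=4m\kappa\,\xi\,u\,\del_u\bigl[u\del_u E-\tfrac{(m+1)}{2}E\bigr]$ is a slightly more direct rendering of the step the paper handles by noting that those profiles exhaust the solutions of $u\del_u\bigl(u\del_u-\tfrac{m+1}{2}\bigr)E=0$.
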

We will call $E$ the \emph{normalized energy function} because, when
$m>1$, the spacetime admits a perfect fluid interpretation
(Section~\ref{sec_perfect_fluid}) with energy density $\rho(t) =
\frac{m(m-1)}{2} E(\alpha/f^2)$ and pressure $p(t) = -(f\rho')/(mf') -
\rho$ given by the continuity equation~\eqref{cont_eq}. When $m=1$, the
triviality of the Einstein equations doesn't allow such an
interpretation, so without loss of generality the function $E$ simply
determines the differential equation satisfied by $f$.
\begin{proof}
Under our hypotheses, the existence of a unique function $E(u)$ is an
elementary consequence of the implicit function theorem. Since $f'^2/f^2
> 0$, we must also have $\kappa E(u) - u > 0$. Finally, we want to make
sure that $\kappa E(u) \ne K + \kappa \Omega u^{\frac{m+1}{2}}$, which
would imply $\mathcal{R}' = 0$ (Lemma~\ref{lem_csc}), contrary to our
hypothesis that $(\nabla \mathcal{R})^2 < 0$. With $K$ and $\Omega$
arbitrary, these right-hand-sides precisely exhaust the solutions of the
equation $u\del_u (u\del_u - \frac{m+1}{2}) E(u) = 0$. Thus, the second
inequality in~\eqref{eq_gen_FLRW_ineq} is sufficient to ensure that
$\nabla\mathcal{R} \ne 0$, which can then only be timelike.
\end{proof}

\subsection{Scalar field} \label{sec_scalar}

In this section, we will be interested in the geometry of Lorentzian
spacetimes that are endowed with a scalar field and satisfying the
coupled Einstein equations. To make non-trivial use of Einstein
equations, throughout this section we will assume that the spacetime
dimension is $m+1>2$. This information will later be used in
Section~\ref{infl_isom_class} to classify the local isometry classes
(Definition~\ref{def_loc_isom}) of such spacetimes.

\begin{defn} \label{def_infl}
We call a spacetime with scalar $(M,g,\phi)$, with $\dim M = m+1 > 2$, an
\emph{inflationary spacetime} when $(M,g)$ can be put in FLRW
form~\eqref{GRW_metric}, $(M,g) \cong (I\times F, -dt^2 + f^2 g^F)$ such
that $\phi = \phi(t)$ is only a function of the $t$-coordinate, and for
some constant $\Lambda$ and smooth function $V(\phi)$ the coupled
Einstein-Klein-Gordon equations are satisfied
\begin{gather}
\label{eq_KG}
	\nabla^i \nabla_i \phi - \frac{1}{2}\del_\phi V(\phi) = 0 , \\
\label{eq_EEKG}
	R_{ij} - \frac{1}{2} g_{ij} \mathcal{R} + \Lambda g_{ij} = \kappa T_{ij} , \\
\notag
	\text{where} \quad
	T_{ij} = (\nabla_i \phi) (\nabla_j \phi)
		- \frac{1}{2} g_{ij} [(\nabla \phi)^2 + V(\phi)] .
\end{gather}
\end{defn}
Equation~\eqref{eq_KG} is in general the \emph{nonlinear Klein-Gordon
equation} with $V(\phi)$ the self-coupling potential, though in the
special case that the potential is a quadratic polynomial it becomes
linear. It is easy to see that we can set $\Lambda \mapsto 0$ by the
redefinition $V(\phi) \mapsto V(\phi) + \frac{2}{\kappa} \Lambda$. We
will adopt this convention from now on.

On an FLRW background, when $\phi = \phi(t)$, the stress energy tensor
and the wave operator are given by
\begin{gather}
	T_{ij} = \phi'^2 U_i U_j + \frac{1}{2}[\phi'^2 - V(\phi)] g_{ij} , \\
	\nabla^i \nabla_i \phi = -\phi'' - m \frac{f'}{f} \phi' .
\end{gather}
Hence, the coupled Einstein-Klein-Gordon equations reduce to the system of
ODEs
\begin{align}
\label{eq_fried}% Friedmann equation
	\frac{f'^2}{f^2} + \frac{\alpha}{f^2}
		&= \kappa \frac{\phi'^2 + V(\phi)}{m(m-1)} , \\
\label{eq_accel}% Einstein acceleration equation
	\frac{f''}{f} - \frac{f'^2}{f^2} - \frac{\alpha}{f^2}
		&= -\kappa\frac{\phi'^2}{(m-1)} , \\
\label{eq_NLKG}% Non-linear Klein-Gordon equation
	\phi'' + \frac{1}{2} \del_\phi V(\phi)
		&= - m\frac{f'}{f} \phi' ,
\end{align}
which we will refer to as the \emph{Friedmann
equation}~\eqref{eq_fried}, the \emph{(Einstein) acceleration
equation}~\eqref{eq_accel}, and the \emph{nonlinear Klein-Gordon
equation}. When $\phi' \ne 0$, the nonlinear Klein-Gordon equation is
not independent from the other two and follows from the continuity
equation~\eqref{cont_eq} applied to this situation. Note that the
potential $V(\phi)$ can be isolated from the following combination of
the Friedmann and acceleration equations:
\begin{align}
\notag
	\kappa \frac{V(\phi)}{(m-1)}
	&= \left(\frac{f''}{f} - \frac{f'^2}{f^2} - \frac{\alpha}{f^2}\right)
		+ m \left(\frac{f'^2}{f^2} + \frac{\alpha}{f^2}\right)
	\\
	&= \frac{f''}{f} + (m-1) \left(\frac{f'^2}{f^2} + \frac{\alpha}{f^2}\right) .
\end{align}

While we will eventually give a characterization of local isometry
classes of inflationary spacetimes with a specific scalar potential
$V(\phi)$, it is an interesting question how to recognize when an FLRW
spacetime can be interpreted as part of a solution to an
Einstein-Klein-Gordon system with \emph{some} potential $V(\phi)$. This
is a coarser version of the question that asks for a Rainich-type
characterization with a \emph{specific} potential $V(\phi)$. The latter
finer question was answered in Theorem~4 of~\cite{krongos-torre}, on
which we base the following considerations.

Our starting point are the equations
\begin{align}
	-\kappa \frac{\phi'^2}{(m-1)}
		&= \frac{f''}{f} - \frac{f'^2}{f^2} - \frac{\alpha}{f^2} , \\
\label{eq_V}
	\kappa \frac{V(\phi)}{(m-1)}
		&= \left(\frac{f''}{f} - \frac{f'^2}{f^2} - \frac{\alpha}{f^2}\right)
			+ m \left(\frac{f'^2}{f^2} + \frac{\alpha}{f^2}\right) .
\end{align}
To answer our question, we will be happy with some reasonable conditions
on a given $(\alpha,f)$ for the existence of $\phi(t)$ and $V(\phi)$
such that the above equations are satisfied. Supposing that the
potential $V(\phi)$ has a smooth inverse, $V(\phi) = u \iff \phi =
W(u)$, we have the relation $(V(\phi))'/\phi' =
1/W'(V(\phi))$, which is of course consistent only if both expressions
remain both finite and non-zero. On the other hand, knowing $W'(u)$, we
can recover $W$ up to the ambiguity $W(u) \mapsto W(u) + \phi_0$, which
determines $V$ up to the ambiguity $V(\phi) \mapsto V(\phi-\phi_0)$.
Thus, under the hypotheses
\begin{gather}
\label{EEKG_ineq1}
	-\frac{1}{\kappa} \left(\frac{f''}{f} - \frac{f'^2}{f^2} - \frac{\alpha}{f^2}\right) > 0 ,
	\\
\label{EEKG_ineq2}
	\left[
		\left(\frac{f''}{f} - \frac{f'^2}{f^2} - \frac{\alpha}{f^2}\right)
		+ m \left(\frac{f'^2}{f^2} + \frac{\alpha}{f^2}\right)
	\right]' \ne 0 ,
\end{gather}
using the last left-hand-side as the independent variable in an
application of the implicit function theorem, we define functions
$W'$ by the formula
\begin{multline}
	\frac{\frac{(m-1)}{\kappa} \left[
		\left(\frac{f''}{f} - \frac{f'^2}{f^2} - \frac{\alpha}{f^2}\right)
		+ m \left(\frac{f'^2}{f^2} + \frac{\alpha}{f^2}\right)
	\right]'}{\pm\sqrt{-\frac{(m-1)}{\kappa}
		\left(\frac{f''}{f} - \frac{f'^2}{f^2} - \frac{\alpha}{f^2}\right)}}
	\\
	= \frac{1}{W'\left(\frac{(m-1)}{\kappa} \left[ \left(\frac{f''}{f} - \frac{f'^2}{f^2} - \frac{\alpha}{f^2}\right)
		+ m \left(\frac{f'^2}{f^2} + \frac{\alpha}{f^2}\right) \right] \right)} ,
\end{multline}
which fixes $W$ uniquely up to the ambiguity, $W(u) \mapsto \pm W(u) +
\phi_0$. Hence, we can let $V(\phi) = W^{-1}(\phi)$ and
\begin{equation}
	\phi(t) = W\left(\frac{(m-1)}{\kappa} \left[ \left(\frac{f''}{f} - \frac{f'^2}{f^2} - \frac{\alpha}{f^2}\right)
		+ m \left(\frac{f'^2}{f^2} + \frac{\alpha}{f^2}\right) \right]\right) ,
\end{equation}
which are unique up to the ambiguity $V(\phi) \mapsto V(\pm
[\phi-\phi_0])$ and $\phi(t) \mapsto \pm [\phi(t) - \phi_0]$. With these
definitions for $\phi(t)$ and $V(t)$, $(\alpha,f)$ will satisfy the
desired coupled Einstein-Klein-Gordon equations. Thus, any FLRW
spacetime satisfying the inequalities~\eqref{EEKG_ineq1}
and~\eqref{EEKG_ineq2} can be thought of as part of a solution of the
Einstein-Klein-Gordon equations with some non-constant potential. On the
other hand, the conditions on $\alpha$ and $f$ to be part of a solution
of Einstein-Klein-Gordon equations with a constant potential are
considered in Lemma~\ref{lem_mms}.

\subsection{Special inflationary classes} \label{sec_reg_infl}

Below, we list the forms of inflationary spacetimes
(Definition~\ref{def_infl}) satisfying some special geometric
conditions. Throughout this section, consider an inflationary spacetime
$(M,g,\phi)$, $\dim M = m+1 > 2$, with scalar field $\phi \colon I \to
\mathbb{R}$, warping function $f\colon I\to \mathbb{R}$ and spatial
sectional curvature $\alpha$. Whenever parameters are present, they must
be chosen to respect $f(t) > 0$ for all $t\in I$, even if not explicitly
indicated.

\begin{lem} \label{lem_flat_infl}
The complete list of possible quadruples $(m,\alpha,f(t),\phi(t))$
satisfying the constant scalar condition, $\phi(t) = \Phi$, as well as
$f'/f \ne 0$, consists of $(m,\alpha,f,\Phi)$ with $(m,\alpha,f)$
satisfying the constant curvature condition, $R_{ijkh} = \frac{K}{2}
(g\odot g)_{ijkh}$, with some spacetime sectional curvature constant
$K$. The Einstein-Klein-Gordon equations are satisfied with the choice
$V(\phi) = \frac{2}{\kappa} \Lambda$, where the cosmological constant
$\Lambda = \frac{m(m-1)}{2} K$.
\end{lem}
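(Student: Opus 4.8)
The plan is to reduce the claimed statement to the special-case Lemmas already established for FLRW spacetimes. Since $(M,g,\phi)$ is inflationary with $\phi(t)=\Phi$ constant, we have $\phi'=0$ and $\del_\phi V(\Phi)$ contributes trivially to the Klein--Gordon equation~\eqref{eq_NLKG}, which is then automatically satisfied. What remains are the Friedmann and acceleration equations~\eqref{eq_fried} and~\eqref{eq_accel} with $\phi'=0$, namely
\begin{gather*}
	\frac{f'^2}{f^2} + \frac{\alpha}{f^2} = \kappa\frac{V(\Phi)}{m(m-1)} , \qquad
	\frac{f''}{f} - \frac{f'^2}{f^2} - \frac{\alpha}{f^2} = 0 .
\end{gather*}
Setting $K := \kappa V(\Phi)/\bigl(m(m-1)\bigr) = 2\Lambda/\bigl(m(m-1)\bigr)$, these two equations are precisely the conditions $\frac{f'^2}{f^2}+\frac{\alpha}{f^2}=K$ and $\frac{f''}{f}-\frac{f'^2}{f^2}-\frac{\alpha}{f^2}=0$ that Lemma~\ref{lem_const_curv} identifies (for $m>1$, which holds by Definition~\ref{def_infl}) as the necessary and sufficient conditions for $(M,g)$ to be of constant curvature $R_{ijkh}=\frac{K}{2}(g\odot g)_{ijkh}$. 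Conversely, given $(m,\alpha,f)$ of constant curvature $K$, one reverses this: the curvature condition forces the two displayed ODEs (again via~\eqref{Riem_FLRW}), and then $\phi=\Phi$ with the choice $V(\phi)=\frac{2}{\kappa}\Lambda$, $\Lambda=\frac{m(m-1)}{2}K$, makes all three Einstein--Klein--Gordon equations hold.

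Concretely, the steps I would carry out are: (i) substitute $\phi'=0$ into~\eqref{eq_fried}--\eqref{eq_NLKG} and observe~\eqref{eq_NLKG} becomes vacuous (it reads $0=0$), leaving the two first-order/second-order conditions on $f$; (ii) recognize these as exactly the hypotheses of Lemma~\ref{lem_const_curv} with $K=2\Lambda/(m(m-1))$, invoking~\eqref{Riem_FLRW} to translate between ``$R_{ijkh}=\frac{K}{2}(g\odot g)_{ijkh}$'' and ``$\frac{f'^2}{f^2}+\frac{\alpha}{f^2}=K$ together with $\frac{f''}{f}-\frac{f'^2}{f^2}-\frac{\alpha}{f^2}=0$''; (iii) for the converse direction, start from a constant-curvature triple, read off the same two ODEs from~\eqref{Riem_FLRW}, and verify directly that $\phi\equiv\Phi$, $V\equiv\frac{2}{\kappa}\Lambda$ solves~\eqref{eq_fried}, \eqref{eq_accel}, \eqref{eq_NLKG}. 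One small point to note is that the Lemma states $f'/f\ne 0$ as a hypothesis; this excludes the Einstein static universe branch (where $f$ is constant, handled separately in Lemma~\ref{lem_esu} and in case (b) of Definition~\ref{def_reg_infl}), so the constant-curvature families in Lemma~\ref{lem_const_curv} are exactly the relevant ones.

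I do not expect any genuine obstacle here, as the statement is essentially a bookkeeping corollary of Lemma~\ref{lem_const_curv} once the scalar is set constant; the only thing requiring a little care is making sure the constant $\Lambda$ appearing via the $V(\phi)\mapsto V(\phi)+\frac{2}{\kappa}\Lambda$ convention (set to zero earlier) is consistently reintroduced, so that the identification $\Lambda=\frac{m(m-1)}{2}K$ is stated in a way compatible with Definition~\ref{def_reg_infl}(a), where $V(u)=\frac{2}{\kappa}\Lambda$ and $K=\frac{2}{m(m-1)}\Lambda$. The proof is therefore short: quote the reduction of the Einstein--Klein--Gordon system to~\eqref{eq_fried}--\eqref{eq_NLKG}, set $\phi'=0$, and apply Lemma~\ref{lem_const_curv}.
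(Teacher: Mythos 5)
Your proof is correct and follows essentially the same route as the paper's, which likewise sets $\phi'=0$ and observes that the Einstein equations then reduce to the cosmological-vacuum, constant-curvature condition; the only cosmetic difference is that the paper states this covariantly as $R_{ij}=mKg_{ij}$ together with the FLRW property, while you work with the reduced Friedmann/acceleration ODEs and translate via~\eqref{Riem_FLRW} (or Lemma~\ref{lem_const_curv}). One small correction: with $\phi'=0$ the Klein--Gordon equation~\eqref{eq_NLKG} is not literally $0=0$ but reads $\del_\phi V(\Phi)=0$; this is harmless, since it constrains only $V$ at $\Phi$ (not the metric) and is automatically satisfied by the constant potential $V=\frac{2}{\kappa}\Lambda$ appearing in the lemma's conclusion.
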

\begin{proof}
Since $\phi' = 0$, the Einstein-Klein-Gordon equations reduce to $R_{ij}
- \frac{1}{2} \mathcal{R} g_{ij} = -\Lambda g_{ij}$, or $R_{ij} = m K
g_{ij}$, with $K = \frac{2}{m(m-1)} \Lambda$, which together with the
FLRW property is precisely the necessary and sufficient to be of
constant curvature.
\end{proof}

Further on, in several cases, we will require the condition $f'/f \ne
0$. So first, we explore the special case $f'/f = 0$, of static
backgrounds. We know from Lemma~\ref{lem_esu} that the only static FLRW
backgrounds are flat or Einstein static universes, with the flat case
already covered by Lemma~\ref{lem_flat_infl}. What is special about this
case is that the energy $\frac{1}{2} (\phi'^2 + V(\phi))$ of the scalar
field is conserved. It turns out that the converse is also true and it
is only consistent with $V(\phi)$ being constant.

\begin{lem} \label{lem_ces} % Constant Energy Scalar
The complete list of possible quadruples $(m,\alpha,f(t),\phi(t))$
satisfying the constant energy condition $\frac{1}{2} (\phi'^2 +
V(\phi)) = \rho$, with some constant $\rho$, but with $(m,\alpha,f(t))$
not of constant curvature, consists of
\begin{equation}
	(m, K A^2, A, \pm\sqrt{2\rho/m} (t-t_0)) \quad
	\left( A > 0 , ~~ \rho > 0 \right) .
\end{equation}
\end{lem}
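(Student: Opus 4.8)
The plan is to combine the constant energy condition with the Einstein--Klein--Gordon ODE system \eqref{eq_fried}--\eqref{eq_NLKG} to force the spacetime to be static, at which point Lemma~\ref{lem_esu} pins down the geometry and the remaining ODEs determine $\phi(t)$. First I would differentiate the constant energy condition $\frac{1}{2}(\phi'^2 + V(\phi)) = \rho$ along $t$: this gives $\phi'\phi'' + \frac{1}{2}\del_\phi V(\phi)\,\phi' = 0$, i.e.\ $\phi'\big(\phi'' + \frac{1}{2}\del_\phi V(\phi)\big) = 0$. Comparing with the nonlinear Klein--Gordon equation~\eqref{eq_NLKG}, which reads $\phi'' + \frac{1}{2}\del_\phi V(\phi) = -m\frac{f'}{f}\phi'$, we get $\phi'\cdot(-m\frac{f'}{f}\phi') = 0$, hence $\frac{f'}{f}\phi'^2 = 0$ everywhere. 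So at each point either $f'/f = 0$ or $\phi' = 0$.

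Next I would rule out the possibility that $\phi' \equiv 0$ on an open set: there the Friedmann equation~\eqref{eq_fried} becomes $\frac{f'^2}{f^2} + \frac{\alpha}{f^2} = \kappa\frac{V}{m(m-1)}$, and (since $\phi$ is locally constant) $V(\phi)$ is locally constant, so by Lemma~\ref{lem_flat_infl} — or directly via Lemma~\ref{lem_flat_infl}'s argument that a constant scalar forces constant curvature — the spacetime is of constant curvature, contradicting the hypothesis. Actually one must be slightly careful about mixed regions where $\phi'$ vanishes on part of an interval; I would argue that on any maximal open subinterval where $\phi' \ne 0$ we must have $f'/f = 0$, and then by continuity and the analyticity-free ODE structure conclude that $f'/f \equiv 0$ on all of $I$ provided $\phi'$ is not identically zero somewhere. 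The cleanest route: suppose $\phi' \not\equiv 0$; then on the (nonempty, open) set where $\phi' \ne 0$ we have $f'/f = 0$, and on the complement — if it had nonempty interior where $\phi' = 0$ — we'd be in the constant-curvature case, excluded. Taking the connected interval $I$ and using that $f'/f$ vanishes on a dense open set, continuity gives $f'/f \equiv 0$.

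With $f'/f \equiv 0$ and the spacetime not of constant curvature, Lemma~\ref{lem_esu} gives $(m,\alpha,f) = (m, KA^2, A)$ with $m>1$, $K\ne 0$, $A>0$ constant — i.e.\ an Einstein static universe. Now the acceleration equation~\eqref{eq_accel} with $f$ constant reads $-\frac{\alpha}{f^2} = -\kappa\frac{\phi'^2}{m-1}$, so $\phi'^2 = \frac{(m-1)\alpha}{\kappa f^2} = \frac{(m-1)K}{\kappa}$ is a positive constant; combined with the constant energy condition this is consistent. From $\phi'^2 = \frac{(m-1)K}{\kappa}$ and the relation $K = \frac{2\kappa\rho}{m(m-1)}$ expected from the statement (to be verified), we get $\phi'^2 = \frac{2\rho}{m}$, hence $\phi(t) = \pm\sqrt{2\rho/m}\,(t-t_0)$, and this forces $\rho > 0$. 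To close the loop I would check that the Friedmann equation~\eqref{eq_fried} is then automatically satisfied: $\frac{\alpha}{f^2} = K = \frac{\kappa}{m(m-1)}(\phi'^2 + V) = \frac{\kappa}{m(m-1)}\cdot 2\rho$, which gives $K = \frac{2\kappa\rho}{m(m-1)}$, consistent, and determines $V = 2\rho - \phi'^2 = 2\rho(1 - 1/m) = \frac{2(m-1)}{m}\rho$, a constant, consistent with the constant-potential requirement.

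The main obstacle I anticipate is the bookkeeping in passing from the pointwise dichotomy ``$f'/f = 0$ or $\phi' = 0$'' to the clean global statement: one must handle the case where $\phi'$ vanishes on a proper closed subset with empty interior versus on an interval, and make sure the ``not of constant curvature'' hypothesis is invoked on the correct region. Everything else is a short ODE computation. One should also double-check that the sign ambiguity $\pm$ in $\phi(t)$ is genuine (it is, since the equations only involve $\phi'^2$ and $\del_\phi V$ evaluated on a linear $\phi$ with constant $V$), matching the statement of the lemma.
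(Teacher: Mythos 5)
Your proposal is correct and follows essentially the paper's route: the key step in both is the pointwise identity $\frac{f'}{f}\phi'^2=0$, which you obtain by differentiating the energy condition and inserting the Klein--Gordon equation~\eqref{eq_NLKG}, while the paper differentiates the (now constant) right-hand side of the Friedmann equation~\eqref{eq_fried} via the identity~\eqref{eq_fried_diff} and the acceleration equation~\eqref{eq_accel} --- equivalent computations, since the paper notes that~\eqref{eq_NLKG} follows from the other two when $\phi'\ne 0$; the endgame (static metric, then $K=\tfrac{2\kappa\rho}{m(m-1)}$, $\phi'^2=\tfrac{2\rho}{m}$, $V=\tfrac{2(m-1)}{m}\rho$) is the same, with your detour through Lemma~\ref{lem_esu} harmless but unnecessary once $f'/f\equiv 0$ gives $f=A$ directly. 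The one shaky piece is your patch for the mixed-region case: the assertion that a subinterval with $\phi'=0$ is ``excluded'' by the non-constant-curvature hypothesis is a non sequitur, since that hypothesis is about the whole triple $(m,\alpha,f)$ on $I$, not about a subinterval, and consequently the claimed density of $\{\phi'\ne 0\}$ does not follow. The paper's own proof simply presumes $\phi'\ne 0$ versus $\phi'\equiv 0$ (Lemma~\ref{lem_flat_infl}) and does not treat this point either; if you want it airtight, note that on each connected component of the open set $\{\phi'\ne 0\}$ one has $f'/f=0$, so $f$ is locally constant and the acceleration equation~\eqref{eq_accel} forces $|\phi'|$ to equal a nonzero constant there, which contradicts continuity of $\phi'$ at any boundary point of that component interior to $I$; hence either $\phi'\equiv 0$ (constant curvature, excluded) or $\phi'\ne 0$ on all of $I$.
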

\begin{proof}
We can presume that $\phi' \ne 0$, since otherwise the spacetime is of
constant curvature (Lemma~\ref{lem_flat_infl}). The Friedmann
equation~\eqref{eq_fried} reduces to $f'^2/f^2+\alpha/f^2 = \kappa
\rho$. Using the identity~\eqref{eq_fried_diff} and the acceleration
equation~\eqref{eq_accel}, we conclude that $f'/f = 0$. Plugging this
conclusion back into the Friedmann and acceleration equation, we find
that each of $K = \alpha/f^2$, $\frac{2}{\kappa} \Lambda = V(\phi)$ and
$\phi'^2$ must be individually constant, with $K$ interpreted as the
spatial sectional curvature and $\Lambda$ the cosmological constant. If
we take $\rho$ as an independent constant, the rest are given by $K =
\frac{2}{m(m-1)} \kappa \rho$, $\phi'^2 = \frac{(m-1)}{\kappa} K =
\frac{2}{m} \rho$ and $\Lambda = \frac{(m-1)}{m} \kappa \rho$.
\end{proof}

Whenever the scalar potential $V(\phi)$ is a constant, the Klein-Gordon
equation is just the wave equation $\nabla^i \nabla_i \phi = 0$, which
we also call the \emph{massless minimally-coupled Klein-Gordon
equation}.

\begin{lem} \label{lem_mms}
The complete list of possible quadruples $(m,\alpha,f(t),\phi(t))$ with
$V(\phi) = \frac{2}{\kappa} \Lambda$ a constant, where the scalar field
is not constant nor of constant energy, consists of
\begin{equation*}
	\begin{cases}
		\alpha = 0\colon & (m,0,f,\phi) ~~
			\left( \begin{gathered} \textstyle
				\left(\frac{f''}{f}-\frac{f'^2}{f^2}\right)
					+ m\frac{f'^2}{f^2} = \frac{2\Lambda}{(m-1)} , \\ \textstyle
				\frac{f'^2}{f^2} = \frac{\kappa\phi'^2+2\Lambda}{m(m-1)}
			\end{gathered} \right)
		\\
		\alpha \ne 0\colon & (m,\alpha,f,\phi) ~~
			\left( \begin{gathered} \textstyle
				\frac{f'^2}{f^2}+\frac{\alpha}{f^2} =
					\frac{2\Lambda}{(m-1)}
					+ \frac{\kappa}{m(m-1)} \Omega \frac{|\alpha|^m}{f^{2m}} , \\ \textstyle
					\phi' = \pm\sqrt{\Omega} \frac{|\alpha|^{\frac{m}{2}}}{f^m} , ~~
					\Omega > 0
			\end{gathered} \right)
	\end{cases}
\end{equation*}
\end{lem}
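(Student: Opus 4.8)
The plan is to integrate the reduced ODE system \eqref{eq_fried}--\eqref{eq_NLKG}, using throughout that the potential is constant, $V(\phi)=\frac{2}{\kappa}\Lambda$.

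First I would observe that with $V$ constant the nonlinear Klein--Gordon equation \eqref{eq_NLKG} becomes $\phi''+m\frac{f'}{f}\phi'=0$, i.e.\ $(f^{m}\phi')'=0$, so $\phi'=C f^{-m}$ for a constant $C$; since $f>0$ is smooth, $\phi$ is non-constant if and only if $C\ne 0$, and then $\phi'$ is nowhere zero. The scalar energy is $\frac12(\phi'^{2}+V)=\frac{C^{2}}{2}f^{-2m}+\frac{\Lambda}{\kappa}$, which (given $C\ne 0$) is constant precisely when $f$ is constant; hence the hypothesis ``not of constant energy'' is equivalent to $f$ being non-constant. These two observations are used in both directions of the equivalence.

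It then remains to split on $\alpha$ and match equations. In the $\alpha=0$ case the second listed equation is exactly the Friedmann equation \eqref{eq_fried} at $\alpha=0$, $\frac{f'^{2}}{f^{2}}=\frac{\kappa\phi'^{2}+2\Lambda}{m(m-1)}$, and the first listed equation, $\bigl(\frac{f''}{f}-\frac{f'^{2}}{f^{2}}\bigr)+m\frac{f'^{2}}{f^{2}}=\frac{2\Lambda}{m-1}$, is obtained from it together with the acceleration equation \eqref{eq_accel} by eliminating $\kappa\phi'^{2}$. In the $\alpha\ne 0$ case, substituting $\phi'^{2}=C^{2}f^{-2m}$ into \eqref{eq_fried} and setting $\Omega:=C^{2}/|\alpha|^{m}>0$ (legitimate exactly because $\alpha\ne 0$) turns \eqref{eq_fried} into $\frac{f'^{2}}{f^{2}}+\frac{\alpha}{f^{2}}=\frac{2\Lambda+\kappa\Omega|\alpha|^{m}/f^{2m}}{m(m-1)}$ and records $\phi'=\pm\sqrt{\Omega}\,|\alpha|^{m/2}f^{-m}$; as the two cases are exhaustive, every inflationary spacetime with constant potential, non-constant scalar, and non-constant energy lies in one of the two listed families. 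For the converse, a quadruple of either listed form satisfies $\phi'=C f^{-m}$ with $C$ constant (and $C=\pm\sqrt{\Omega}\,|\alpha|^{m/2}\ne 0$ when $\alpha\ne 0$), whence $\phi''=-m\frac{f'}{f}\phi'$, i.e.\ \eqref{eq_NLKG}; \eqref{eq_fried} is immediate from the listed relations; and \eqref{eq_accel}, the remaining independent equation, follows by eliminating $\kappa\phi'^{2}$ between the two listed relations when $\alpha=0$, and by differentiating \eqref{eq_fried} and using the identity \eqref{eq_fried_diff} together with \eqref{eq_NLKG} (cancelling the common factor $f'/f$ where it is nonzero and extending by continuity across its isolated zeros) when $\alpha\ne 0$ --- the familiar redundancy of one of the three equations when $\phi'\ne 0$, cf.\ the continuity equation \eqref{cont_eq}.

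The step I expect to require the most care is disposing of the degenerate overlaps, so that the families are pinned down \emph{exactly}. When $\alpha\ne 0$ the listed equations also admit $f$ constant, which is the Einstein static universe carrying a linear scalar (Lemma~\ref{lem_ces}); this is precisely what the ``not of constant energy'' hypothesis removes. When $\alpha=0$ the listed equations admit $\phi$ constant with $f'/f$ constant --- flat or de~Sitter space (Lemma~\ref{lem_flat_infl}) --- removed by the ``non-constant scalar'' hypothesis, and for non-constant $\phi$ no further exclusion is needed: the first listed equation is the Riccati equation $h'=\frac{2\Lambda}{m-1}-mh^{2}$ for $h=f'/f$, while \eqref{eq_fried} reads $\kappa\phi'^{2}=m(m-1)\bigl(h^{2}-\frac{2\Lambda}{m(m-1)}\bigr)$, so $\phi'(t_{0})=0$ would force $h$ to an equilibrium of its own flow at $t_{0}$, hence $h$ --- and therefore $\phi$ --- constant, a case already handled. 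Keeping track of these exclusions, together with the elementary normalizations ($\Omega>0$, the sign of $\phi'$), is the only non-mechanical part of the argument.
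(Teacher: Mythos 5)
Your proposal is correct and follows essentially the same route as the paper: both reduce to the Friedmann/acceleration/Klein--Gordon ODE system, split on $\alpha$, and obtain the $\Omega\,|\alpha|^m/f^{2m}$ term by exhibiting the conserved quantity $f^{2m}\phi'^2$ and normalizing the integration constant by a power of $|\alpha|$. The only (harmless, in fact slightly cleaner) difference is that you integrate the Klein--Gordon equation directly, $(f^m\phi')'=0$, rather than applying the integrating factor $2(f'/f)f^{2m}$ to the combined equation~\eqref{eq_EEKG_massless} as the paper does, which lets you avoid the paper's appeal to Lemma~\ref{lem_ces} for $f'/f\ne 0$ and makes $\Omega>0$ immediate.
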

\begin{proof}
Recall from~\eqref{eq_V} that a constant potential $V(\phi) =
\frac{2}{\kappa} \Lambda$ implies the equation
\begin{equation} \label{eq_EEKG_massless}
	\left(\frac{f''}{f} - \frac{f'^2}{f^2} - \frac{\alpha}{f^2}\right)
		+ m \left(\frac{f'^2}{f^2} + \frac{\alpha}{f^2}\right)
	= 2\frac{\Lambda}{(m-1)} ,
\end{equation}
which is also supplemented by the Friedmann equation~\eqref{eq_fried}
\begin{equation}
	\frac{f'^2}{f^2} + \frac{\alpha}{f^2}
		= \frac{\kappa \phi'^2 + 2\Lambda}{m(m-1)}
\end{equation}
is clearly equivalent to the Einstein equations with a massless
minimally-coupled scalar field stress energy tensor and, because of our
hypothesis that $\phi' \ne 0$ and the comments below
Equation~\eqref{eq_NLKG}, which are equivalent to the full coupled
Einstein-Klein-Gordon system. Setting $\alpha = 0$ completes the proof of the first part of the lemma.

The hypothesis of non-constant energy and Lemma~\ref{lem_ces} imply that
$f'/f\neq 0$. Thus, we obtain the following equivalent form
of~\eqref{eq_EEKG_massless} after multiplying it by the integrating
factor $2(f'/f)f^{2m}$:
\begin{equation}
	f^{2m} \left(\frac{f'^2}{f^2} + \frac{\alpha}{f^2}
		- \frac{2}{m(m-1)} \Lambda \right)
	= \frac{\kappa}{m(m-1)} C ,
\end{equation}
for some constant $C$. When $\alpha \ne 0$, we can normalize $C$ by
a power of $|\alpha|$ to get
\begin{equation} \label{eq_fried_mms}
	\frac{f'^2}{f^2} + \frac{\alpha}{f^2}
	= \frac{2\Lambda + \kappa \Omega \frac{|\alpha|^m}{f^{2m}}}{m(m-1)} ,
\end{equation}
with another constant $\Omega$. Provided that $\Omega > 0$, we can
determine $\phi(t)$ by the equation $\phi' = \pm \sqrt{\Omega}
\frac{|\alpha|^{\frac{m}{2}}}{f^m}$, which is equivalent to the massless
minimally-coupled Klein-Gordon equation
\begin{equation}
	\frac{1}{f^m} \left( f^m \phi' \right)'
		= \phi'' + m\frac{f'}{f} \phi' = 0 .
\end{equation}
With the above expression for $\phi'$, plugging it into the Friedmann
equation gives exactly Equation~\eqref{eq_fried_mms}. This observation
completes the proof of the second part of the lemma.
\end{proof}

Next, we will transform the Einstein-Klein-Gordon
equations~\eqref{eq_fried}, \eqref{eq_accel} and~\eqref{eq_NLKG} under
the hypothesis that $\phi'\ne 0$ everywhere. If we use the Friedmann
equation to eliminate $\alpha/f^2$ from the acceleration equation, while
also multiplying the Klein-Gordon equation by $\phi'$ and adding to it a
multiple of the acceleration equation, they can be equivalently
expressed as
\begin{align}
\label{eq_fried2}
	\kappa\frac{\phi'^2+V(\phi)}{m(m-1)} - \frac{f'^2}{f^2}
		&= \frac{\alpha}{f^2} , \\
\label{eq_accel2}
	\left(\frac{f'}{f}\right)' + \kappa \frac{\phi'^2}{(m-1)}
		&= \left( \kappa\frac{\phi'^2+V(\phi)}{m(m-1)} -  \frac{f'^2}{f^2} \right) , \\
\label{eq_NLKG2}
	\left( \kappa\frac{\phi'^2+V(\phi)}{m(m-1)} - \frac{f'^2}{f^2} \right)'
		&= -2\frac{f'}{f} \left( \kappa\frac{\phi'^2+V(\phi)}{m(m-1)} - \frac{f'^2}{f^2} \right) .
\end{align}
The equations~\eqref{eq_accel2} and~\eqref{eq_NLKG2} are second order,
while~\eqref{eq_fried2} is first order. To see that there are no
integrability conditions, note that differentiating the
first order equation gives the identity
\begin{multline} \label{eq_alpha_const}
	\left[ f^2 \left( \kappa
		\frac{\phi'^2+V(\phi)}{m(m-1)} - \frac{f'^2}{f^2} \right) \right]'
	\\
	= f^2 \left[ \left( \kappa
				\frac{\phi'^2+V(\phi)}{m(m-1)} - \frac{f'^2}{f^2} \right)'
			+ 2\frac{f'}{f} \left( \kappa
				\frac{\phi'^2+V(\phi)}{m(m-1)} - \frac{f'^2}{f^2} \right) \right] ,
\end{multline}
where the right-hand-side is clearly proportional to~\eqref{eq_NLKG2}.

Since we are assuming that $\phi'\ne 0$, we can use $\phi$ as the
independent variable and convert all $t$-derivatives as $(-)' = \phi'
\del_\phi(-)$. Denoting $\pi = \phi'$ and $\xi = f'/f$, we get the
equations
\begin{align}
\label{eq_fried3}
	f^2 \left( \kappa\frac{\pi^2 + V(\phi)}{m(m-1)} - \xi^2 \right)
		&= \alpha , \\
\label{eq_accel3}
	\pi \left[\del_\phi \xi + \kappa \frac{\pi}{(m-1)}\right]
		&= \left( \kappa\frac{\pi^2 + V(\phi)}{m(m-1)} - \xi^2 \right) , \\
\label{eq_NLKG3}
	\del_\phi \left( \kappa\frac{\pi^2 + V(\phi)}{m(m-1)} - \xi^2 \right)
		&= -2\frac{\xi}{\pi}
			\left( \kappa\frac{\pi^2 + V(\phi)}{m(m-1)} - \xi^2 \right) ,
\end{align}
where $\xi$, $\pi$ and $f$ are now all considered as functions of
$\phi$. With fixed $V(\phi)$, the system~\eqref{eq_accel3},
\eqref{eq_NLKG3} closes in the $(\pi,\xi)$ variables, with the symmetry
$(\pi,\xi) \mapsto (-\pi,-\xi)$ corresponding to the coordinate
transformation $t \mapsto -t$, and can be solved for the highest
derivatives $\del_\phi \xi$ and $\del_\phi \pi$ (always assuming that
$\pi \ne 0$). In the notation of~\eqref{eq_pi_xi}, we can use the
short-hand $\mathfrak{G}_V(\pi,\xi) = 0$ for this system. Hence the
space of solutions $\xi = \Xi(\phi)$, $\pi = \Pi(\phi)$, will be
two-dimensional. We will always leave these parameters implicit in the
choice of the solution $(\Pi(\phi),\Xi(\phi))$. With $(\Pi,\Xi)$ fixed,
the equations $\phi' = \Pi(\phi)$, $f'/f = \Xi(\phi)$ and $\alpha = f^2
[\kappa\frac{\Pi^2(\phi) + V(\phi)}{m(m-1)} - \Xi^2(\phi)]$ have a
two-dimensional family of solutions, parametrized essentially by the
transformations
\begin{equation}\label{eq_isom_At0}
	(\alpha, f(t), \phi(t)) \mapsto
		(A^2\alpha, A f(t-t_0), \phi(t_0-t_0)),
\end{equation}
which are the isometries preserving FLRW form
(Proposition~\ref{prop_isometries}). So the parameters determining
$(\alpha,f,\phi)$ that are invariant under these transformations are
essentially exhausted by the choice of $V(\phi)$ and the solution
$(\Pi,\Xi)$. We summarize as follows.

\begin{lem} \label{lem_gen_NLKG}
For any quadruple $(m,\alpha,f(t),\phi(t))$ for which $\alpha \ne 0$,
$f'/f \ne 0$ and $(\nabla \phi)^2 < 0$, there is a unique smooth
function $(\Pi,\Xi) \colon J \to \mathbb{R}^2$, where $J = \phi(I)$
and
\begin{equation}
	\phi' = \Pi(\phi) , \quad
	\frac{f'}{f} = \Xi(\phi) , \quad
	\frac{\alpha}{f^2} = \kappa \frac{\Pi^2(\phi) + V(\phi)}{m(m-1)}
		- \Xi^2(\phi) .
\end{equation}
For each $u\in J$, these functions will also satisfy $\Pi(u) \ne 0$,
$\Xi(u) \ne 0$ and $\kappa \frac{\Pi^2(u) + V(u)}{m(m-1)} \ne \Xi^2(u)$,
they will satisfy $\mathfrak{G}_V(\Pi,\Xi) = 0$, in the notation
of~\eqref{eq_pi_xi}.
\end{lem}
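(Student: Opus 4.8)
The plan is to package the analysis already carried out in the paragraphs immediately preceding the statement; the lemma is essentially a summary of that discussion, specialized to a fixed quadruple. First I would observe that on an inflationary spacetime, where $\phi = \phi(t)$, one has $(\nabla\phi)^2 = -\phi'^2$, so the hypothesis $(\nabla\phi)^2 < 0$ is equivalent to $\phi' \ne 0$ everywhere on $I$. Since $I$ is an open interval, $\phi$ is then a strictly monotone smooth function, hence a diffeomorphism onto the open interval $J := \phi(I)$, with smooth inverse $t = t(\phi)$. I would then \emph{define} $\Pi := \phi'\circ t(\cdot)$ and $\Xi := (f'/f)\circ t(\cdot)$ as smooth functions on $J$; by construction these are the unique functions for which $\phi' = \Pi(\phi)$ and $f'/f = \Xi(\phi)$ hold identically in $t \in I$, and the third displayed equation of the lemma is then simply the reformulated Friedmann equation~\eqref{eq_fried3}, divided through by $f^2 > 0$ and rewritten with $\phi$ as the independent variable.

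Next I would read off the pointwise (in)equalities directly from the hypotheses: $\Pi \ne 0$ on $J$ because $\phi' \ne 0$; $\Xi \ne 0$ on $J$ because $f'/f \ne 0$; and, using~\eqref{eq_fried3} once more, $\kappa\frac{\Pi^2(u) + V(u)}{m(m-1)} - \Xi^2(u) = \alpha/f^2 \ne 0$ for every $u\in J$, since $\alpha \ne 0$ and $f > 0$, which is exactly the claimed strict inequality $\kappa\frac{\Pi^2(u)+V(u)}{m(m-1)} \ne \Xi^2(u)$. The equation $\mathfrak{G}_V(\Pi,\Xi) = 0$, in the notation of~\eqref{eq_pi_xi}, then follows by noting that changing the independent variable from $t$ to $\phi$ in~\eqref{eq_accel2} and~\eqref{eq_NLKG2} (legitimate because $\phi' \ne 0$) reproduces precisely the two components of $\mathfrak{G}_V(\Pi,\Xi)$, namely~\eqref{eq_accel3} and~\eqref{eq_NLKG3}; moreover the identity~\eqref{eq_alpha_const} has already shown that the first-order relation~\eqref{eq_fried3} carries no further integrability condition, so the three equations are mutually consistent.

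I do not expect a serious obstacle here: the lemma is a bookkeeping statement that collects the consequences of $\alpha\ne 0$, $f'/f\ne 0$, $\phi'\ne 0$ together with the Einstein-Klein-Gordon equations~\eqref{eq_fried}--\eqref{eq_NLKG}. The one point I would treat with a sentence of care is the assertion that $\Pi$ and $\Xi$ are well-defined \emph{smooth} functions on the \emph{whole} interval $J$: this uses both the connectedness of $I$ and the nowhere-vanishing of $\phi'$ to guarantee that $\phi\colon I\to J$ really is a diffeomorphism, and it is exactly because every hypothesis was posited to hold at every $t\in I$ that each of the stated conclusions holds at every $u\in J$.
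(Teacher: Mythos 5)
Your proposal is correct and follows essentially the same route as the paper: the lemma there carries no separate proof but is stated as a summary ("We summarize as follows") of the preceding discussion, in which $\phi'\ne 0$ is used to pass to $\phi$ as the independent variable, $\pi=\phi'$ and $\xi=f'/f$ become functions of $\phi$, and the Einstein--Klein--Gordon system in the form~\eqref{eq_fried3}--\eqref{eq_NLKG3} yields exactly the displayed equations, the nonvanishing conditions, and $\mathfrak{G}_V(\Pi,\Xi)=0$. Your additional care about $\phi\colon I\to J$ being a diffeomorphism (hence smoothness and uniqueness of $\Pi,\Xi$ on all of $J$) is a legitimate filling-in of details the paper leaves implicit; the appeal to~\eqref{eq_alpha_const} is harmless but not needed for this direction of the statement.
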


When $\alpha = 0$, the above discussion can be greatly simplified. The
Einstein-Klein-Gordon system reduces to the following equivalent forms,
using the same notation as above and always supposing that $\pi \ne 0$
everywhere:
\begin{equation}
	\left\{\begin{aligned}
		\kappa \frac{\pi+V(\phi)}{m(m-1)} - \xi^2 &= 0 \\
		\pi \left[\del_\phi \xi + \kappa\frac{\pi}{(m-1)}\right] &= 0
	\end{aligned}\right.
	\iff
	\left\{\begin{aligned}
		%0 &= \xi^2 - \frac{(m-1)}{\kappa m} (\del_\phi \xi)^2
		%	- \kappa\frac{V(\phi)}{m(m-1)}
		(\del_\phi\xi)^2 &= \kappa \frac{m(m-1) \xi^2 - \kappa V(\phi)}{(m-1)^2} \\
		\pi &= - \frac{(m-1)}{\kappa} \del_\phi \xi
	\end{aligned}\right.
\end{equation}
In a way, this simplification comes from eliminating $\pi = \phi'$ from
the equations. In the notation of~\eqref{eq_xi_flat}, we use the
short-hand $\mathfrak{H}_V(\xi) = 0$ for the equation satisfied by
$\xi(\phi)$, which retains the symmetry $\xi \mapsto -\xi$. With
$V(\phi)$ fixed, under the hypothesis $\del_\phi \Xi \ne 0$, this
equation will have a one-dimensional family of solutions $\xi =
\Xi(\phi)$. We will always leave the corresponding parameter implicit in
the choice of the solution $\Xi(\phi)$. With $\Xi$ fixed, the equations
$\phi' = -\frac{(m-1)}{\kappa} \del_\phi \Xi(\phi)$, $f'/f = \Xi(\phi)$
have a two-dimensional family of solutions, again parametrized by the
transformations~\eqref{eq_isom_At0}. So the parameters determining
$(f,\phi)$ that are invariant under these transformations are
essentially exhausted by the choice of $V(\phi)$ and the solution $\Xi$.
We summarize as follows.

\begin{lem} \label{lem_flat_NLKG}
For any quadruple $(m,\alpha,f(t),\phi(t))$ for which $\alpha = 0$,
$f'/f \ne 0$ and $(\nabla \phi)^2 < 0$, there is a unique smooth
function $\Xi \colon J \to \mathbb{R}$, where $J = \phi(I)$
and
\begin{equation}
	\phi' = -\frac{(m-1)}{\kappa} \del_\phi\Xi(\phi) , \quad
	\frac{f'}{f} = \Xi(\phi) .
\end{equation}
For each $u\in J$, these functions will also satisfy $\Xi(u) \ne 0$,
$\del_\phi\Xi^2(u) \ne 0$ and it will satisfy $\mathfrak{H}_V(\Xi) = 0$,
in the notation of~\eqref{eq_xi_flat}.
\end{lem}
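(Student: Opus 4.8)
The plan is to use $\phi$ itself as the independent variable and then read the two identities off from the reduced Einstein--Klein--Gordon system. First I would observe that on the FLRW background, where $\phi=\phi(t)$, one has $(\nabla\phi)^2 = g^{ij}(\nabla_i\phi)(\nabla_j\phi) = -\phi'^2$, so the hypothesis $(\nabla\phi)^2<0$ is equivalent to $\phi'\neq 0$ everywhere on $I$. Consequently $\phi\colon I\to J$ is a smooth diffeomorphism onto the open interval $J=\phi(I)$, with smooth inverse by the inverse function theorem, and we may legitimately change the independent variable from $t$ to $\phi$, converting $(-)' = \phi'\,\del_\phi(-)$.

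Next I would \emph{define} $\Xi\colon J\to\mathbb{R}$ by $\Xi=(f'/f)\circ\phi^{-1}$, which is smooth as a composition of smooth maps and is the unique function with $f'/f=\Xi(\phi)$ on $I$ (any other such function must agree with it at each point of $J$). The hypothesis $f'/f\neq0$ gives $\Xi(u)\neq0$ for all $u\in J$ at once. Then I would invoke the reduction of the coupled equations carried out just above the statement: with $\alpha=0$, and writing $\pi=\phi'$, $\xi=f'/f$ as functions of $\phi$, equation~\eqref{eq_fried3} becomes $\kappa\frac{\pi^2+V(\phi)}{m(m-1)}-\xi^2=0$, equation~\eqref{eq_accel3} becomes $\pi[\del_\phi\xi+\kappa\frac{\pi}{(m-1)}]=0$, and~\eqref{eq_NLKG3} holds automatically. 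Dividing the second relation by $\pi\neq0$ yields $\pi=-\frac{(m-1)}{\kappa}\del_\phi\xi$, i.e.\ the first claimed identity $\phi'=-\frac{(m-1)}{\kappa}\del_\phi\Xi(\phi)$; since $m>1$ and $\kappa\neq0$, $\phi'\neq0$ forces $\del_\phi\Xi\neq0$, hence both $(\del_\phi\Xi)^2\neq0$ and $\del_\phi(\Xi^2)=2\Xi\,\del_\phi\Xi\neq0$, covering whichever reading of "$\del_\phi\Xi^2\neq0$" is intended.

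Finally I would substitute $\pi=-\frac{(m-1)}{\kappa}\del_\phi\Xi$ into the Friedmann relation $\xi^2=\kappa\frac{\pi^2+V(\phi)}{m(m-1)}$ and clear denominators, obtaining $(\del_u\Xi)^2=\kappa\frac{m\,\Xi^2}{m-1}-\kappa^2\frac{V}{(m-1)^2}$, which is exactly $\mathfrak{H}_V(\Xi)=0$ in the notation of~\eqref{eq_xi_flat}. This closes out every assertion of the lemma. I do not expect a genuine obstacle here, since the substance of the computation is already contained in the discussion preceding the statement; the only points needing a little care are the identification $(\nabla\phi)^2=-\phi'^2$ that legitimizes $\phi$ as a coordinate, the well-definedness and uniqueness of $\Xi$, and the routine bookkeeping in passing from the Friedmann equation to $\mathfrak{H}_V$.
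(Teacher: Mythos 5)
Your proposal is correct and follows essentially the same route as the paper, whose ``proof'' is the discussion immediately preceding the lemma: using $(\nabla\phi)^2=-\phi'^2$ to justify $\phi$ as the independent variable, defining $\Xi=(f'/f)\circ\phi^{-1}$, reading $\pi=-\frac{(m-1)}{\kappa}\del_\phi\Xi$ off the $\alpha=0$ reduction of \eqref{eq_fried3}--\eqref{eq_NLKG3}, and substituting into the Friedmann constraint to get $\mathfrak{H}_V(\Xi)=0$. The nonvanishing claims ($\Xi\ne 0$, $\del_\phi\Xi\ne 0$, hence $\del_\phi(\Xi^2)\ne 0$) are handled exactly as the paper intends, so nothing further is needed.
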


In the spatially flat ($\alpha = 0$) case, the equation $\mathfrak{H}_V(\Xi) = 0$ is sometimes known as the \emph{Hamilton-Jacobi equation} of
single field inflation~\cite{salopek-bond,muslimov}. The more general
system $\mathfrak{G}_V(\Pi,\Xi) = 0$ needed in the generic case ($\alpha
\ne 0$) does not seem to have been considered before. In the cosmology
literature, in the case of non-zero $\alpha$, an alternative system of
equations has been used~\cite{skenderis-townsend}, though one less
convenient for our purposes. There, a complex scalar field $Z(\phi)$ is
introduced, and plays the role of a ``super-potential'' (in the sense of
super-symmetry) for a ``pseudo-Killing'' spinor. The isometry class of
$(\alpha,f,\phi)$ determines the integrability conditions for $Z(\phi)$,
an algebraic relation between $\phi$, $Z(\phi)$ and $Z'(\phi)$.

\section{Geometric characterization} \label{sec_ideal}

In this section, we leverage the information from
Section~\ref{sec_FLRW_geom} to give necessary and sufficient conditions
to belong to the local isometry class of a regular FLRW or inflationary
spacetime, eventually proving our main Theorems~\ref{thm_FLRW_class}
and~\ref{thm_infl_class}.

The resulting systems of conditions will be of the IDEAL type, as
discussed in the Introduction, consisting of a list $\{ T_a[g,\phi] = 0
\}$, $a=1, \ldots, N$, of tensor equations built covariantly out of a
metric $g$, a scalar field $\phi$, and their derivatives. Each set of
equations will consist of roughly three parts: for the GRW structure,
for the FLRW structure, and for the specific isometry subclass.

\subsection{Special cases}

The two cases of FLRW spacetimes whose local isometry classes need to be
characterized separately from the general pattern given in the sequel are
the constant curvature spacetimes (Lemmas~\ref{lem_flat},
\ref{lem_const_curv}) and Einstein static universes
(Lemma~\ref{lem_esu}).

\begin{prop} \label{prop_special_FLRW}
Consider a Lorentzian manifold $(M,g)$, $\dim M = m+1 \ge 2$.

\begin{enumerate}
\item
Given a fixed constant $K$, if $(M,g)$ everywhere satisfies
\begin{equation}
	R_{ijkh} - K \frac{1}{2} (g\odot g)_{ijkh} = 0 ,
\end{equation}
then it is locally isometric to any other spacetime satisfying the same
condition.

\item
Given a fixed constant $K$, if $m>1$ and $(M,g)$ everywhere satisfies
\begin{align}
	W_{ijkh} &= 0 , &
	R_i^j \left(R_{jk} - (m-1) K g_{jk}\right) &= 0 , \\
	\nabla_i R_{jk} &= 0 , &
	\mathcal{R} - m(m-1) K &= 0 ,
\end{align}
while the $1$-dimensional kernel of $R_i^j$ is timelike, it is locally
isometric to an Einstein static universe with spatial sectional
curvature $K$. The value $K=0$ coincides with the flat case, $R_{ijkh} =
0$.
\end{enumerate}
\end{prop}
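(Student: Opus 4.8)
\emph{Part (a)} is the classical theorem of Riemann and Killing--Hopf in the pseudo-Riemannian setting. The hypothesis says precisely that $(M,g)$ has constant sectional curvature $K$, and any two pseudo-Riemannian manifolds of the same dimension, signature and constant curvature are locally isometric --- each being locally isometric to the corresponding model space (de~Sitter, Minkowski or anti--de~Sitter according to $\sgn K$); see \cite[\S2.4]{Wolf2011}. Alternatively, one observes that $R_{ijkh}$ here is built from $g$ with constant coefficients, so $\nabla_m R_{ijkh}=0$ and $(M,g)$ is locally symmetric; since the curvature tensor has the same components in every orthonormal frame, the Cartan--Ambrose--Hicks theorem directly produces a local isometry between any two such spacetimes. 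There is no real obstacle here.

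For \emph{part (b)}, fix $K\ne 0$. The first step is to read off the algebraic type of the Ricci operator $R_i{}^j$. The equation $R_i{}^j\bigl(R_j{}^k-(m-1)K\delta_j{}^k\bigr)=0$ says that its minimal polynomial divides $\lambda\bigl(\lambda-(m-1)K\bigr)$, a product of \emph{distinct} linear factors; hence $R_i{}^j$ is diagonalizable over $\mathbb{R}$ with eigenvalues in $\{0,(m-1)K\}$, and, being $g$-self-adjoint, its two eigenspaces are $g$-orthogonal and $g$-nondegenerate. Taking the trace and using $\mathcal{R}=m(m-1)K$ (and $m>1$) forces the multiplicity of $(m-1)K$ to be $m$ and that of $0$ to be $1$: the kernel is automatically one-dimensional, and the extra hypothesis is only that it is timelike. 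So there is a unit timelike $U_i$ (unique up to sign) spanning $\ker R_i{}^j$, the orthogonal complement is spacelike, and globally $R_{ij}=(m-1)K\,h_{ij}$ with $h_{ij}:=g_{ij}+U_iU_j$.

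The second step shows $U$ is parallel. The spectral projectors $\tfrac{1}{(m-1)K}R_i{}^j$ and $\delta_i{}^j-\tfrac{1}{(m-1)K}R_i{}^j$ are polynomials in $R_i{}^j$ with constant coefficients, so $\nabla_k R_{ij}=0$ gives $\nabla_k h_{ij}=0$, hence $\nabla_k(U_iU_j)=\nabla_k(h_{ij}-g_{ij})=0$; contracting with $U^i$ and using $U^iU_i=-1$ together with $U^i\nabla_kU_i=0$ yields $\nabla_kU_j=0$. A parallel unit timelike vector field makes $(M,g)$ locally a metric product: $U^\flat$ is closed, so locally $U_i=\nabla_i t$ with $g^{ij}(\nabla_it)(\nabla_jt)=-1$, and $\mathcal{L}_Ug=0$, so in coordinates adapted to the totally geodesic level sets of $t$ one has $g=-dt^2+g^F$ with $g^F$ a $t$-independent Riemannian metric on the fibre $F$. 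This is a GRW spacetime (Definition~\ref{def_GRW}) with warping function $f\equiv 1$, hence $\xi=0$, $\bm{\eta}=0$, and \eqref{RiemannGRW_final} reduces to $R_{ijkh}=\tilde R^F_{ijkh}$.

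The last step identifies the fibre. Using $W_{ijkh}=0$ (vacuous, but harmless, when $m+1=3$), $R_{ij}=(m-1)K(g_{ij}+U_iU_j)$ and $\mathcal{R}=m(m-1)K$, the Weyl-free decomposition of the Riemann tensor gives
\begin{equation*}
	R_{ijkh}=\tfrac{K}{2}(g\odot g)_{ijkh}+K\,(g\odot UU)_{ijkh}
	=\tfrac{K}{2}\,(\tilde g^F\odot\tilde g^F)_{ijkh},
\end{equation*}
the last equality by $f^2\tilde g^F=g+UU$ with $f\equiv 1$ and the identity $\tfrac12(\tilde g^F\odot\tilde g^F)=(g\odot UU)+\tfrac12(g\odot g)$ from Section~\ref{sec_Riemann}. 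Comparison with $R_{ijkh}=\tilde R^F_{ijkh}$ gives $R^F_{ijkh}=\tfrac{K}{2}(g^F\odot g^F)_{ijkh}$, so $(F,g^F)$ has constant sectional curvature $K$; by the constant-curvature theorem again (now for the Riemannian fibre) it is locally isometric to the $m$-dimensional space form of curvature $K$ ($\cong S^m$ if $K>0$, $\cong\mathbb{R}^m$ if $K<0$). Hence $(M,g)$ is locally isometric to $(I\times F_0,\,-dt^2+g^{F_0})$ with $(F_0,g^{F_0})$ of constant curvature $K$ --- which is exactly $\ESU^m_K$ --- and, by the same argument, any two such are locally isometric to each other. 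The case $K=0$ collapses to the flat case $R_{ijkh}=0$ of part~(a), which is why $\ESU^m_0$ is identified with Minkowski space. The only mildly delicate point is the passage $\nabla R=0\Rightarrow\nabla U=0$: it works because the eigenvalues of $R_i{}^j$ are the \emph{fixed constants} $0$ and $(m-1)K$ (guaranteed by $\mathcal{R}=m(m-1)K$), so the spectral projectors have constant coefficients; everything else is standard product geometry and the classical constant-curvature theorem.
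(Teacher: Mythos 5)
Your proof is correct, and its skeleton coincides with the paper's: algebraic diagonalizability of the Ricci operator with eigenvalues $0$ and $(m-1)K$ and multiplicities $1$ and $m$, a parallel unit timelike $U$ spanning the kernel, a local metric splitting $-dt^2+g^F$, and identification of $(F,g^F)$ as a space form of curvature $K$. Where you differ is in how the sub-steps are discharged, and each substitution is sound: (i) you rule out a null kernel directly from diagonalizability (the two eigenspaces are $g$-orthogonal and span the tangent space, so neither can be degenerate), whereas the paper handles this point via a Segre-classification footnote; (ii) you obtain $\nabla_k U_j=0$ by noting that $h_{ij}=R_{ij}/[(m-1)K]$ is parallel and contracting $\nabla_k(U_iU_j)=0$ with $U^i$, while the paper differentiates $R_{ij}U^j=0$ --- same content; (iii) for the local product you give the elementary de Rham-type argument (closed unit $U^\flat$, $\mathcal{L}_U g=0$, flow coordinates), where the paper invokes the holonomy criterion of Kobayashi--Nomizu, Prop.~IV.5.2; (iv) you compute $R_{ijkh}=\tfrac{K}{2}(\tilde g^F\odot\tilde g^F)_{ijkh}$ explicitly from the Weyl-free decomposition, rather than inferring $W^F_{ijkh}=0$ and $R^F_{ij}=(m-1)K g^F_{ij}$ and citing constant curvature of the fibre. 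Your version is more self-contained; the paper's is shorter because it leans on two citations. Two cosmetic remarks: for $K<0$ the fibre is hyperbolic space (diffeomorphic to $\mathbb{R}^m$, matching the paper's convention, but of course not flat), and your treatment of $K=0$ (collapse to the flat case of part (a)) is exactly the paper's.
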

\begin{proof}
(a) This is standard; see for instance Theorem~2.4.11 in~\cite{Wolf2011}.

(b) When $m=1$, spatial slices are always flat, hence it is impossible
to have $K\ne 0$ spatial sectional curvature. When $K=0$, we are back in
the flat case, characterized by $R_{ijkh} = 0$, a special case of part
(a). This is why we take $m>1$. Direct calculation
(cf.~\ref{sec_Riemann}) shows that the above equations hold when $(M,g)$ is
an Einstein static universe with spatial sectional curvature $K\ne 0$.

Conversely, assume that we only know about $(M,g)$ that the above
equations hold, with $K\ne 0$. The algebraic equations on the $R_i^j$
tangent space endomorphism guarantee that it is diagonalizable with
precisely two distinct eigenvalues, $0$ and $(m-1)K$, with the kernel
being $1$-dimensional. Since $R_{ij}$ is symmetric, the kernel can only
be either timelike or spacelike (not null),%
	\footnote{Suppose the $1$-dimensional kernel $N$ of $R_i^j$ is null.
	From its invariant factors and the symmetry of $R_{ij}$, we have the
	following splittings of invariant subspaces: $N^\perp = N \oplus S$
	and $S^\perp = N \oplus N'$, where $S$ is necessarily spacelike,
	meaning that $N'$ is $1$-dimensional and has a non-zero eigenvalue.
	But, by the well-known Segre
	classification~\cite[\textsection5.1]{stephani-sols}, on $S^\perp$,
	$R_i^j$ can either have only a single degenerate eigenvalue or no null
	eigenvectors.} %
with the hypotheses constraining it to be timelike. Since $R_{ij}$ is
also covariantly constant, so is any unit vector $U^i$ in its kernel.
That is, $R_{ij} \nabla_X U^j = \nabla_X (R_{ij} U^j) = 0$ for any
$X^i$, which implies that $\nabla_X U^j = A_X U^j$ and $A_X = -U_j
\nabla_X U^j = -\frac{1}{2} \nabla_X (U_j U^j) = 0$. This gives us the
desired $\nabla_i U^j = 0$ conclusion.

The existence of a covariantly constant unit vector $U^i$ implies that
for any $x\in M$ and contractible open neighborhood $O \ni x$, the
holonomy action of $(O,g|_{O})$ at $x$ leaves invariant the subspace
spanned by $U^i$ at $x$ as well as its orthogonal complement (simply
note that contraction with $U^i$ commutes with parallel transport).
Under these conditions (Proposition~IV.5.2 in~\cite{kn1}), it is
possible to locally factor $(O,g|_{O})$ into a direct product of a
$1$-dimensional and an $m$-dimensional pseudo-Riemannian manifold,
$(I,-dt^2) \times (F,g^F)$, with $g^F$ of Riemannian signature.
Furthermore, the algebraic conditions on $W_{ijkh}$ and $R_{ij}$ imply
that $W^F_{ijkh} = 0$ and $R^F_{ij} = (m-1) K g^F_{ij}$, which means
that the spatial factor $(F,g^F)$ is locally of constant curvature with
sectional curvature $K$. In other words, we can locally describe $(M,g)$
as an FLRW spacetime with $\alpha = K$ and $f(t)=1$, which belongs
precisely to the desired Einstein static universe class.
\end{proof}

\subsection{FLRW spacetimes}

An FLRW spacetime (Definition~\ref{def_FLRW}) is a GRW spacetime
(Definition~\ref{def_GRW}) whose spatial slices have constant curvature
(Equation~\eqref{const_curv}). GRW spacetimes have been geometrically
characterized in two different but related ways by the existence of a
spatially conformal vector field $U$ by S\'anchez~\cite{Sanchez1998} and
of a concircular vector field $v$~by Chen~\cite{Chen2014}. Given Chen's
vector field $v$, the vector field $U = v / \sqrt{-v^2}$ satisfies the
conditions of S\'anchez. A recent survey of these and related geometric
characterization results of GRW spacetimes can be found
in~\cite{Mantica2017}.

Chen's condition is somewhat simpler, but we will only be able to make
use of it to characterize spatially curved, but not spatially flat FLRW
spacetimes. In one case it will be possible to produce Chen's vector
field $v$ directly from the spacetime curvature, in the other not.
S\'anchez's conditions work equally well also in the spatially flat case.
So, motivated by providing the simplest set of equations when possible,
we present both characterizations.

\begin{prop}[S\'anchez's conditions]\label{thm_Sanchez}
Let $(M,g)$ be a Lorentzian manifold, $\dim M = m+1 \ge 2$. It is
locally GRW at $x\in M$ if and only if there exists, on a neighborhood
of $x$, a unit timelike vector field $U$ that satisfies the conditions
% \begin{subequations}\label{Uequations}
\begin{align} 
%	&U_i U^i =-1 , \label{Uequations1} \\
%	&U^i \nabla_i U^j=0 , \label{Uequations2}\\
	\mathfrak{P}_{jk}
	&:= U_{[ j} \nabla_{k]} \frac{\nabla^i U_i}{m} = 0 , \label{Uequations5} \\
%	&\nabla_i U_j + \nabla_j U_i - 2\dfrac{\nabla_k U^k}{m} (g_{ij}+U_iU_j)=0 , \label{Uequations3}\\
%	&\nabla_i U_j - \nabla_j U_i=0, \label{Uequations4} \\
	\mathfrak{D}_{ij}
	&:= \nabla_i U_j - \dfrac{\nabla_k U^k}{m} (g_{ij}+U_iU_j)=0 . \label{Uequations3}
\end{align}
% above, (Uequations2) follows easily from (Uequations3).
% \end{subequations}

%(b) If there exists a vector field $U$ satisfying~\eqref{Uequations},
%then $(M,g)$ is locally GRW (Definition~\ref{loc_GRW}) and can be put
%into the form~\ref{GRW_metric}, with $U^i = (\del_t)^i$.
\end{prop}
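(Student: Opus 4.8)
The proof has two directions, both local. For the \emph{only if} part, suppose $(M,g)$ is locally GRW at $x$, so on a neighborhood of $x$ the metric takes the form~\eqref{GRW_metric} and we may take $U_i = -(dt)_i$. From the covariant derivative identity $\nabla_i(fU_j) = f'g_{ij}$ recalled at the beginning of Section~\ref{sec_Riemann_GRW}, together with $\nabla_i f = -f' U_i$, one gets $\nabla_i U_j = \frac{f'}{f}(g_{ij}+U_iU_j)$, which, since $\xi = f'/f$ (see~\eqref{xi_eta_FLRW}), is precisely $\mathfrak{D}_{ij}=0$. As $\xi$ is then a function of $t$ alone, $\nabla_k\xi = -\bm{\eta}\,U_k$, and hence $\mathfrak{P}_{jk} = U_{[j}\nabla_{k]}\xi = -\bm{\eta}\,U_{[j}U_{k]} = 0$. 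This is just a specialization of the curvature formulas of Section~\ref{sec_Riemann_GRW}.

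For the \emph{if} part, which carries the content, assume that on a neighborhood of $x$ there is a unit timelike $U$ with $\mathfrak{D}_{ij}=0$ and $\mathfrak{P}_{jk}=0$; write $h_{ij}=g_{ij}+U_iU_j$ for the projector onto $U^\perp$. The equation $\mathfrak{D}_{ij}=0$ reads $\nabla_iU_j = \xi h_{ij}$, and since $h_{ij}$ is symmetric this exhibits $U$ as irrotational ($\nabla_{[i}U_{j]}=0$), geodesic ($U^i\nabla_iU_j = \xi h_{ij}U^i = 0$), and shear-free (the spatial part of $\nabla_iU_j$ is pure trace). Irrotationality means that the $1$-form $U$ is closed, so by the Poincar\'e lemma, after shrinking to a contractible neighborhood of $x$, $U_i = -\nabla_i\tau$ for a smooth $\tau$ with $(\nabla\tau)^2 = -1$; thus $\tau$ is a temporal function whose level sets foliate the neighborhood by spacelike hypersurfaces orthogonal to $U$. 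Since $U$ is moreover unit and geodesic, its flow identifies a (possibly smaller) neighborhood of $x$ with a product $I'\times\Sigma$, where $\Sigma$ is a connected piece of the level set through $x$, in Gaussian normal coordinates $(t,x^a)=(\tau,x^a)$, in which $g = -dt^2 + g_{ab}(t,x)\,dx^a dx^b$ and $U = \partial_t$.

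In these coordinates the projection of $\nabla_iU_j$ to the slices is the second fundamental form $\tfrac12\partial_t g_{ab}$, so the shear-free condition becomes $\partial_t g_{ab} = 2\xi\,g_{ab}$. The remaining condition $\mathfrak{P}_{jk}=0$ is equivalent to $h_k{}^i\nabla_i\xi = 0$, i.e.\ $\xi$ has vanishing intrinsic gradient on each slice and therefore, by connectedness of $\Sigma$, descends to a function $\xi=\xi(t)$; integrating $\partial_t\ln g_{ab} = 2\xi(t)$ gives $g_{ab}(t,x) = f(t)^2\,g^F_{ab}(x)$ with $f(t) := \exp(\int_{t_0}^{t}\xi(s)\,ds) > 0$ and $g^F_{ab}(x) := g_{ab}(t_0,x)$, which is Riemannian as the induced metric on a spacelike hypersurface. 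This is exactly the GRW form~\eqref{GRW_metric}, so $(M,g)$ is locally GRW at $x$. The step requiring genuine care is this local product construction --- the existence and regularity of the temporal function $\tau$, the validity of the Gaussian normal chart on a full neighborhood of $x$ (guaranteed after shrinking, before focal points develop), and the connectedness of the slices used to conclude $\xi = \xi(t)$; everything else is algebraic manipulation of $\mathfrak{D}_{ij}=0$ and $\mathfrak{P}_{jk}=0$. This argument is essentially that of S\'anchez~\cite{Sanchez1998}, to which we refer for the omitted details.
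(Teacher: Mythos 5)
Your proposal is correct, but it takes a more self-contained route than the paper does. The paper's own proof is essentially a citation: the forward direction is the same direct computation you give, while the converse is obtained by invoking S\'anchez's Theorem~2.1 of~\cite{Sanchez1998}, with only the remarks that his conditions follow from $\mathfrak{P}_{ij}=0$, $\mathfrak{D}_{ij}=0$ by easy algebraic manipulations and that his global hypotheses (connectedness, simple connectedness) can be dropped for a purely local statement. You instead reconstruct the converse: $\mathfrak{D}_{ij}=0$ gives $\nabla_i U_j=\xi(g_{ij}+U_iU_j)$, hence $U$ is irrotational, geodesic and shear-free; closedness plus the Poincar\'e lemma yields a local temporal function, the flow of $U$ gives synchronous coordinates $g=-dt^2+g_{ab}(t,x)\,dx^a dx^b$ with $U=\del_t$, the shear-free condition becomes $\del_t g_{ab}=2\xi g_{ab}$, and $\mathfrak{P}_{ij}=0$ forces $\xi=\xi(t)$ on connected slices, so integration produces the warped form~\eqref{GRW_metric}. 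This is sound and is, as you say, essentially S\'anchez's own argument; what it buys is a proof readable without the reference, at the cost of redoing work the paper deliberately outsources. Two cosmetic points: the chart you need is just the flow box of the smooth vector field $U$ transverse to a level set, so no focal-point caveat is required on a small enough neighborhood; and $\del_t\ln g_{ab}=2\xi(t)$ should be read componentwise as the linear ODE $\del_t g_{ab}=2\xi(t)g_{ab}$, whose solution indeed gives $g_{ab}(t,x)=f(t)^2 g^F_{ab}(x)$ with $f>0$ and $g^F$ Riemannian.
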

\begin{proof}
In one direction, given an FLRW metric in the form~\eqref{GRW_metric},
direct calculation shows that the above conditions are satisfied with
$U^i = (\del_t)^i$.

In the other direction, S\'anchez's Theorem~2.1 from~\cite{Sanchez1998}
shows that locally $(M,g)$ can be put into the form~\eqref{GRW_metric},
with $U^i = (\del_t)^i$. S\'anchez's original conditions look more
complicated, but they follow from ours by easy algebraic manipulations.
S\'anchez's hypotheses also include connectedness and simple
connectedness. But, from the proof, these can all be dropped for the
local result that we want.
\end{proof}

We have based the above result on the characterization of GRW spacetimes
that S\'anchez obtained independently~\cite[Theorem 2.1]{Sanchez1998} in
the process of a detailed investigation of the geometry of GRW
spacetimes. However, this characterization (existence of a
\emph{shear}-free, $\mathfrak{D}_{(ij)} = 0$, and \emph{twist}-free,
$\mathfrak{D}_{[ij]} = 0$, vector field $U$, with \emph{expansion} $\xi$
constant in directions orthogonal to $U$, $\mathfrak{P}_{ij} = 0$), at
least when applied to FLRW spacetimes, has been known already as far
back as~\cite[Theorem 2.5.1]{Ehlers1961,Ehlers1961en}, and has been referenced for
instance in~\cite[Section III.B]{EllisBruni}, \cite[Section
5.1]{Ellis1971}. Another independent source for these conditions seems
to be the unpublished thesis~\cite{Easley1991}, which has been
referenced in at least~\cite[p.124]{bee-globlor}.

\begin{prop}[Chen's conditions] \label{thm_GRW_concircular}
Consider a Lorentzian manifold $(M,g)$, $\dim M = m+1 \ge 2$. It is
locally GRW at $x\in M$ if and only if there exists, on a neighborhood
of $x$, a timelike vector field $v$ and a scalar $\mu$ that satisfy the
condition
\begin{equation} \label{concircular_vector}
	\nabla_i v_j = \mu g_{ij} .
\end{equation}
\end{prop}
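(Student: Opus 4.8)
The plan is to prove the two implications separately: the "only if" direction by a direct computation, and the "if" direction by normalizing $v$ to unit length and checking S\'anchez's conditions (Proposition~\ref{thm_Sanchez}).

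For the forward direction, if $(M,g)$ is locally GRW at $x$ then near $x$ it has the form~\eqref{GRW_metric} with $U_i = -(dt)_i$, and the warped-product identity $\nabla_i(fU_j) = f'g_{ij}$ from Section~\ref{sec_Riemann_GRW} shows at once that $v_i := fU_i$ and $\mu := f'$ (understood as pulled back to $M$) satisfy~\eqref{concircular_vector}; moreover $v^iv_i = -f^2 < 0$, so $v$ is timelike.

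For the converse, given a timelike $v$ and a scalar $\mu$ with $\nabla_i v_j = \mu g_{ij}$ on a neighborhood of $x$, I would set $\lambda := \sqrt{-v^iv_i} > 0$ and $U_i := v_i/\lambda$. A one-line computation gives $\nabla_i\lambda = -\tfrac{1}{\lambda} v^k\nabla_i v_k = -\mu U_i$, and hence
\[
	\nabla_i U_j = \frac{\nabla_i v_j}{\lambda} - \frac{v_j\nabla_i\lambda}{\lambda^2}
		= \frac{\mu}{\lambda}\,(g_{ij} + U_iU_j);
\]
tracing yields $\xi := \nabla^iU_i/m = \mu/\lambda$, so $\mathfrak{D}_{ij} = \nabla_iU_j - \xi(g_{ij}+U_iU_j) = 0$, the first of S\'anchez's two conditions. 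For the second, $\mathfrak{P}_{jk} = U_{[j}\nabla_{k]}\xi = 0$, it is enough to see that $\nabla_k\mu$ — and then $\nabla_k\xi = \nabla_k(\mu/\lambda)$ — is proportional to $U_k$. Differentiating the concircular equation and using the Ricci identity gives $R_{kij}{}^m v_m = (\nabla_k\mu)g_{ij} - (\nabla_i\mu)g_{kj}$; contracting with $v^iv^j$ kills the left-hand side (the Riemann tensor is antisymmetric in its last two slots, while $v^jv^m$ is symmetric), so $0 = -\lambda^2(\nabla_k\mu) - (v^i\nabla_i\mu)v_k$, i.e.\ $\nabla_k\mu = -\lambda^{-2}(v^i\nabla_i\mu)\,v_k$, proportional to $U_k$. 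Since $\nabla_k\lambda = -\mu U_k$ is also proportional to $U_k$, so is $\nabla_k\xi$, hence $\mathfrak{P}_{jk} = 0$, and Proposition~\ref{thm_Sanchez} gives the conclusion.

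I do not anticipate a real obstruction; the one thing that is not immediate is that the concircular hypothesis has to be fed through the Ricci identity to extract the extra piece of information (spatial constancy of the expansion $\xi$) that $\mathfrak{D}_{ij}=0$ does not already supply — the antisymmetry trick on $R_{kij}{}^mv_m$ is what makes this painless. As with Proposition~\ref{thm_Sanchez}, the connectedness and simple-connectedness hypotheses in Chen's original statement are not needed for the local conclusion stated here.
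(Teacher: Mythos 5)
Your proof is correct and follows essentially the same route as the paper: the forward direction is the same direct computation with $v_i = fU_i$, $\mu = f'$, and the converse is handled, exactly as in the paper, by normalizing to $U_i = v_i/\sqrt{-v^kv_k}$ and verifying S\'anchez's conditions (Proposition~\ref{thm_Sanchez}), which is also how the paper covers the $m+1=2$ case left out of Chen's original theorem. The only cosmetic difference is in how the spatial constancy of the expansion is obtained: the paper commutes second derivatives of the scalar $\sqrt{-v^kv_k}$ to get $U_{[i}\nabla_{j]}\mu = 0$, whereas you apply the Ricci identity to $v$ and contract with $v^iv^j$ — one-line variants of the same step.
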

A vector field satisfying~\eqref{concircular_vector} is called
\emph{concircular}.
\begin{proof}
In one direction, given GRW metric in the form~\eqref{GRW_metric},
direct calculation shows that we can take $v^i = f (\del_t)^i$ and $\mu
= f'$.

Chen's Theorem~1 from~\cite{Chen2014} shows that locally $(M,g)$ can be
put into the form~\eqref{GRW_metric}, with $v^i = f (\del_t)^i$. Chen
stated this result for $m+1 \ge 3$. However, the same proof also works
when $m+1=2$. It is easiest to see by showing that the concircular
condition~\eqref{concircular_vector} implies that $U^i =
v^i/\sqrt{-v^2}$ satisfies S\'anchez's conditions, independently of the
dimension. Let $\phi = \sqrt{-v^2}$, so that $v^i = \phi U^i$. From the
$U^j \nabla_i U_j = 0$ identity, the concircular condition decomposes
into
\begin{multline}
	U_j \nabla_i \phi + \phi \nabla_i U_j
		= -\mu U_i U_j + \mu (g_{ij} + U_i U_j)
	\\ \iff
	\nabla_i \phi = -\mu U_i , \quad
	\nabla_i U_j = \frac{\mu}{\phi} (g_{ij} + U_i U_j) .
\end{multline}
Then $U_{[i} U_{j]} = 0$ implies $U_{[i} \nabla_{j]} \phi = 0$, and
$\nabla_{[i} \nabla_{j]} \phi = 0$ implies $U_{[i} \nabla_{j]} \mu = 0$.
Finally, noting that $\mu = U^i \nabla_i \phi = \frac{\phi}{m} \nabla^i
U_i$ and eliminating both $\phi$ and $\mu$ gives us S\'anchez's conditions
$\mathfrak{P}_{ij} = 0$ and $\mathfrak{D}_{ij} = 0$.
\end{proof}

The concircular condition can be rewritten slightly for our convenience.
\begin{lem} \label{lem_concircular}
Let $U$ be a vector field, $\nu$ and $\phi$ smooth functions, with $\phi
> 0$, and $k$ a constant. Then the condition
\begin{equation} \label{alt_concircular}
	\nabla_i U_j + k \frac{\nabla_i \phi}{\phi} U_j = \nu g_{ij}
\end{equation}
implies that $v = \phi^k U$ is a concircular vector field. In
particular, $U_{[i} \nabla_{j]} \phi = 0$.
\end{lem}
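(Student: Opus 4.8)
The plan is to verify the concircular property by a one-line application of the Leibniz rule, and then to extract the ``in particular'' clause from the symmetry of the resulting equation together with the normalization of $U$.

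First I would set $v_i := \phi^k U_i$, which is well defined because $\phi>0$ and which is timelike wherever $U$ is. Differentiating and substituting the hypothesis~\eqref{alt_concircular}, I expect to obtain
\[
	\nabla_i v_j = \phi^k \nabla_i U_j + k\phi^{k-1}(\nabla_i\phi)\,U_j
	= \phi^k\Bigl(\nabla_i U_j + k\tfrac{\nabla_i\phi}{\phi}\,U_j\Bigr)
	= \phi^k\nu\, g_{ij} ,
\]
so that $\nabla_i v_j = \mu\, g_{ij}$ with $\mu := \phi^k\nu$. This exhibits $v$ as a concircular vector field in the sense of Proposition~\ref{thm_GRW_concircular}.

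For the ``in particular'' statement I would argue in one of two equivalent ways. Contracting~\eqref{alt_concircular} with $U^j$ and using that $U$ is a unit timelike field — so that $U^j\nabla_i U_j = \tfrac{1}{2}\nabla_i(U^jU_j) = 0$ and $U^jU_j = -1$ — leaves $-\tfrac{k}{\phi}\nabla_i\phi = \nu U_i$, i.e.\ $\nabla\phi$ is pointwise proportional to $U$, whence $U_{[i}\nabla_{j]}\phi = 0$. Alternatively, the concircular identity forces $\nabla_k(v^lv_l) = 2\mu\, v_k$, so $\nabla(v^2)$ is everywhere parallel to $v$; since $v^2 = -\phi^{2k}$ is a strictly monotone function of $\phi$ (using $\phi>0$), $\nabla\phi$ is parallel to $v$, and dividing by $\phi^k>0$ gives $U_{[i}\nabla_{j]}\phi = 0$ once more.

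I do not expect any genuine obstacle here; the lemma is elementary. The only point that needs a word of care is the ``in particular'' clause: it relies on $U$ being normalized ($U^jU_j=-1$) and on $k\ne 0$, which is precisely the setting in which the lemma will be applied — to the unit field $U = v/\sqrt{-v^2}$ of Propositions~\ref{thm_Sanchez} and~\ref{thm_GRW_concircular} and to the explicit rescalings appearing in Tables~\ref{tab_FLRW_class} and~\ref{tab_infl_class}. Without fixing the norm of $U$ the function $\phi$ is not constrained by~\eqref{alt_concircular} at all, so that refinement genuinely uses the extra structure, whereas the concircularity of $v$ holds for an arbitrary timelike $U$.
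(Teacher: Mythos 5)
Your proof is correct, and the main clause (concircularity of $v=\phi^k U$) is established exactly as in the paper: the Leibniz expansion $\nabla_i v_j = \phi^k\bigl(\nabla_i U_j + k\tfrac{\nabla_i\phi}{\phi}U_j\bigr) = \phi^k\nu\,g_{ij}$ is the same computation the paper phrases as the equivalence of \eqref{alt_concircular} with $\phi^{-k}\nabla_i(\phi^k U_j)=\phi^{-k}\mu\,g_{ij}$, $\mu=\phi^k\nu$. Where you genuinely diverge is the ``in particular'' clause. The paper obtains $U_{[i}\nabla_{j]}\phi=0$ by passing to the GRW normal form~\eqref{GRW_metric} supplied by Proposition~\ref{thm_GRW_concircular} (so implicitly using that $v$ is timelike and that $U$ is the unit field, whence $\phi^k=f$ and $\nabla_j f=-f'U_j$), whereas you get it directly by contracting \eqref{alt_concircular} with $U^j$: with $U^jU_j$ constant one has $U^j\nabla_iU_j=0$, so $-k\tfrac{\nabla_i\phi}{\phi}U^jU_j$ collapses the equation to $\nabla_i\phi\propto U_i$ when $k\ne 0$. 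This is more elementary — it needs neither Chen's theorem nor timelikeness of $v$, only that $U$ has constant nonzero norm — and it has the merit of making explicit the hypotheses ($U$ normalized, $k\ne 0$) that the paper's one-line argument uses tacitly and that are indeed satisfied in every application of the lemma ($U=U_\mathcal{R},U_\mathcal{B},U_\phi$ with $k$ a fixed nonzero constant). Your alternative argument via $\nabla_k(v^lv_l)=2\mu v_k$ is also fine and is essentially the coordinate-free version of the paper's $\nabla_j f=-f'U_j$ step. The only caveat, which you already state yourself, is that the lemma as written says only ``let $U$ be a vector field''; strictly speaking the ``in particular'' conclusion requires the unit (or constant-norm) normalization and $k\ne 0$, a gap present equally in the paper's own proof, so your version is, if anything, the more careful one.
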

\begin{proof}
The concircular condition with $v = \phi^k U$ and $\mu = \phi^k \nu$ is
equivalent to $\phi^{-k} \nabla_i (\phi^k U_j) = \phi^{-k} \mu g_{ij}$,
which when expanded gives precisely Equation~\eqref{alt_concircular}. In
GRW form~\eqref{GRW_metric}, $\phi^k U^i = f (\del_t)^i$ and $\nabla_j f
= -f' U_j$, from which follows the desired condition on $\nabla_j \phi$.
\end{proof}

\begin{prop} \label{prop_const_curv}
Consider a GRW spacetime $(M,g) \cong (I\times F , -dt^2 + f^2 g^F)$,
$\dim M = m+1 \ge 2$. Set $U^i = (\del_t)^i$ and recall the notation of
Definition~\ref{curvature_cond}.

The $(F,g^F)$ factor is locally of constant curvature if and only if the
CCD tensor (see Definition~\ref{curvature_cond}) vanishes and the spatial scalar curvature is constant,
\begin{equation} \label{ccc} %\tag{CCC}
	\mathfrak{C}_{ijkh} = 0
	\quad \text{and} \quad
	U_{[i} \nabla_{j]} \zeta = 0 .
\end{equation}
If in addition the spatial scalar curvature or
equivalently the ZCD tensor (see Definition~\ref{curvature_cond}) also vanishes, $\zeta = 0$ or
\begin{equation} \label{zcc} %\tag{ZCC}
	\mathfrak{Z}_{ijkh} = 0 ,
\end{equation}
then $(F,g^F)$ is actually flat.
\end{prop}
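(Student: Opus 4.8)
The plan is to push everything down to the fibre $(F,g^F)$ by means of the GRW identities~\eqref{GRW_ZCDT} and~\eqref{GRW_CCDT}. Three elementary facts will be used throughout: the scale factor $f=f(t)>0$ depends only on the base coordinate; the pullback $\tilde{\mathcal R}^F=\pi_F^*\mathcal R^F$, and more generally any $\pi_F$-pullback, is independent of $t$; and, $\pi_F$ being a surjective submersion, $\pi_F^*$ is injective on covariant tensors, so for a covariant tensor $T$ on $F$ one has $f^2\pi_F^*T=0$ on $M$ iff $T=0$ on $F$.

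First I would prove the ``if and only if''. By~\eqref{GRW_CCDT} and injectivity of $\pi_F^*$, the equation $\mathfrak C_{ijkh}=0$ is equivalent to $R^F_{ijkh}=\frac{\mathcal R^F}{2m(m-1)}(g^F\odot g^F)_{ijkh}$ on $F$, i.e.\ to the fibre carrying the pointwise constant-curvature form of the Riemann tensor with a priori non-constant sectional curvature $\lambda:=\mathcal R^F/(m(m-1))$. It then remains to decide when $\lambda$, equivalently $\mathcal R^F$, equivalently $\zeta=\tilde{\mathcal R}^F/(m(m-1)f^2)$, is actually constant. In product-adapted coordinates $U=\pm\partial_t$, so $U_{[i}\nabla_{j]}\zeta=0$ forces $\partial_j\zeta=0$ for every fibre-direction index $j$; since $f=f(t)$ while $\tilde{\mathcal R}^F$ is a $\pi_F$-pullback, this is equivalent to $\mathcal R^F$ being locally constant on $F$. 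Combining the two conditions, \eqref{ccc} is equivalent to $R^F_{ijkh}=\tfrac{\alpha}{2}(g^F\odot g^F)_{ijkh}$ for a constant $\alpha$, which is precisely $(F,g^F)$ being locally of constant curvature (see~\eqref{const_curv}, Definition~\ref{def_FLRW}). The converse is immediate: constant curvature $\alpha$ on $F$ gives $\mathcal R^F=m(m-1)\alpha$, so~\eqref{GRW_CCDT} yields $\mathfrak C_{ijkh}=0$, while $\zeta=\alpha/f^2$ is constant along the fibres, giving $U_{[i}\nabla_{j]}\zeta=0$.

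For the second part I would assume~\eqref{ccc}, so that $(F,g^F)$ has constant sectional curvature $\alpha$ and $\zeta=\alpha/f^2$. Then~\eqref{GRW_ZCDT} gives $\mathfrak Z_{ijkh}=f^2\tilde R^F_{ijkh}=\tfrac{\alpha}{2}f^2(\tilde g^F\odot\tilde g^F)_{ijkh}$. For $m\ge 2$ the tensor $\tilde g^F\odot\tilde g^F$ is nowhere vanishing, so, since $f^2>0$, we obtain the chain $\mathfrak Z_{ijkh}=0 \iff \alpha=0 \iff \zeta=0 \iff R^F_{ijkh}=0$, i.e.\ $(F,g^F)$ is flat; the case $m=1$ is vacuous, since then the fibre is automatically flat and $\zeta,\mathfrak Z_{ijkh}$ vanish identically. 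One may also note that $\mathfrak Z_{ijkh}=0$ by itself already forces $\tilde R^F_{ijkh}=0$, and that $\zeta=0$ together with $\mathfrak C_{ijkh}=0$ already forces $\mathcal R^F=0$ and hence $R^F_{ijkh}=0$, so neither implication actually needs the full strength of the first part.

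I do not expect a real obstacle here: the genuine work --- rewriting the spacetime curvature in GRW form --- was already done in Section~\ref{sec_Riemann_GRW}, and what remains is bookkeeping with~\eqref{GRW_ZCDT} and~\eqref{GRW_CCDT}. The only point that warrants a little care is the correct reading of $U_{[i}\nabla_{j]}\zeta=0$, together with the observation that when $m=2$ the equation $\mathfrak C_{ijkh}=0$ is vacuous (every surface has the ``constant-curvature'' Riemann form), so that in that dimension the whole content of~\eqref{ccc} --- constancy of the fibre's Gauss curvature --- is carried by $U_{[i}\nabla_{j]}\zeta=0$; for $m\ge 3$ this last condition is instead redundant given $\mathfrak C_{ijkh}=0$, by Schur's lemma, but including it keeps the characterization uniform in the dimension.
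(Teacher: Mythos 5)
Your proof is correct and follows essentially the same route as the paper's: it reduces $\mathfrak{C}_{ijkh}=0$ and $U_{[i}\nabla_{j]}\zeta=0$ via the GRW identities~\eqref{GRW_ZCDT} and~\eqref{GRW_CCDT} to the constant-curvature form of $R^F_{ijkh}$ with constant $\mathcal{R}^F$, and then notes that $\zeta=0$ or $\mathfrak{Z}_{ijkh}=0$ kills $R^F_{ijkh}$ entirely. Your additional remarks (injectivity of $\pi_F^*$, the coordinate reading of $U_{[i}\nabla_{j]}\zeta=0$, the $m=1,2$ cases and the Schur's-lemma redundancy for $m\ge 3$) are accurate refinements that the paper leaves implicit.
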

\begin{proof}
From Equation~\eqref{GRW_CCDT}, $\mathfrak{C}_{ijkh} = 0$ is equivalent
to
\begin{equation}
	R^F_{ijkh}
	= \frac{1}{m(m-1)} \frac{\mathcal{R}^F}{2}
			(g^F\odot g^F)_{ijkh} .
\end{equation}
while $U_{[i} \nabla_{j]} \zeta = 0$ and~\eqref{GRW_ZCDT} imply that
$\mathcal{R}^F$ is a constant. Hence, $(F,g^F)$ is of constant
curvature. Furthermore, either of the conditions $\zeta = 0$ or
$\mathfrak{Z}_{ijkh} = 0$ implies that $R^F_{ijkh} = 0$ and hence that
$(F,g^F)$ is flat.
\end{proof}

\subsection{FLRW local isometry classes} \label{FLRW_isom_class}

Within the class of FLRW spacetimes, two metrics in the
form~\eqref{GRW_metric} with different $(\alpha,f)$ parameters may or
may not be isometric. Below, we give the results that allow us to
classify FLRW metrics into isometry classes.

The obvious form-preserving transformations, time translation,
reflection and rescaling, relate any FLRW metric to a 2-parameter family
of (locally) isometric metrics. We state this result directly for FLRW
spacetimes with scalar, which will come in useful later in
Section~\ref{infl_isom_class}. As mentioned in the introduction, we can
reduce to the case of no scalar field by setting the scalar field to
zero.
\begin{prop} \label{prop_isometries}
Consider two inflationary spacetimes $(M_i,g_i,\phi_i)$, $i=1,2$, with
corresponding spatial sectional curvature, warping function and scalar
field triples $(\alpha_i, f_i, \phi_i)$, $i=1,2$. If for every $x\in
M_1$ with $t_1=t(x)$ in the domain of $(f_1,\phi_1)$ there exists an
open interval $(t_1-\delta,t_1+\delta)$ still in the domain of
$(f_1,\phi_1)$, with $\delta>0$, and an interval $(t_2-\delta,
t_2+\delta)$ in the domain of $(f_2,\phi_2)$ such that
\begin{equation} \label{isom_structure}
	\left\{\begin{aligned}
		\alpha_1 &= A^2 \alpha_2 , \\
		f_1(t) &= A f_2(st+t_0) , \\
		\phi_1(t) &= \phi_2(st+t_0) ,
	\end{aligned}\right.
\end{equation}
for some constants $s\in \{+1,-1\}$, $A \ne 0$ and every $t\in
(t_1-\delta,t_1+\delta)$, then $(M_1,g_1,\phi_1)$ is locally isometric
to $(M_2,g_2,\phi_2)$ at $x\in M_1$.
\end{prop}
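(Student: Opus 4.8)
The plan is to construct the required local isometry $\chi$ explicitly from the data in~\eqref{isom_structure}, point by point. Fix $x\in M_1$; under the product identification $M_1\cong I_1\times F$ write $x=(t_1,p_1)$ with $t_1=t(x)$ and $p_1$ the fiber coordinate, and let $(t_1-\delta,t_1+\delta)$, $A$, $s$, $t_0$ be as provided by the hypothesis. Set $t_2:=st_1+t_0$; as $t$ runs over $(t_1-\delta,t_1+\delta)$, the quantity $st+t_0$ runs over $(t_2-\delta,t_2+\delta)$, which by hypothesis lies in the domain of $(f_2,\phi_2)$. The map $\chi$ will be built out of two ingredients: a time reparametrization $t\mapsto st+t_0$ in the base, and a rescaling isometry of the constant-curvature fibers.

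First I would produce the fiber map. Denote by $g^F_i$ the constant-curvature fiber metric of $M_i$, which by Definition~\ref{def_FLRW} is simply connected, complete and of constant sectional curvature $\alpha_i$. The rescaled metric $A^2 g^F_1$ then has constant sectional curvature $\alpha_1/A^2=\alpha_2$ (the sign matching that of $\alpha_2$, consistently with the common topological type $S^m$ or $\mathbb{R}^m$). By the local uniqueness of constant-curvature Riemannian metrics recalled in Section~\ref{sec_FLRW_geom} (following~\cite[Sec.~2.4]{Wolf2011}), there is an isometry $\psi$ from a neighborhood $V_1\ni p_1$ of $(F,A^2 g^F_1)$ onto a neighborhood of $\psi(p_1)$ in $(F,g^F_2)$; that is, $\psi^* g^F_2=A^2 g^F_1$ on $V_1$.

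Next I would assemble the spacetime map by $\chi(t,p):=(st+t_0,\psi(p))$ on $(t_1-\delta,t_1+\delta)\times V_1$, with target point $x_2:=\chi(x)=(t_2,\psi(p_1))$. Pulling back the metric, $\chi^*(-\pi_I^*dt^2)=-d(st+t_0)^2=-dt^2$ since $s=\pm1$, while
\[
	\chi^*\!\left(f_2(t)^2\,\pi_F^* g^F_2\right)
	= f_2(st+t_0)^2\,\psi^* g^F_2
	= \bigl(A f_2(st+t_0)\bigr)^2 g^F_1
	= f_1(t)^2\,g^F_1 ,
\]
using $f_1(t)=A f_2(st+t_0)$; hence $\chi^* g_2=g_1$. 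Likewise, since each $\phi_i$ depends only on $t$, one has $(\chi^*\phi_2)(t,p)=\phi_2(st+t_0)=\phi_1(t)=\phi_1(t,p)$, so $\chi^*\phi_2=\phi_1$. Thus $\chi$ witnesses that $(M_1,g_1,\phi_1)$ is locally isometric at $x$ to $(M_2,g_2,\phi_2)$ at $x_2$, and since $x\in M_1$ was arbitrary, Definition~\ref{def_loc_isom} yields the claimed local isometry.

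There is no serious obstacle here: the single nontrivial input is the local rigidity of constant-curvature Riemannian metrics, which furnishes $\psi$ and is already recorded in Section~\ref{sec_FLRW_geom}; everything else is the short algebraic check that $\chi^* g_2=g_1$ and $\chi^*\phi_2=\phi_1$. The only point demanding a little care is bookkeeping in the base: one must choose the target $t$-coordinate as $t_2=st_1+t_0$ and verify that the interval $(t_2-\delta,t_2+\delta)$ stays inside the domain of $(f_2,\phi_2)$, which is exactly what the hypothesis on the existence of the interval $(t_1-\delta,t_1+\delta)$ supplies.
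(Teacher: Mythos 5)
Your proposal is correct and follows essentially the same route as the paper, whose proof is just the one-line observation that time reflection, time translation, and the rescaling $f\mapsto Af$, $g^F\mapsto g^F/A^2$ each preserve the FLRW form up to local isometry; you have simply made the composite map $\chi(t,p)=(st+t_0,\psi(p))$ and the fiber isometry $\psi$ (from constant-curvature rigidity) explicit and verified $\chi^*g_2=g_1$, $\chi^*\phi_2=\phi_1$ directly.
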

\begin{proof}
The result follows from noting that an FLRW metric in standard form
$-dt^2 + f(t)^2 \tilde{g}^F$ is locally isometric to each of $-dt^2 +
f(-t)^2 \tilde{g}^F$, $-dt^2 + f(t+t_0)^2 \tilde{g}^F$ and to $-dt^2 +
(Af(t))^2 (\tilde{g}^F/A^2)$.
\end{proof}

We will now show that, under certain conditions, two FLRW metrics with
parameters $(\alpha_1,f_1)$ and $(\alpha_2,f_2)$ are locally isometric
\emph{if and only if} they belong to the same 2-parameter family as in
Proposition~\ref{prop_isometries}. To describe such a 2-parameter family
of $(\alpha,f)$ intrinsically, we will look for a differential equation
satisfied by every element of that family and only elements of that
family. Heuristically, we should look for either a second order equation
for $f$ or a first order equation for $f$ depending also on the
parameter $\alpha$, either of which will generically have a 2-parameter
general solution.

The following helpful lemma follows easily from standard ODE existence
and uniqueness theory~\cite{arnold}.

\begin{lem} \label{lem_isom_if}
Consider a smooth real function $G$ defined on an open interval $J$,
two nonzero real constants $\alpha_1$ and $\alpha_2$, and two nowhere
vanishing smooth real functions $f_1(t)$ and $f_2(t)$ defined
respectively on the open intervals $I_1$ and $I_2$.

\begin{enumerate}
\item
Suppose $G>0$ and that the pairs $(\alpha_1,f_1)$ and
$(\alpha_2,f_2)$ both satisfy the differential equation
\begin{equation}
	(f'/f)^2 = G(\alpha/f^2)
\end{equation}
and that there exist $t_1 \in I_1$ and $t_2 \in I_2$ such that
$\frac{\alpha_1}{f_1(t_1)^2} = \frac{\alpha_2}{f_2(t_2)^2} \in J$. Then
there exist constants $s\in \{+1,-1\}$, $t_0$, $A\ne 0$ and $\delta > 0$
such that $t_2 = st_1+t_0$, as well as
\begin{equation}
	\alpha_1 = A^2 \alpha_2
	\quad \text{and} \quad
	f_1(t) = A f_2(st+t_0)
\end{equation}
for every $t \in (t_1-\delta,t_1+\delta)$.

\item
Suppose that the functions $f_1$ and $f_2$ both satisfy the
differential equation
\begin{equation}
	f''/f = G\left((f'/f)^2\right)
\end{equation}
and that there exist $t_1 \in I_1$ and $t_2 \in I_2$ such that
$\frac{f'_1(t_1)}{f_1(t_1)} = \frac{f'_2(t_2)}{f_2(t_2)} \in J$. Then
there exist constants $s\in \{+1,-1\}$, $t_0$, $A\ne 0$ and $\delta > 0$
such that $t_2 = st_1+t_0$, as well as
\begin{equation}
	f_1(t) = A f_2(st+t_0)
\end{equation}
for every $t \in (t_1-\delta,t_1+\delta)$.
\end{enumerate}
\end{lem}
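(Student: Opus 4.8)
The plan is to reduce each of the two differential equations to an \emph{autonomous first-order} ODE in a single auxiliary variable, so that standard uniqueness for ODEs~\cite{arnold} forces two solutions with a matching initial value to coincide up to the reflection/translation symmetries $t\mapsto st+t_0$ of the reduced equation; one then integrates back up to recover the relation between $f_1$ and $f_2$.

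For part (a), set $u_i(t) := \alpha_i/f_i(t)^2$, which takes values in $J$ since $(\alpha_i,f_i)$ satisfies the equation. Because $(f_i'/f_i)^2 = G(u_i) > 0$ everywhere, $f_i'/f_i$ is continuous and nowhere zero, hence of a constant sign $s_i\in\{+1,-1\}$, and $f_i'/f_i = s_i\sqrt{G(u_i)}$. Differentiating $u_i = \alpha_i/f_i^2$ yields
\begin{equation*}
	u_i' = -2(f_i'/f_i)\, u_i = -2 s_i \sqrt{G(u_i)}\, u_i ,
\end{equation*}
an autonomous first-order ODE whose right-hand side is smooth on $J$ because $G>0$ there. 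Put $s := s_1 s_2$ and $t_0 := t_2 - s t_1$; a direct check shows that $t\mapsto u_2(st+t_0)$ solves the \emph{same} ODE as $u_1$ (the extra factor $s$ converts the sign $s_2$ appearing in $u_2$'s equation into $s_1$) and equals $u_1(t_1)$ at $t=t_1$, since $u_2(st_1+t_0) = u_2(t_2) = u_1(t_1)$. Uniqueness then gives $u_1(t) = u_2(st+t_0)$ on some interval $(t_1-\delta,t_1+\delta)$, and $t_2 = st_1+t_0$ by construction. Finally, $u_1(t_1)=u_2(t_2)$ is nonzero with sign equal both to $\sgn\alpha_1$ and to $\sgn\alpha_2$, so $\alpha_1/\alpha_2 > 0$; setting $A := \sqrt{\alpha_1/\alpha_2}$ one gets $\alpha_1 = A^2\alpha_2$ and $f_1(t)^2 = A^2 f_2(st+t_0)^2$. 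Since $f_1$ and $f_2(s\,\cdot+t_0)$ are continuous and nowhere vanishing their ratio has constant sign, so $f_1(t) = A f_2(st+t_0)$ after absorbing that sign into $A$.

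For part (b), set $w_i := f_i'/f_i$, so that
\begin{equation*}
	w_i' = f_i''/f_i - (f_i'/f_i)^2 = G(w_i^2) - w_i^2 ,
\end{equation*}
again an autonomous first-order ODE with smooth right-hand side on the relevant domain. Take $s := +1$ and $t_0 := t_2 - t_1$; then $t\mapsto w_2(t+t_0)$ solves the same equation and equals $w_1(t_1)$ at $t=t_1$, so uniqueness yields $w_1(t) = w_2(t+t_0)$ on some $(t_1-\delta,t_1+\delta)$. Since $w_i = (\log|f_i|)'$, integrating gives $|f_1(t)| = A'\,|f_2(t+t_0)|$ for a positive constant $A'$, and once more constancy of the sign of the nowhere-vanishing $f_i$ gives $f_1(t) = A f_2(st+t_0)$ with $s=1$ and a suitable $A\ne 0$.

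Both arguments are routine once the reductions are in place; the only points needing care are the sign bookkeeping — the constant sign of $f_i'/f_i$ in (a), which is exactly what makes the reflection $s=\pm1$ necessary and which relies on $G>0$, together with the positivity of $\alpha_1/\alpha_2$ — and the observation that ``satisfying the differential equation'' already guarantees the auxiliary variables stay in the domain on which the reduced ODEs are smooth. Only a local conclusion (some $\delta>0$) is available because the reduced autonomous ODEs need not admit global solutions, so $u_1$ and $u_2(s\,\cdot+t_0)$ (resp.\ $w_1$ and $w_2(\cdot+t_0)$) agree only on the connected component of $t_1$ in the intersection of their maximal domains.
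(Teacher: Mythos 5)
Your proof is correct, and it follows the route the paper intends: the paper offers no written proof, merely asserting that the lemma ``follows easily from standard ODE existence and uniqueness theory,'' and your reduction to the autonomous first-order equations $u' = -2s\sqrt{G(u)}\,u$ and $w' = G(w^2) - w^2$, followed by uniqueness and integration back to $f$, is exactly the omitted argument. The sign bookkeeping (constant sign of $f_i'/f_i$ forced by $G>0$, positivity of $\alpha_1/\alpha_2$, absorbing the residual sign into $A$) is handled correctly, so nothing further is needed.
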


We are finally in a position to define and classify all regular FLRW
spacetimes into families and to describe the parameters needed identify
an isometry class within each family.

\begin{lem} \label{lem_reg_FLRW}
Two regular FLRW spacetimes (those belonging to one of the families
identified in Definition~\ref{def_reg_FLRW}) are isometric to each other
(Definition~\ref{def_loc_isom}) if and only if they belong to the same
parametrized family and the corresponding parameters are identical.
\end{lem}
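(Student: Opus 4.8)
The plan is to prove both implications, treating the ``if'' direction family by family. For the two constant-curvature-type families $\CC^m_K$ and $\ESU^m_K$, each choice of parameters pins down a single triple $(m,\alpha,f)$ on $I=\mathbb{R}$, so the two spacetimes coincide and there is nothing to prove. For the spatially curved families $\CSC^m_{K,\Omega,J}$ and $\FLRW^m_{E,J}$ the defining relation rewrites as $(f'/f)^2 = G(\alpha/f^2)$ with, respectively, $G(u) = K + \Omega|u|^{(m+1)/2} - u$ and $G(u) = \kappa E(u) - u$, and the regularity hypotheses ($f'\ne 0$ in the first case, $\kappa E(u) > u$ in the second) force $G>0$ on $J$; since the two spacetimes share the anchor datum $J = (\alpha/f^2)(I)$, part~(a) of Lemma~\ref{lem_isom_if} applies near every base point and yields the form-preserving relation $(\alpha_1, f_1(t)) = (A^2\alpha_2, A f_2(st+t_0))$ of~\eqref{isom_structure}. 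For the spatially flat families $\CSC^{m,0}_{K,J}$ and $\FLRW^{m,0}_{P,J}$ the defining relation instead takes the form $f''/f = G((f'/f)^2)$ (with $G$ affine, respectively $G(u) = (1-\tfrac{m}{2})u - \kappa P(u)$), the shared anchor datum is $J = (f'^2/f^2)(I)$, and part~(b) of Lemma~\ref{lem_isom_if} applies; since $f'/f$ is nowhere vanishing, hence of constant sign, on each interval $I$, one may first need to compose with the time reflection $t\mapsto -t$ (a local isometry by Proposition~\ref{prop_isometries}) to make the signs of $f'/f$ at the matched points agree. In every case the construction produces, near an arbitrary point of $M_1$, a form-preserving identification with $M_2$, which Proposition~\ref{prop_isometries} promotes to a local isometry; running the argument at every point of $M_1$ and, symmetrically, of $M_2$ shows that the two spacetimes are locally isometric to each other.

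For the converse, assume $M_1$ and $M_2$ are regular FLRW spacetimes that are locally isometric to each other. First one identifies the common family from isometry-invariant conditions on the Riemann tensor and its first covariant derivative: the equation $R_{ijkh} = \tfrac{K}{2}(g\odot g)_{ijkh}$ singles out $\CC^m_K$, since by Lemmas~\ref{lem_esu}, \ref{lem_csc}, \ref{lem_flat_FLRW} and~\ref{lem_gen_FLRW} none of the other five families is of constant curvature; among those, $\mathcal{R}' = \mathcal{B}' = 0$ singles out $\ESU^m_K$ (Lemma~\ref{lem_esu}); the families $\CSC^{m,0}_{K,J}$ and $\CSC^m_{K,\Omega,J}$ are characterized by $\mathcal{R}' = 0$ with $\mathcal{B}' \ne 0$ (Lemma~\ref{lem_csc}), and $\FLRW^{m,0}_{P,J}$, $\FLRW^m_{E,J}$ by $(\nabla\mathcal{R})^2 < 0$, hence non-constant $\mathcal{R}$ (Lemmas~\ref{lem_flat_FLRW}, \ref{lem_gen_FLRW}); these conditions are mutually exclusive and, together with the first two, exhaust the regular FLRW spacetimes (Definition~\ref{def_reg_FLRW}). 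Within the $\CSC$ and $\FLRW$ groups, the spatially flat versus spatially curved alternative ($\alpha = 0$ versus $\alpha\ne 0$) is detected by the vanishing versus non-vanishing of the ZCD tensor $\mathfrak{Z}_{ijkh}$, equivalently of the spatial scalar curvature $\zeta$, computed with the canonical unit field $U = U_{\mathcal{R}}$ (in the $\FLRW$ families) or $U = U_{\mathcal{B}}$ (in the $\CSC$ families), which are well defined because $(\nabla\mathcal{R})^2 < 0$ respectively $(\nabla\mathcal{B})^2 < 0$ there and in FLRW form equal $\pm\partial_t$. In the $\ESU$ case $K\ne 0$ is automatically spatially curved and $U$ may instead be taken as the (up-to-sign unique) covariantly constant timelike unit field.

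It remains to recover, within a fixed family, the parameters from isometry invariants. For $\CC^m_K$ and $\ESU^m_K$ the lone parameter is $K = \mathcal{R}/(m(m+1))$, respectively $K = \mathcal{R}/(m(m-1))$. For the other four families $K$ is again read off from $\mathcal{R}$, and for $\CSC^m_{K,\Omega,J}$ the constant $\Omega$ is then determined by the Friedmann relation evaluated on the invariants $\zeta$ and $\xi^2$. The interval $J$ is the range over $M$ of the scalar field $\zeta$ in the spatially curved families and of $\xi^2$ in the spatially flat families; since $\xi^2$, $\bm{\eta}$ and $\zeta$ are even in $U$ and built covariantly from curvature and the canonical $U$, they are isometry-invariant scalar fields, so $J$ is an invariant. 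Finally the function parameters appear as graphs of invariant scalar fields: the subset $\{\,(\xi^2(x),\, -\tfrac{1}{\kappa}[\bm{\eta}(x) + \tfrac{m}{2}\xi^2(x)]) : x\in M\,\}$ of $\mathbb{R}^2$ is exactly the graph of $P$ over $J$ for $\FLRW^{m,0}_{P,J}$, and $\{\,(\zeta(x),\, \tfrac{1}{\kappa}(\xi^2(x)+\zeta(x))) : x\in M\,\}$ is the graph of $E$ for $\FLRW^m_{E,J}$; both sets are isometry invariants. Hence two regular FLRW spacetimes in the same family with distinct parameters necessarily differ in some isometry-invariant quantity and cannot be locally isometric, which completes the proof.

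The only step that is not routine bookkeeping with the curvature identities of Section~\ref{sec_Riemann} and the ODE rigidity of Lemma~\ref{lem_isom_if} is the availability, on each family, of a canonical timelike unit field $U$ that is isometry-covariant up to sign, since the invariance of the scalar fields $\xi^2$, $\bm{\eta}$, $\zeta$ and of the tensors $\mathfrak{Z}_{ijkh}$, $\mathfrak{C}_{ijkh}$ rests entirely on it. This is precisely what the normalized-gradient fields $U_{\mathcal{R}}$ and $U_{\mathcal{B}}$ supply in the $\FLRW$ and $\CSC$ cases, their applicability being exactly what the regularity inequalities $(\nabla\mathcal{R})^2 < 0$ and $(\nabla\mathcal{B})^2 < 0$ guarantee, while the covariantly constant field serves in the $\ESU$ case and the constant-curvature case needs no choice of $U$ at all.
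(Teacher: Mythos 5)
Your proposal is correct and follows essentially the same route as the paper's proof: uniqueness of representatives for $\CC^m_K$ and $\ESU^m_K$, the ODE rigidity of Lemma~\ref{lem_isom_if} combined with Proposition~\ref{prop_isometries} for the remaining families, and curvature scalars $\xi^2$, $\bm{\eta}$, $\zeta$ (built from the canonical $U_{\mathcal{R}}$ or $U_{\mathcal{B}}$) together with the ranges $J$ and the graph identities for $P$, $E$, $\Omega$ to separate families and parameters. Your extra care about matching the sign of $f'/f$ via a time reflection before invoking Lemma~\ref{lem_isom_if}(b), and your (correct) use of part~(a) for both spatially curved families, are refinements of, not departures from, the paper's argument.
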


\begin{proof}
Let us fix $m$, noting that two isometric spacetimes must have the same
dimension. To show that two spacetimes cannot be isometric, it is
sufficient to point out an identity or inequality that is satisfied by
curvature scalars or tensors on one spacetime but not on the other. With
that in mind, recall (in the notation of Theorem~\ref{thm_FLRW_class})
that for FLRW spacetimes, $\xi = f'/f$, $\bm{\eta} = f''/f - f'^2/f^2$
and $\zeta = \alpha/f^2$, which are all curvature scalars as long as
they are defined with respect to a vector field $U$ that is also defined
from pure, such as the choices $U = U_\mathcal{R}$ or $U_\mathcal{B}$.
To show that all the representatives of a family with identical
parameters are all isometric to each other, there will be two
possibilities to consider. Either the representative is unique, which is
the trivial case. Or, all representatives are selected by satisfying a
differential equation. By invoking Lemma~\ref{lem_isom_if}, we can be
sure that two solutions to such an equation (with all parameters fixed),
if they can be matched up at at least one point, are in fact locally
isometric around that point. If the domains of these solutions can also
be matched up, then it is clear that they are also globally isometric.

(a) For each $K$, there is a unique representative in $\CC^m_K$.
The scalar curvature $\mathcal{R} = m(m+1)K$ distinguishes the different
values of $K$.

(b) Again, for each $K\ne 0$, there is a unique representative in
$\ESU^m_K$. The scalar curvature $\mathcal{R} = m(m-1)K$ distinguishes
the different values of $K$. Comparing the formulas from
Section~\ref{sec_Riemann} and Proposition~\ref{prop_special_FLRW}(b),
the structure of the Ricci tensor $R_{ij}$ distinguish $\ESU^m_K$ from
any spacetime of constant curvature.

(c) The representatives of $\CSC^{m,0}_{K,J}$ satisfy an
equation like in Lemma~\ref{lem_isom_if}(b). The scalar curvature
$\mathcal{R} = m(m+1)K$ distinguishes the different values of $K$, and
setting $U = U_\mathcal{B}$ the range $J = \xi^2(I)$ distinguishes the
different intervals $J$. Also, from Lemma~\ref{lem_csc}, $(\nabla
\mathcal{B})^2 < 0$ distinguishes these spacetimes from those of parts
(a) and (b), where $\mathcal{B}'=0$.

(d) The representatives of the class $\CSC^m_{K,\Omega,J}$ satisfy an
equation like in Lemma~\ref{lem_isom_if}(a). The scalar curvature
$\mathcal{R} = m(m+1)K$ distinguishes the different values of $K$, and
setting $U = U_{\mathcal{B}}$, the constant $\kappa\Omega =
(\xi^2+\zeta-K) / |\zeta|^{\frac{m+1}{2}}$ (Lemma~\ref{lem_csc}) and
range $J = \zeta(I)$ distinguishes the different values of $\Omega$ and
$J$. Again, $(\nabla \mathcal{B})^2 < 0$ distinguishes these spacetimes
from those of parts (a) and (b), while $\zeta \ne 0$ distinguishes them
from those of part (c) where $\zeta = 0$.

(e) The representatives of the class $\FLRW^{m,0}_{P,J}$ satisfy an
equation like in Lemma~\ref{lem_isom_if}(b). Setting $U =
U_\mathcal{R}$, the identity $\bm{\eta} + \frac{m}{2}\xi^2 = -\kappa P(\xi^2)$ and the
range $J = \xi^2(I)$ distinguish different values of the $P$ and $J$
parameters. Also, combining the constraints on $P$ and
Lemma~\ref{lem_flat_FLRW}, $(\nabla \mathcal{R})^2 < 0$ distinguishes
these spacetimes from those of parts (a), (b), (c) and (d), where
$\mathcal{R}' = 0$.

(f) The representatives of the class $\FLRW^m_{E,J}$ satisfy an equation
like in Lemma~\ref{lem_isom_if}(b). Setting $U = U_\mathcal{R}$, the
identity $\xi^2+\zeta = \kappa E(\zeta)$ and the range $J = \zeta(I)$
distinguish different values of the $E$ and $J$ parameters. Again,
combining the constraints on $E$ and Lemma~\ref{lem_gen_FLRW}, $(\nabla
\mathcal{R})^2 < 0$ distinguishes these spacetimes from those of parts
(a), (b), (c) and (d), while $\zeta \ne 0$ distinguishes them from those
of part (e), where $\zeta = 0$.
\end{proof}

We are now finally in a position to prove our main result about IDEAL
characterizations of regular FLRW spacetimes.
\begin{proof}[Proof of Theorem~\ref{thm_FLRW_class}]
The goal is to prove that, for each of the cases listed in
Table~\ref{tab_FLRW_class}, a spacetime satisfies the listed equations
(and inequalities) if and only if it is locally isometric
(Definition~\ref{def_loc_isom}) to one of the regular FLRW spacetimes
listed in Definition~\ref{def_reg_FLRW}. In one direction (a regular
FLRW spacetime satisfies the corresponding conditions), this is
essentially the content of Lemma~\ref{lem_reg_FLRW}. It remains to show
the converse.

(a) The constant curvature case is standard
(Proposition~\ref{prop_special_FLRW}(a)).

(b) We have already proven the desired conclusion in the Einstein static
universe case in Proposition~\ref{prop_special_FLRW}(b).

(c,e) With the appropriate definition of the unit timelike vector field
$U$, according to Proposition~\ref{thm_Sanchez}, the equations
$\mathfrak{P}_{ij} = 0$ and $\mathfrak{D}_{ij} = 0$ are sufficient to
locally put the spacetime in GRW form~\eqref{GRW_metric}, while
according to Proposition~\ref{prop_const_curv} the equation
$\mathfrak{Z}_{ijkh} = 0$ implies that the spatial slices are flat and
hence the spacetime is locally FLRW. The remaining conditions place the
spacetime in the unique corresponding local regular FLRW isometry class,
as per Lemma~\ref{lem_reg_FLRW}(c,e).

(d,f) With the appropriate definition of the unit timelike vector field
$U$, according to Proposition~\ref{thm_GRW_concircular} and
Lemma~\ref{lem_concircular}, the equation $\nabla_i U_j-\frac{\nabla_i
\zeta}{2\zeta} U_j - \xi g_{ij} = 0$ is sufficient to locally put the
spacetime in GRW form~\eqref{GRW_metric} and show that $\zeta$ is
constant along the spatial slices, while according to
Proposition~\ref{prop_const_curv} the additional equation
$\mathfrak{C}_{ijkh} = 0$ implies that the spatial slices are of
constant curvature and hence the spacetime is locally FLRW. The
remaining conditions place the spacetime in the unique corresponding
local regular FLRW isometry class, as per Lemma~\ref{lem_reg_FLRW}(c,e).
\end{proof}

\subsection{Inflationary local isometry classes} \label{infl_isom_class}

Within the class of inflationary spacetimes $(M,g,\phi)$, two spacetimes
in the form~\eqref{GRW_metric} and with $\phi = \phi(t)$, with different
$(\alpha,f,\phi)$ parameters may or may not be isometric. Below, we give
the results that allow us to classify inflationary spacetimes into
isometry classes (Definition~\ref{def_loc_isom}).

Recall that Proposition~\ref{prop_isometries} gives a sufficient
condition for local isometry. We will now show that, under certain
conditions, two inflationary spacetimes with parameters
$(\alpha_i,f_i,\phi_i)$, $i=1,2$, are locally isometric \emph{if and
only if} they belong to the same $2$-parameter family as in
Proposition~\ref{prop_isometries}. As in Section~\ref{FLRW_isom_class},
we will look for an ODE system, jointly satisfied by any locally
isometric $(\alpha,f,\phi)$ triples, with a 2-parameter general
solution. The following helpful lemma, the analog of
Lemma~\ref{lem_isom_if}, again follows easily from standard ODE
existence and uniqueness theory~\cite{arnold}.

\begin{lem} \label{lem_isom_if_scalar}
Consider a smooth real function $V\colon J \to \mathbb{R}$ defined on an
open interval, two non-zero real constants $\alpha_i$, $i=1,2$, and two
pairs of smooth real functions $(f_i,\phi_i)$ defined on intervals
$I_i$, $i=1,2$, with either $f_i$ nowhere vanishing.

\begin{enumerate}
\item
Suppose that $\Pi,\Xi\colon J \to \mathbb{R}$ are smooth real functions
that satisfy the $\mathfrak{G}_V(\Pi,\Xi) = 0$, in the notation
of~\eqref{eq_pi_xi}. Suppose also that the triples
$(\alpha_i,f_i,\phi_i)$, $i=1,2$, both satisfy the system of
differential equations
\begin{equation}
\begin{aligned}
	\phi' &= \Pi(\phi) , \\
	\frac{f'}{f} &= \Xi(\phi) , \\
	\frac{\alpha}{f^2} &= \kappa \frac{\Pi^2(\phi) + V(\phi)}{m(m-1)}
		- \Xi^2(\phi) ,
\end{aligned}
\end{equation}
and that there exist $t_i \in I_i$, $i=1,2$, such that $\phi_1(t_1) =
\phi_2(t_2) \in J$ and $\frac{\alpha_1}{f_1(t_1)^2} =
\frac{\alpha_2}{f_2(t_2)^2}$. Then there exist constants $t_0$, $A\ne 0$
and $\delta > 0$ such that
\begin{equation}
	\alpha_1 = A^2 \alpha_2 , \quad
	f_1(t) = A f_2(t+t_0)
	\quad \text{and} \quad
	\phi_1(t) = \phi_2(t+t_0)
\end{equation}
for every $t\in (t_1-\delta,t_1+\delta)$.

\item
Suppose that $\Xi\colon J \to \mathbb{R}$ is a smooth real function.
Suppose also that the pairs $(f_i,\phi_i)$, $i=1,2$, both satisfy the
system of differential equations
\begin{equation}
\begin{aligned}
	\phi' &= -\frac{(m-1)}{\kappa} \del_\phi \Xi(\phi) , \\
	\frac{f'}{f} &= \Xi(\phi) ,
\end{aligned}
\end{equation}
and that there exist $t_i \in I_i$, $i=1,2$, such that $\phi_1(t_1) =
\phi_2(t_2) \in J$. Then there exist constants $t_0$, $A\ne 0$ and
$\delta > 0$ such that
\begin{equation}
	f_1(t) = A f_2(t+t_0)
	\quad \text{and} \quad
	\phi_1(t) = \phi_2(t+t_0)
\end{equation}
for every $t \in (t_1-\delta,t_1+\delta)$.
\end{enumerate}
\end{lem}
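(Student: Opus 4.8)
The plan is to reduce everything to the standard existence–uniqueness theorem for ODE systems, exactly in the spirit of Lemma~\ref{lem_isom_if}, but now tracking both $f$ and $\phi$ simultaneously. Consider part (a) first. The key observation is that the hypotheses say that both triples $(\alpha_i, f_i, \phi_i)$ satisfy a first-order autonomous system once we introduce the logarithmic variable $\ell = \ln f$ (so that $f'/f = \ell'$). Indeed, writing $\ell_i = \ln f_i$, the pair $(\ell_i, \phi_i)$ solves the autonomous system $\phi' = \Pi(\phi)$, $\ell' = \Xi(\phi)$, whose right-hand side is smooth on the strip $\mathbb{R} \times J$ and, crucially, does not involve $\alpha$. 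By the uniqueness part of the Picard–Lindelöf theorem, if I can arrange the initial conditions to match at a single point, the two solutions coincide (up to a time shift) on a common interval.

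The matching of initial data is where the extra hypothesis $\frac{\alpha_1}{f_1(t_1)^2} = \frac{\alpha_2}{f_2(t_2)^2}$ enters, together with $\phi_1(t_1) = \phi_2(t_2)$. First I would set $t_0 := t_2 - t_1$ and define $\hat\phi_2(t) := \phi_2(t + t_0)$, $\hat f_2(t) := f_2(t+t_0)$, so that $(\hat f_2, \hat\phi_2)$ still solves the same system and is now based at $t_1$. Then $\phi_1$ and $\hat\phi_2$ agree at $t_1$, and since they solve the same first-order scalar ODE $\phi' = \Pi(\phi)$, uniqueness gives $\phi_1 \equiv \hat\phi_2$ on a neighborhood $(t_1 - \delta, t_1 + \delta)$. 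Consequently $\ell_1' = \Xi(\phi_1) = \Xi(\hat\phi_2) = \hat\ell_2'$, so $\ell_1 - \hat\ell_2$ is constant on that interval; calling this constant $\ln A$ (with $A > 0$), we get $f_1 = A \hat f_2 = A f_2(\cdot + t_0)$. Finally, the third equation of the system determines $\alpha/f^2$ pointwise as a function of $\phi$, so $\frac{\alpha_1}{f_1^2} = \kappa\frac{\Pi^2(\phi_1) + V(\phi_1)}{m(m-1)} - \Xi^2(\phi_1) = \frac{\alpha_2}{\hat f_2^2}$ everywhere on the interval; combined with $f_1 = A\hat f_2$ this yields $\alpha_1 = A^2 \alpha_2$. (The hypothesis equating $\frac{\alpha_i}{f_i(t_i)^2}$ at the base points is then automatically consistent, and in fact redundant given the third equation — but it does no harm to assume it, and makes the matching statement self-contained. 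One should just note that the case $s = -1$ of Lemma~\ref{lem_isom_if} does not arise here because the $\mathfrak{G}_V$ system fixes the sign of $\pi = \phi'$ through the convention $\Pi < 0$ in the regular classes; for the bare lemma one simply does not claim it.)

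Part (b) is the $\alpha = 0$ degeneration and is strictly easier. Here the system in $(\ell, \phi)$ is $\phi' = -\frac{(m-1)}{\kappa}\del_\phi\Xi(\phi)$, $\ell' = \Xi(\phi)$, again autonomous and smooth on $\mathbb{R} \times J$, with no $\alpha$ dependence at all. The same argument applies verbatim: shift $t_2$ to $t_1$, use uniqueness for the scalar equation $\phi' = -\frac{(m-1)}{\kappa}\del_\phi\Xi(\phi)$ to conclude $\phi_1 \equiv \hat\phi_2$, then integrate $\ell_1' - \hat\ell_2' = \Xi(\phi_1) - \Xi(\hat\phi_2) = 0$ to get $f_1 = A f_2(\cdot + t_0)$. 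I do not expect any genuine obstacle: the only thing to be careful about is that the right-hand sides are only smooth where $\Xi$ (resp.\ $\Pi$, $V$) are defined, i.e.\ for $\phi(t) \in J$, so the conclusion is inherently local in $t$ — hence the $\delta$ — which is exactly what is stated. The mild subtlety worth a sentence in the write-up is verifying that the third (algebraic) equation in part (a) really is a \emph{consequence} of the differential system plus the matched initial value $\frac{\alpha_i}{f_i(t_i)^2}$, rather than an independent constraint; this follows because $\del_t\big(f^2[\kappa\frac{\pi^2+V}{m(m-1)} - \xi^2]\big) = 0$ along solutions, which is precisely identity~\eqref{eq_alpha_const} established earlier.
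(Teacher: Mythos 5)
Your proposal is correct and follows exactly the route the paper intends: the paper dispenses with this lemma by citing standard ODE existence and uniqueness theory (as the analogue of Lemma~\ref{lem_isom_if}), and your write-up simply makes that explicit by passing to the autonomous system in $(\ln f,\phi)$, matching initial data after a time shift, and recovering $A$ from the constant $\ln f_1-\ln f_2(\cdot+t_0)$ and $\alpha_1=A^2\alpha_2$ from the algebraic constraint. Your side remarks (redundancy of the matched $\alpha_i/f_i(t_i)^2$ hypothesis in part (a), absence of the reflection $s=-1$, and constraint propagation via~\eqref{eq_alpha_const}) are accurate and harmless.
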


We are finally in a position to define and classify all regular
inflationary spacetimes into families and to describe the parameters
needed to identify an isometry class within each family.

\begin{lem} \label{lem_reg_infl}
Two regular inflationary spacetimes (those belonging to one of the
families identified in Definition~\ref{def_reg_infl}) are isometric to
each other (Definition~\ref{def_loc_isom}) if and only if they belong to
the same parametrized family and the corresponding parameters are
identical.
\end{lem}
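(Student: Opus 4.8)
The plan is to mirror the proof of Lemma~\ref{lem_reg_FLRW}, adapting each step to the presence of the scalar field. First I would fix the dimension $m$, since isometric spacetimes necessarily have equal dimension. For the implication that two representatives of the same family with identical parameters are isometric, there are again only two situations. Either the family contains an essentially unique representative, up to the obvious form-preserving transformations of Proposition~\ref{prop_isometries}: this is the case for $\CC^m_K\CS_\Phi$ (where $f$ is determined by $K$ and $\phi$ is the constant $\Phi$) and for $\ESU^m_K\CES_{\rho,J}$ (where $f$ is constant and, with $U=U_\phi$, $\phi'$ is forced to be the fixed negative constant $-\sqrt{2\rho/m}$). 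Or the representatives are precisely the solutions of an ODE system of one of the two types treated in Lemma~\ref{lem_isom_if_scalar}; the matching hypothesis of that lemma is then met because the parameter intervals $J=\phi(I)$ (and, where present, $J'=(f'/f)(I)$) of the two representatives agree, so they can be matched at a common value. The lemma yields a local isometry around that value, which extends to a global one when the domains also match.

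For the converse, I would show case by case that the parameters of each family are recovered from diffeomorphism-invariant data: the scalar field $\phi$ itself, the curvature scalars $\xi$, $\bm{\eta}$ and $\zeta$ built from the intrinsically-chosen vector field ($U_\phi$, $U_\mathcal{R}$ or $U_\mathcal{B}$ as prescribed), and the potential $V(\phi)$, which is intrinsic because the field equations express $\kappa V(\phi)/(m-1)$ as the curvature scalar $\bm{\eta}+m\xi^2+(m-1)\zeta$. Thus in case (a) the scalar curvature $\mathcal{R}=m(m+1)K$ and the constant value $\phi\equiv\Phi$ fix $K$ (equivalently $\Lambda$) and $\Phi$; in (b), with $\nabla_iU_j=0$, the interval $J=\phi(I)$ and $\mathcal{R}=m(m-1)K=2\kappa\rho$ fix $J$ and $\rho$; in (c) ($\zeta\equiv 0$, $V\equiv 2\Lambda/\kappa$) the curvature scalar $\bm{\eta}+m\xi^2=2\Lambda/(m-1)$ fixes $\Lambda$ while $J=\phi(I)$ and $J'=\xi(I)$ fix the intervals; in (d) ($\zeta\not\equiv 0$, $V$ constant) the identities $\phi'=-\sqrt{\Omega}\,|\zeta|^{m/2}$ and $\xi^2+\zeta=(2\Lambda+\kappa\Omega|\zeta|^m)/(m(m-1))$ together with $J=\phi(I)$ and $J'=\xi(I)$ fix $\Lambda$, $\Omega$, $J$, $J'$; in (e) ($\zeta\equiv 0$, $V$ non-constant) the profile $\Xi$ is read off from $\xi=\Xi(\phi)$, $V$ from the field equations, and $J=\phi(I)$; and in (f) ($\zeta\not\equiv 0$, $V$ non-constant) $\Pi$ is read from $\phi'=\Pi(\phi)$, $\Xi$ from $\xi=\Xi(\phi)$, $V$ from the field equations, and $J=\phi(I)$.

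The six families are pairwise non-isometric for the following reasons: $(\nabla\phi)^2<0$ separates (a), the constant-scalar family, from all the others; $\xi\equiv 0$ (equivalently $\nabla_iU_j=0$ on a GRW background) isolates (b) among the remaining families, using Lemma~\ref{lem_ces}; constancy of the intrinsic function $V(\phi)$ separates the two $\MMS$ families (c), (d) from the two $\NKG$ families (e), (f); and $\zeta\equiv 0$ versus $\zeta\not\equiv 0$ separates (c) from (d) and (e) from (f). Within each family, non-identical parameters are excluded by the observations of the previous paragraph.

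I expect the main obstacle to be the careful bookkeeping needed to justify that $V(\phi)$, and the profiles $\Pi$ and $\Xi$, are genuinely determined \emph{as functions} (up to the residual constant-shift ambiguity $\phi\mapsto\phi-\phi_0$, which is itself realized by an isometry), rather than merely pointwise; this is exactly where the regularity inequalities of Definition~\ref{def_reg_infl} (such as $V'\ne 0$, $\xi\ne 0$, $\Pi<0$) enter, through Lemmas~\ref{lem_mms}, \ref{lem_gen_NLKG} and~\ref{lem_flat_NLKG}, and where one must also confirm that the overlap of the parameter intervals $J$ (and $J'$) supplies the matching point required by Lemma~\ref{lem_isom_if_scalar}.
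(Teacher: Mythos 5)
Your proposal follows essentially the same route as the paper's proof: fix $m$; use Proposition~\ref{prop_isometries} together with Lemma~\ref{lem_isom_if_scalar} (and Lemma~\ref{lem_isom_if}) to show that representatives of a family with identical parameters are isometric; and distinguish families and parameter values by curvature scalars $\xi$, $\bm{\eta}$, $\zeta$ built from the intrinsically chosen $U$ ($U_\phi$, $U_\mathcal{R}$, $U_\mathcal{B}$), together with $\phi$ itself and the intrinsically recovered $V$, exactly as in the paper's case-by-case analysis (a)--(f).

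One remark in your final paragraph is wrong and worth correcting, even though it does not damage the argument: the constant shift $\phi \mapsto \phi - \phi_0$ is \emph{not} realized by an isometry in the sense of Definition~\ref{def_loc_isom}, since an isometry of spacetimes with scalar must satisfy $\chi^*\phi_2 = \phi_1$ exactly, not up to a constant. Correspondingly there is no residual shift ambiguity in recovering $V$, $\Pi$, $\Xi$: because $\phi$ is itself invariant data, the value $V(\phi(x)) = \frac{(m-1)}{\kappa}\bigl(\bm{\eta} + m\xi^2 + (m-1)\zeta\bigr)(x)$ determines $V$ exactly as a function on $J = \phi(I)$, and likewise for $\Pi$ and $\Xi$. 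Indeed, if shifts were isometries, two spacetimes with shift-related (hence non-identical) parameters would be isometric, contradicting the ``only if'' direction of the very lemma you are proving; it is precisely the rigidity of $\phi$ under isometries that makes the statement true. The paper also makes explicit the sign degeneracy $\Xi \mapsto -\Xi$ (resp.\ $(\Pi,\Xi)\mapsto(-\Pi,-\Xi)$) of solutions of $\mathfrak{H}_V$ (resp.\ $\mathfrak{G}_V$) and how it is broken by the conventions $\tfrac{1}{\kappa}\del_u\Xi>0$, $\Pi<0$ tied to $U=U_\phi$; your appeal to those same inequalities covers this implicitly.
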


The following proofs are very much analogous to the proofs of
Lemma~\ref{lem_reg_FLRW} and Theorem~\ref{thm_FLRW_class}, but we will
write them in a mostly self-contained way.

\begin{proof}[Proof of Lemma~\ref{lem_reg_infl}]
Let us fix $m$, noting that two isometric spacetimes must have the same
dimension. To show that two spacetimes with scalar cannot be isometric,
it is sufficient to point out an identity or inequality that is
satisfied by curvature scalars or tensors, possibly together also with
scalars or tensors covariantly obtained from the scalar field, on one
spacetime but not on the other. With that in mind, recall (in the
notation of Theorems~\ref{thm_FLRW_class} and~\ref{thm_infl_class}), that
for inflationary spacetimes $\xi = f'/f$, $\bm{\eta} = f''/f - f'^2/f^2$
and $\zeta = \alpha/f^2$, which are all curvature scalars, as long as
they are defined with respect to a vector field $U$ that is also defined
from either pure curvature or from the scalar field, such as the choices
$U = U_\mathcal{R}$, $U_\mathcal{B}$ or $U_\phi$. To show that all the
representatives of a family with identical parameters are all isometric
to each other, there will be two possibilities to consider. Either the
representative is unique, which is the trivial case. Or, all
representatives are selected by satisfying a differential equation. By
invoking Lemmas~\ref{lem_isom_if_scalar} or~\ref{lem_isom_if}, we can be
sure that two solutions to such an equation (with all parameters fixed),
if they can be matched up at at least one point, are in fact locally
isometric around that point. If the domains of these solutions can also
be matched up, then it is clear that they are also globally isometric.

(a) For each $\Lambda$ (hence $K = \frac{2}{m(m-1)} \Lambda$) and
$\Phi$, there is a unique representative in $\CC^m_K \CS_\Phi$. The
scalar curvature $\mathcal{R} = m(m+1) K$ and the scalar field $\phi =
\Phi$ distinguish the different values of these parameters.

(b) For each $\rho>0$ (hence $K=\frac{2}{m(m-1)} \kappa \rho$) and
interval $J\subseteq \mathbb{R}$, there is a unique representative in
$\ESU^m_K \CES_{\rho,J}$. The scalar curvature $\mathcal{R} = m(m-1) K$
and the range $J = \phi(I)$ distinguish different values $\rho$ and $J$.
The condition $(\nabla\phi)^2 < 0$ distinguishes these spacetimes from
those in part (a), where $\phi' = 0$.

(c) The representatives of $\MMS^{m,0}_{\Lambda,J,J'}$ satisfy the
equations $f''/f + (m-1)f'^2/f^2 = \frac{2\Lambda}{(m-1)}$ which is like
in Lemma~\ref{lem_isom_if}(a), and
\begin{equation}
	\phi' = -\sqrt{\frac{1}{\kappa}
		\left(\frac{f'^2}{f^2}-\frac{2\Lambda}{m(m-1)}\right)} ,
\end{equation}
since by hypothesis $\phi' < 0$. Thus, the first equation shows that the
underlying Lorentzian spacetimes are isometric for identical $\Lambda$
and $J'$. The second equation shows, by applying once again standard ODE
existence and uniqueness theory, that the inflationary spacetimes are
also isometric (as spacetimes with scalar) for identical $J$. With the
choice $U = U_\phi$, the curvature scalars $\bm{\eta} + m\xi^2 =
\frac{2\Lambda}{(m-1)}$ and the ranges of $J = \phi(I)$, $J' = \xi(I)$
distinguishes different $\Lambda$, $J$ and $J'$. The implication that
$\xi=f'/f\ne 0$ and $\phi' \ne 0$ distinguish these spacetimes from
those of parts (a) and (b).

(d) The representatives of the class $\MMS^m_{\Lambda,\Omega,J}$ satisfy
an equation like in Lemma~\ref{lem_isom_if_scalar}(a), namely
\begin{equation}
	\phi' = -\sqrt{\Omega} \frac{|\alpha|^{\frac{m}{2}}}{f^m} ,
	\quad
	\frac{f'}{f} = \pm \sqrt{\frac{2\Lambda + \kappa \Omega
	|\alpha|^m/f^{2m}}{m(m-1)} - \frac{\alpha}{f^2}} ,
\end{equation}
where the $\pm$ sign is determined by whether $0 < J'$ or $J' < 0$.
With the choice $U = U_\phi$, the curvature scalars $\bm{\eta} + m\xi^2
= \frac{2\Lambda}{(m-1)}$, $|\zeta|^{-m} (\xi^2+\zeta) = \frac{\kappa
\Omega}{m(m-1)}$ and the range $J = \phi(I)$ distinguish different
$\Lambda$, $\Omega$ and $J$. The implication that $\xi=f'/f\ne 0$ and
$\phi' \ne 0$ distinguish these spacetimes from those of parts (a) and
(b), while $\zeta = \kappa\frac{\Pi^2(\phi)+V(\phi)}{m(m-1)}-\Xi^2(\phi)
\ne 0$ distinguishes them from those of part (c), where $\zeta = 0$.

(e) The representatives of the class $\NKG^{m,0}_{V,\Xi,J}$ satisfy an
equation like in Lemma~\ref{lem_isom_if_scalar}(b). With the choice $U =
U_\phi$, the identities $\phi' = -\frac{(m-1)}{\kappa} \del_\phi
\Xi(\phi)$, $\xi = \Xi(\phi)$ and the range $J = \phi(I)$ distinguish
different $\Xi$, and $J$. It is important to note that for any solution
of $\mathfrak{H}_V(\Xi)$, $-\Xi$ is also a solution that defines another
spacetime isometric to a given one via $t \mapsto -(t-t_0)$ for some
$t_0$. We have broken this degeneracy by the $\frac{1}{\kappa} \del_u
\Xi(u) > 0$ requirement (due to using $U_\phi$ and not $-U_\phi$), so
distinct $\Xi$ imply non-isometric spacetimes. The identity $\bm{\eta} +
m\xi^2 = \kappa\frac{V(\phi)}{(m-1)}$, with non-constant $V(\phi)$,
distinguishes these spacetimes from those in parts (a), (b), (c) and
(d), where the left-hand-side would have been constant.

(f) The representatives of the class $\NKG^m_{V,\Pi,\Xi,J}$ satisfy an
equation like in Lemma~\ref{lem_isom_if_scalar}(a). With the choice $U =
U_\phi$, the identities $\phi' = \Pi(\phi)$, $\xi = \Xi(\phi)$ and range
$J = \phi(I)$ distinguish different $\Pi$, $\Xi$ and $J$. It is
important to note that for any solution of $\mathfrak{G}_V(\Pi,\Xi)$,
$(-\Pi,-\Xi)$ is also a solution that defines another spacetime
isometric to a given one via $t \mapsto -(t-t_0)$ for some $t_0$. We
have broken this degeneracy by the $\Pi < 0$ requirement (due to using
$U_\phi$ and not $-U_\phi$), so distinct $\Xi$ imply non-isometric
spacetimes. The identity $\bm{\eta} + m\xi^2 =
\kappa\frac{V(\phi)}{(m-1)}$, with non-constant $V(\phi)$, distinguishes
these spacetimes from those in parts (a), (b), (c), and (d), where the
left-hand-side would have been constant, while $\zeta \ne 0$
distinguishes them from those of part (e), where $\zeta = 0$.
\end{proof}

We are now finally in a position to prove our main result about IDEAL
characterizations of regular inflationary spacetimes.

\begin{proof}[Proof of Theorem~\ref{thm_infl_class}]
The goal is to prove that, for each of the cases listed in
Table~\ref{tab_infl_class}, a spacetime satisfies the listed equations
(and inequalities) if and only if it is locally isometric
(Definition~\ref{def_loc_isom}) to one of the regular inflationary
spacetimes listed in Definition~\ref{def_reg_infl}. In one direction (a
regular inflationary spacetime satisfies the corresponding condition),
this is essentially the content of Lemma~\ref{lem_reg_infl}. It remains
to show the converse.

(a) When $\phi = \Phi$ is a constant, so is $V(\phi) = \frac{2}{\kappa}
\Lambda$, which we have parametrized for our convenience with $\Lambda$.
Then the Einstein-Klein-Gordon equations become the cosmological vacuum
equations $R_{ij} - \frac{1}{2}\mathcal{R} g_{ij} + \Lambda g_{ij} = 0$,
which under the FLRW hypotheses have only the constant curvature
solution.

(b) The existence of a timelike covariantly constant vector $U =
U_\phi$, $\nabla_i U_j = 0$, implies that the spacetime decomposes into
a direct sum, with one of the factors being of constant curvature, since
the CCD tensor $\mathfrak{C}_{ijkh} = 0$ (see Definition~\ref{curvature_cond}) vanishes and the spatial scalar
curvature $\zeta = \frac{2\kappa}{m(m-1)} \rho$ is constant
(Proposition~\ref{prop_const_curv}); see the proof of
Proposition~\ref{prop_special_FLRW}(b) for details. The conclusion, as
desired, is that the spacetime is an Einstein static universe and the
equation $\phi' = -\sqrt{2\rho/m}$ means that we can choose the time
coordinate to put $\phi(t)$ precisely into the form in
Lemma~\ref{lem_ces}.

(c,d) With the vector field $U = U_\phi$, according to
Proposition~\ref{thm_GRW_concircular} and Lemma~\ref{lem_concircular},
the equation $\nabla_i U_j - \frac{\nabla_i \phi'}{m\phi'} U_j - \xi
g_{ij} = 0$ is sufficient to locally put the spacetime into GRW
form~\eqref{GRW_metric} and show that $\phi'$ is constant along spatial
slices. In case (c), the vanishing of the ZCD tensor
$\mathfrak{Z}_{ijkh} = 0$ (see Definition~\ref{curvature_cond}) implies that the spatial slices are flat. In
case (d), the equation $\phi' = -\sqrt{\Omega} |\zeta|^{\frac{m}{2}}$
shows that $\zeta$ is also constant on spatial slices, and together with
the vanishing of the CCD tensor $\mathfrak{C}_{ijkh} = 0$ this implies
that the spatial slices are of constant curvature. In both cases we have
referred to Proposition~\ref{prop_const_curv}, and in both case we have
established that the spacetime is locally FLRW. Now, recalling the
identities $\xi = f'/f$, $\bm{\eta} = f''/f - f'^2/f^2$ and $\zeta =
\alpha/f^2$, the remaining conditions in each case clearly show that the
spacetime is locally isometric to the corresponding reference class in
Definition~\ref{def_reg_infl}(c) or (d).

(e,f) With the vector field $U = U_\phi$, according to
Proposition~\ref{thm_Sanchez}, the equations $\mathfrak{P}_{ij} = 0$ and
$\mathfrak{D}_{ij} = 0$ are sufficient to locally put the spacetime into
GRW form~\eqref{GRW_metric}. In case (e), the vanishing of the ZCD
tensor $\mathfrak{Z}_{ijkh} = 0$ implies that the spatial slices are
flat. In case (f), the equations $\phi' = \Pi(\phi)$, $\xi = \Xi(\phi)$
show that $\zeta = \kappa \frac{\phi'^2+V(\phi)}{m(m-1)} - \xi^2$ is
then constant along spatial slices (slices of constant $\phi$), and
together with the vanishing of the CCD tensor $\mathfrak{C}_{ijkh} = 0$
this implies that the spatial slices are of constant curvature. In both
cases we have referred to Proposition~\ref{prop_const_curv}, and in both
cases we have established that the spacetime is locally FLRW. Now,
recalling the identities $\xi=f'/f$, $\bm{\eta} = f''/f - f'^2/f^2$ and
$\zeta = \alpha/f^2$, the remaining conditions in each case clearly show
that the spacetime is locally isometric to the corresponding reference
class in Definition~\ref{def_reg_infl}(e) or (f).
\end{proof}

\paragraph{Acknowledgments.}
The authors thank M.~S\'anchez for some helpful discussions and
Y.~Urakawa for pointing out reference~\cite{skenderis-townsend}. Some of
the results presented here are based on the MSc thesis (Dept.\ of
Physics, Pavia, 2016) of GC~\cite{Canepa2016}. The work of CD was
supported by the University of Pavia. GC acknowledges partial support of
SNF Grant No. 200020\_172498/1. GC was also (partly) supported by the
NCCR SwissMAP, funded by the Swiss National Science Foundation, and by
the COST Action MP1405 QSPACE, supported by COST (European Cooperation
in Science and Technology). IK was partially supported by the ERC
Advanced Grant 669240 QUEST ``Quantum Algebraic Structures and Models''
at the University of Rome 2 (Tor Vergata). IK also thanks the University
of Z\"urich for its hospitality during the completion of part of this
work.

\printbibliography

\end{document}